\documentclass[12pt]{article}
\usepackage{amsmath, bm, physics, multirow, BOONDOX-cal,amsthm}
\usepackage{amssymb}
\usepackage{lscape}

\DeclareMathOperator*{\argmax}{\arg\!\max}

\usepackage{graphicx,psfrag,epsf}
\usepackage{enumerate}
\usepackage{natbib}
\bibliographystyle{agsm}

\usepackage{url} 
\usepackage{amsfonts}


\addtolength{\oddsidemargin}{-.5in}%
\addtolength{\evensidemargin}{-1in}%
\addtolength{\textwidth}{1in}%
\addtolength{\textheight}{1.7in}%
\addtolength{\topmargin}{-1in}%

\newtheorem{corollary}{Corollary}

\newtheorem{lemma}{Lemma}
\newtheorem{proposition}{Proposition}

\newtheorem{theorem}{Theorem}
\newtheorem{assumption}{Assumption}

\numberwithin{corollary}{section}
\numberwithin{definition}{section}
\numberwithin{equation}{section}
\numberwithin{lemma}{section}
\numberwithin{proposition}{section}
\numberwithin{remark}{section}
\numberwithin{theorem}{section}

\begin{document}

\def\spacingset#1{\renewcommand{\baselinestretch}%
{#1}\small\normalsize} \spacingset{1}


  \begin{center}
    {\Large \bf Time-Varying Multivariate Causal Processes}
    \bigskip
    
    $^{\ast}${\sc Jiti Gao} and $^{\ast}${\sc Bin Peng} and $^{\dag}${\sc Wei Biao Wu} and $^{\ast}${\sc Yayi Yan}
\medskip

    $^{\ast}$Department of Econometrics and Business Statistics,  Monash University,\\
    and $^{\dag}$Department of Statistics, University of Chicago
\end{center}
  \medskip

\begin{abstract}
In this paper,  we consider a wide class of time-varying multivariate causal processes which nests many classic and new examples as special cases. We first  prove the existence of a weakly dependent stationary approximation for our model which is the foundation to initiate the theoretical development. Afterwards, we consider the QMLE estimation approach, and provide both point-wise and simultaneous inferences on the coefficient functions. In addition, we demonstrate the theoretical findings through both simulated and real data examples. In particular, we show the empirical relevance of our study using an application to evaluate the conditional correlations between the stock markets of China and U.S. We find that the interdependence between the two stock markets is increasing over time.
\medskip

\noindent%
{\it Keywords:}  Local Linear Quasi-Maximum Likelihood Estimation; Multivariate Causal Process; Simultaneous Confidence Interval.
\smallskip

\noindent{\it JEL Classification:} C14, C32, G15.

\end{abstract}

\vfill

\newpage
\spacingset{1.9} 

\section{Introduction}\label{Sec1}

The family of vector autoregressive (VAR) models and the family of multivariate (G)ARCH models are among some of the most popular frameworks for modelling dynamic interactions of multiple variables. The VAR family usually captures the dynamic by imposing structures on the time series itself, while the (G)ARCH family imposes restrictions on the conditional second moments. We acknowledge the vast literature of both families, and have no intention to exhaust all relevant studies in this paper for the sake of space. We refer interested readers to \cite{stock2001vector} and \cite{bauwens2006multivariate} for excellent review on both families. 

Although both families have rich literature on their own, to the best of the authors' knowledge not many works have been done to bridge them. Among limited attempts (e.g., \citealp{ling2003asymptotic,bardet2009asymptotic}), most (if not all) of these studies rely on the stationarity assumption. While the stationarity assumption comes in handy when deriving asymptotic properties, it may not be very realistic in practice (\citealp{preuss2015detection,chen2021inference}). For example, economic and financial data always include different macro shocks, as a consequence the behaviour can be quite volatile; the climate data may contain certain time trend which recently has attracted lots of attention due to greenhouse emission; etc. Anyway, certain nonstationarity may always occur. 

To account for nonstationarity, locally stationary processes have received considerable attention since the seminal work of \cite{dahlhaus1996kullback}, \cite{dette2011measure}, \cite{zhang2012inference}, \cite{truquet2017parameter}, \cite{dahlhaus2019towards}, among others. In contrast to the unit root process, the locally stationary process nicely balances stationarity and nonstationarity by allowing for the simultaneous presence of both types of behaviours in one time series process. In a very recent paper, \cite{karmakar2021simultaneous} consider simultaneous inference for a general class of univariate $p$-Markov processes with time-varying coefficients, which covers several time-varying versions of the classical univariate models (e.g., AR, ARCH, AR-ARCH) as special cases. Despite its generality, their study still rules out the time-varying versions of some widely used models (e.g., ARMA, GARCH, ARMA-GARCH). Also, it is worth mentioning this line of research heavily focuses on univariate time series, which somewhat limits the popularity of locally stationary processes.

That said, it is reasonable to call for a framework which can marry the VAR family and the (G)ARCH family while allowing for nonstationarity. To provide a concrete example, consider a time-varying multivariate GARCH model, which can model the co-movements of financial returns. Detailed investigation on such a model can help answer research questions like (i). Is the volatility of a market leading the volatility of other markets? (ii) Whether the correlations between asset returns change over time? (iii). Are they increasing in the long run, perhaps because of the globalization of financial markets? These are of great practical importance for both investors and policymakers (\citealp{bauwens2006multivariate,diebold2009measuring}).

To allow for flexibility as much as possible from the modelling perspective, we consider a class of multivariate causal processes as follows:
\begin{eqnarray}\label{Eq2.1}
	\mathbf{x}_t =\left\{\begin{array}{ll}
	 \bm{\mu}\left(\mathbf{x}_{t-1},\mathbf{x}_{t-2},\ldots;\bm{\theta}(\tau_t)\right) + \mathbf{H}\left(\mathbf{x}_{t-1},\mathbf{x}_{t-2},\ldots;\bm{\theta}(\tau_t)\right)\bm{\varepsilon}_t, & \text{for}\quad t=1,\ldots, T\\
	\bm{\mu}\left(\mathbf{x}_{t-1},\mathbf{x}_{t-2},\ldots;\bm{\theta}(0)\right) + \mathbf{H}\left(\mathbf{x}_{t-1},\mathbf{x}_{t-2},\ldots;\bm{\theta}(0)\right)\bm{\varepsilon}_t & \text{for}\quad t\le 0
	\end{array}\right. ,
\end{eqnarray}
where $\tau_t= t/T$, $\bm{\mu}\left(\cdot\right)$ is an $m$-dimensional random vector, $\mathbf{H}\left(\cdot\right)$ is an $m\times m$-dimensional random matrix, $\bm{\theta}(\tau)$ is a $d\times 1$ time-varying parameter of interest with each element belonging to $C^3[0,1]$, and $\{\bm{\varepsilon}_t\}$ is a sequence of independent and identically distributed (i.i.d.) random vectors. Note that the value of $d$ usually depends on the value of $m$, and the connection becomes clear once a specific model is considered.  As far as we are concerned, both of $m$ and $d$ are fixed throughout the paper. Notably, both $\bm{\mu}(\cdot)$ and $ \mathbf{H}(\cdot)$ are known, and share the same unknown parameter $\bm{\theta}(\cdot)$. The setting for $t\le 0$ regulates the time series for the periods that we do not observe, which is commonly adopted when certain nonstationarity gets involved (e.g., \citealp{vogt2012nonparametric}). Essentially, it requires the initial time period does not have a diverging behaviour.

Before proceeding further, we provide two examples to briefly illustrate the rationality behind \eqref{Eq2.1}, and leave the detailed investigation on these examples to Section \ref{Sec2.4}.  We refer interested readers to  \cite{ling2003adaptive}, \cite{ling2003asymptotic} and \cite{bardet2009asymptotic} for extensive investigation on the parametric counterparts of these examples.

\noindent \textbf{Example 1}: Consider the time-varying VARMA($p,q$) model
\begin{equation}\label{Eq4.1}
\mathbf{x}_t = \mathbf{a}(\tau_t) + \sum_{j=1}^{p}\mathbf{A}_j(\tau_t)\mathbf{x}_{t-j} + \bm{\eta}_t + \sum_{j=1}^{q}\mathbf{B}_j(\tau_t)\bm{\eta}_{t-j} \quad \text{with}\quad \bm{\eta}_t = \bm{\omega}(\tau_t)\bm{\varepsilon}_t.
\end{equation}
It is not hard to show that \eqref{Eq4.1} admits a presentation in the form of \eqref{Eq2.1}, and 
\begin{eqnarray}\label{Eq4.1.1}
\bm{\theta}(\tau)=\mathrm{vec}(\mathbf{a}(\tau),\mathbf{A}_1(\tau),\ldots,\mathbf{A}_p(\tau),\mathbf{B}_1(\tau),\ldots,\mathbf{B}_q(\tau),\bm{\Omega}(\tau)),
\end{eqnarray}
where $\bm{\Omega}(\cdot):=\bm{\omega}(\cdot)\bm{\omega}^\top(\cdot)$.

\smallskip

\noindent \textbf{Example 2}: Consider the time-varying multivariate GARCH($p,q$) model 
\begin{eqnarray}\label{Eq4.4}
\mathbf{x}_t &=& \mathrm{diag} (h_{1,t}^{1/2},\ldots,h_{m,t}^{1/2} ) \bm{\eta}_t,\nonumber \\
\mathbf{h}_t  &=& \mathbf{c}_0(\tau_t) + \sum_{j=1}^{p} \mathbf{C}_j(\tau_t) \left(\mathbf{x}_{t-j}\odot\mathbf{x}_{t-j}\right) + \sum_{j=1}^{q} \mathbf{D}_j(\tau_t)\mathbf{h}_{t-j},
\end{eqnarray}
where $h_{j,t}$ stands for the $j^{th}$ element of  $\mathbf{h}_t$, and $\bm{\eta}_t = \bm{\Omega}^{1/2}(\tau_t) \bm{\varepsilon}_t$. The model \eqref{Eq4.4} generalizes the models of \cite{bollerslev1990modelling} and \cite{jeantheau1998strong}. Similar to Example 1, we show that \eqref{Eq4.4} admits a representation in the form of \eqref{Eq2.1}, and 
\begin{eqnarray}\label{Eq4.4.1}
\bm{\theta}(\tau)=\mathrm{vec}(\mathbf{c}_0(\tau),\mathbf{C}_1(\tau),\ldots,\mathbf{C}_p(\tau),\mathbf{D}_1(\tau),\ldots,\mathbf{D}_q(\tau),\bm{\Omega}(\tau)).
\end{eqnarray}

\smallskip

In view of the development of Example 1 and Example 2 in Section \ref{Sec2.4}, one may further show the time-varying counterparts of the parametric models mentioned in \cite{bardet2009asymptotic} are also covered by \eqref{Eq2.1}. To this end, we argue that \eqref{Eq2.1} does not only allows for nonstationarity and conditional heteroskedasticity, but also provides sufficient flexibility to cover many well adopted models in the literature.

In this paper, our contributions are in the following four-fold: (1). we consider a wide class of time-varying multivariate causal processes which nests many classic and new examples as special cases; (2). we prove the existence of a weakly dependent stationary approximation for the model \eqref{Eq2.1} at any given time of interest (i.e., $\forall\tau\in[0,1]$), which is the foundation in order to establish asymptotic properties associated with the model; (3). we establish the estimation theory, and provide both point-wise and simultaneous inferences on the coefficient functions of which both are important for practical works (\citealp{zhou2010simultaneous}); (4). we demonstrate the theoretical findings through both simulated and real data examples.

The paper is organized as follows. Section \ref{Sec2} presents the theoretical findings associated with the stationary approximation, estimation and  inferences.  In Section \ref{Sec3}, we conduct extensive simulation studies to examine the theoretical findings, and further investigate the time-varying conditional correlations between the Chinese and U.S. Stock market. Section \ref{Sec4} concludes. Due to space limit, we give the proofs of the main results to the online appendices of the paper.

Before proceeding further, it is convenient to introduce some notation: the symbol $|\cdot|$ denotes the Euclidean norm of a vector or the spectral norm for a matrix; $\|\mathbf{v}\|_q:=\left(E|\mathbf{v}|^q\right)^{1/q}$ and $\|\cdot\|:=\|\cdot\|_2$ for short; $\otimes$ denotes the Kronecker product; $\odot$ denotes the Hadamard product; $\mathbf{I}_a$ stands for an $a\times a$ identity matrix; $\mathbf{0}_{a\times b}$ stands for an $a\times b$ matrix of zeros, and we write $\mathbf{0}_a$ for short when $a=b$; for a function $g(w)$, let $g^{(j)}(w)$ be the $j^{th}$ derivative of $g(w)$, where $j\ge 0$ and $g^{(0)}(w) \equiv g(w)$; $K_h(\cdot) =K(\cdot/h)/h$, where $K(\cdot)$ and $h$ stand for a nonparametric kernel function and a bandwidth respectively; let $\tilde{c}_k =\int_{-1}^{1} u^k K(u) \mathrm{d}u$ and $\tilde{v}_k= \int_{-1}^{1} u^k K^2(u) \mathrm{d}u$ for integer $k\ge 0$; $\mathrm{diag}(\mathbf{a})$ is a diagonal matrix with the vector $\mathbf{a}$ on its main diagonal, while $\mathrm{diag}(\mathbf{A})$ creates a vector from the diagonal of matrix $\mathbf{A}$; finally, let $\to_P$ and $\to_D$ denote convergence in probability and convergence in distribution, respectively.

\section{Estimation and Asymptotics}\label{Sec2}

In this section, we first prove the existence of a weakly dependent stationary approximation for the model \eqref{Eq2.1} in Section \ref{Sec2.1}; we then provide the estimation approach using the local linear quasi-maximum-likelihood estimation and establish the asymptotic properties of the proposed estimator in Section \ref{Sec2.2}; Section \ref{Sec2.3} provides results on both point-wise and simultaneous inferences; Section \ref{Sec2.4} gives some detailed examples to justify the usefulness of our study.

\subsection{Stationary Approximation}\label{Sec2.1}

To study \eqref{Eq2.1}, the first challenge lies in the fact that the model may not be stationary. Therefore, for $\forall\tau\in [0,1]$, we initial our analysis by finding a stationary approximation for each $\mathbf{x}_t$ with $t\ge 1$. By doing so, we are able to measure the weak dependence of $\{\mathbf{x}_t \}$ using the nonlinear system theory in \cite{wu2005nonlinear}, which then provides us a framework to derive the asymptotic properties  accordingly.

To be clear on the dependence measure, consider an example in which $\mathbf{e}_t$ is a stationary process, and admits a causal representation $\mathbf{e}_t = \mathbf{J}(\bm{\varepsilon}_t,\bm{\varepsilon}_{t-1},\ldots)$ with $\mathbf{J}(\cdot)$ being a measurable function. See \cite{tong1990non} for discussion on nonlinear time series of this kind. For $k\geq 0$, we define the following dependence measure:
\begin{eqnarray}\label{DefDM}
\delta_{r}^{\mathbf{e}}(k)=\left\|\mathbf{J}(\bm{\varepsilon}_k,\bm{\varepsilon}_{k-1},\ldots\bm{\varepsilon}_1,\bm{\varepsilon}_{0},\bm{\varepsilon}_{-1},\ldots)-\mathbf{J}(\bm{\varepsilon}_k,\ldots, \bm{\varepsilon}_1,\bm{\varepsilon}_{0}^*,\bm{\varepsilon}_{-1},\ldots)\right\|_r,
\end{eqnarray}
where $\bm{\varepsilon}_0^*$ is an independent copy of $\{\bm{\varepsilon}_j\}$. Being able to measure the time series dependence such as \eqref{DefDM} is the starting point for time series analyses. 

We now introduce some basic assumptions.
\begin{assumption} \label{Ass1}
\item 
\begin{enumerate}
\item $\{\bm{\varepsilon}_t\}$ is a sequence of i.i.d. random vectors with $E(\bm{\varepsilon}_1)=\mathbf{0}$, $E (\bm{\varepsilon}_1\bm{\varepsilon}_1^\top )=\mathbf{I}_m$, and  $ \|\bm{\varepsilon}_1 \|_r<\infty$ for some $r\ge 2$.

\item For $\forall\mathbf{z},\mathbf{z}^\prime \in (\mathbb{R}^m)^\infty$ and $\forall\bm\vartheta\in \mathbb{R}^d$, there exist nonnegative sequences $\{\alpha_j(\bm{\bm\vartheta})\}_{j=1}^{\infty}$ and $\{\beta_j(\bm{\bm\vartheta})\}_{j=1}^{\infty}$ such that 
\begin{eqnarray*}
&&|\bm{\mu}(\mathbf{z};\bm{\bm\vartheta})-\bm{\mu}(\mathbf{z}^\prime;\bm{\bm\vartheta})| \leq \sum_{j=1}^{\infty}\alpha_j(\bm{\bm\vartheta}) |\mathbf{z}_j-\mathbf{z}_j^\prime|,\\
&&|\mathbf{H}(\mathbf{z};\bm{\bm\vartheta})-\mathbf{H}(\mathbf{z}^\prime;\bm{\bm\vartheta})| \leq \sum_{j=1}^{\infty}\beta_j(\bm{\bm\vartheta})|\mathbf{z}_j-\mathbf{z}_j^\prime|,
\end{eqnarray*}
where $\mathbf{z}_j$ and $\mathbf{z}_j^\prime$ are the $j^{th}$ columns of $\mathbf{z}$ and $\mathbf{z}^\prime$ respectively.

\item For $\forall \tau\in [0,1]$, $\bm{\theta}(\tau)$ lies in the interior of $\bm{\Theta}_r$, where 
\begin{eqnarray*}
\bm{\Theta}_r:=\left\{\bm{\vartheta}\in \bm{\Theta} \mid \sum_{j=1}^{\infty}\alpha_j(\bm{\vartheta}) + \left\|\bm{\varepsilon}_1\right\|_r \sum_{j=1}^{\infty}\beta_j(\bm{\vartheta}) < 1 \right\}
\end{eqnarray*}
and $\bm{\Theta}$ is a compact set of $\mathbb{R}^d$. 
\end{enumerate}
\end{assumption}

Assumption \ref{Ass1}.1 is standard when studying dynamic time series model (\citealp{lutkepohl2005new}). In Assumption \ref{Ass1}.2, $\bm\vartheta$ is a generic $d\times 1$ vector, and has the same length as $\bm \theta(\cdot)$. This assumption imposes Lipschitz-type conditions on $\bm{\mu}(\cdot)$ and $\mathbf{H}(\cdot)$, which are rather minor, and can be easily fulfilled by a variety of models such as those mentioned in Section \ref{Sec1}. See Propositions \ref{proposition4.1}-\ref{proposition4.2} below for details. Assumption \ref{Ass1}.3 does not only guarantee a stationary approximation for each $\mathbf{x}_t$, but also ensures the approximated process has some proper moments. Similar conditions have also been adopted in \cite{bardet2009asymptotic}. 

With these conditions in hand, we present the following proposition which facilitates the development in what follows.

\begin{proposition}\label{proposition2.1}
Let Assumption \ref{Ass1} hold. For any $\tau \in [0,1]$, there exists a stationary process
\begin{eqnarray*}
	\widetilde{\mathbf{x}}_t(\tau) = \bm{\mu}\left(\widetilde{\mathbf{x}}_{t-1}(\tau),\widetilde{\mathbf{x}}_{t-2}(\tau),\ldots;\bm{\theta}(\tau)\right) + \mathbf{H}\left(\widetilde{\mathbf{x}}_{t-1}(\tau),\widetilde{\mathbf{x}}_{t-2}(\tau),\ldots;\bm{\theta}(\tau)\right)\bm{\varepsilon}_t
\end{eqnarray*}
such that
	\begin{enumerate}
		\item $\sup_{\tau\in[0,1]}\left\|\widetilde{\mathbf{x}}_t(\tau)\right\|_r < \infty$,
		
		\item $\delta_{r}^{\widetilde{\mathbf{x}}(\tau)}(k) \leq O(1)\inf_{1\leq p \leq k}\{\rho(\tau)^{k/p} + \sum_{j=p+1}^{\infty}\left[\alpha_j(\bm{\theta}(\tau)) + \beta_j(\bm{\theta}(\tau))\right]\} \to 0$ as $k\to \infty$, 
	\end{enumerate}
	where $\rho(\tau) := \sum_{j=1}^{\infty}\alpha_j(\bm{\theta}(\tau)) + \left\|\bm{\varepsilon}_1\right\|_r \sum_{j=1}^{\infty}\beta_j(\bm{\theta}(\tau))$.
\end{proposition}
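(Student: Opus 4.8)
The plan is to construct $\widetilde{\mathbf{x}}_t(\tau)$ as the unique stationary solution of the frozen-coefficient recursion by a Picard/contraction argument, and then to read off both conclusions from the contraction constant $\rho(\tau)$. Fix $\tau$ and write $\mathbf{F}(\mathbf{z};\bm{\varepsilon}) := \bm{\mu}(\mathbf{z};\bm{\theta}(\tau)) + \mathbf{H}(\mathbf{z};\bm{\theta}(\tau))\bm{\varepsilon}$. I would start from the constant process $\widetilde{\mathbf{x}}_t^{(0)}(\tau)\equiv\mathbf{0}$ and define iterates $\widetilde{\mathbf{x}}_t^{(n)}(\tau) = \mathbf{F}((\widetilde{\mathbf{x}}_{t-1}^{(n-1)}(\tau),\widetilde{\mathbf{x}}_{t-2}^{(n-1)}(\tau),\ldots);\bm{\varepsilon}_t)$, each a measurable function of finitely many innovations and hence stationary. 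The central estimate is that, for two input processes $\{\mathbf{y}_t\}$, $\{\mathbf{y}_t'\}$ adapted to the past innovations, Assumption \ref{Ass1}.2 together with the independence of $\bm{\varepsilon}_t$ from the past (which lets the $L^r$-norm of the $\mathbf{H}$-term factorise into $\|\bm{\varepsilon}_1\|_r$ times the increment) yields
\begin{eqnarray*}
&&\sup_t\|\mathbf{F}((\mathbf{y}_{t-1},\ldots);\bm{\varepsilon}_t) - \mathbf{F}((\mathbf{y}_{t-1}',\ldots);\bm{\varepsilon}_t)\|_r \\
&&\qquad \le \rho(\tau)\,\sup_t\|\mathbf{y}_t-\mathbf{y}_t'\|_r,
\end{eqnarray*}
with $\rho(\tau)<1$ by Assumption \ref{Ass1}.3. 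Thus the iterates are Cauchy in the $\sup_t\|\cdot\|_r$ metric, their limit $\widetilde{\mathbf{x}}_t(\tau)$ is stationary, solves the recursion, admits a causal representation $\widetilde{\mathbf{x}}_t(\tau) = \mathbf{J}(\bm{\varepsilon}_t,\bm{\varepsilon}_{t-1},\ldots)$, and is the unique such solution.

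For Part 1, I would apply the Lipschitz bounds of Assumption \ref{Ass1}.2 with base point $\mathbf{z}'=\mathbf{0}$ inside the recursion; using stationarity (so $\|\widetilde{\mathbf{x}}_{t-j}(\tau)\|_r$ is constant in $j$) and the factorisation of the $\mathbf{H}$-term gives $\|\widetilde{\mathbf{x}}_0(\tau)\|_r \le [\,|\bm{\mu}(\mathbf{0};\bm{\theta}(\tau))| + \|\bm{\varepsilon}_1\|_r|\mathbf{H}(\mathbf{0};\bm{\theta}(\tau))|\,]/(1-\rho(\tau))$. The supremum over $\tau$ is finite because $\bm{\theta}(\cdot)\in C^3[0,1]$ is continuous on the compact $[0,1]$ with values in the interior of $\bm{\Theta}_r$, so $\sup_\tau\rho(\tau)<1$ while the numerator stays bounded.

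For Part 2, I would couple $\widetilde{\mathbf{x}}_t(\tau)$ with the process $\widetilde{\mathbf{x}}_t^*(\tau)$ obtained by replacing $\bm{\varepsilon}_0$ with its independent copy $\bm{\varepsilon}_0^*$, and set $\Delta_k := \|\widetilde{\mathbf{x}}_k(\tau)-\widetilde{\mathbf{x}}_k^*(\tau)\|_r$, so that $\delta_r^{\widetilde{\mathbf{x}}(\tau)}(k)=\Delta_k$. Here $\Delta_k=0$ for $k<0$, $\Delta_0$ is finite by the moment bound of Part 1, and for $k\ge1$ the same Lipschitz-plus-independence computation gives the renewal-type inequality $\Delta_k \le \sum_{j\ge1} a_j\,\Delta_{k-j}$ with $a_j := \alpha_j(\bm{\theta}(\tau)) + \|\bm{\varepsilon}_1\|_r\beta_j(\bm{\theta}(\tau))$ and $\sum_j a_j=\rho(\tau)$; the recursion together with $\rho(\tau)<1$ then gives $\sup_k\Delta_k<\infty$. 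To convert this into the stated bound I fix $p$ and split the lags at $p$: the far lags contribute at most $(\sup_k\Delta_k)\sum_{j>p}a_j \le O(1)\sum_{j>p}[\alpha_j(\bm{\theta}(\tau))+\beta_j(\bm{\theta}(\tau))]$ (using $\|\bm{\varepsilon}_1\|_r\ge1$, which holds since $E(\bm{\varepsilon}_1\bm{\varepsilon}_1^\top)=\mathbf{I}_m$ forces $\|\bm{\varepsilon}_1\|_2\ge1$), while for the near lags I run the inequality on the block maxima $m_n := \max_{(n-1)p<i\le np}\Delta_i$: since a near-lag term reaches back at most $p$ indices, $m_n\le\rho(\tau)\max(m_{n-1},m_n)+O(1)\sum_{j>p}a_j$, which telescopes to $m_n\le O(1)[\rho(\tau)^n + \sum_{j>p}a_j]$. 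As $k\le np$ forces $n\ge k/p$, this produces the factor $\rho(\tau)^{k/p}$; optimising over $1\le p\le k$ gives the claimed bound, and choosing, e.g., $p_k=\lfloor\sqrt{k}\rfloor$ shows it vanishes as $k\to\infty$.

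The routine part is the contraction estimate, which is a direct consequence of Assumption \ref{Ass1}.2 and the independence structure. The main obstacle is the passage from the renewal inequality $\Delta_k\le\sum_j a_j\Delta_{k-j}$ to the sharp two-term bound $\rho(\tau)^{k/p}+\sum_{j>p}[\alpha_j+\beta_j]$: the block-maximum recursion must be arranged so that the geometric rate $\rho(\tau)^{k/p}$ and the truncation remainder separate cleanly, and one must simultaneously keep every constant uniform in $\tau$, which is precisely where continuity of $\bm{\theta}(\cdot)$ and compactness of $[0,1]$ do the work.
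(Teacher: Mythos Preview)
Your argument is correct, but the route differs from the paper's in a way worth noting. For existence (Part~1), the paper builds a family of auxiliary $p$-Markov processes $\widetilde{\mathbf{x}}_{p,t}(\tau)$ obtained by truncating the memory of $\bm\mu,\mathbf H$ at lag $p$, shows each is stationary with $\sup_p\|\widetilde{\mathbf{x}}_{p,t}(\tau)\|_r<\infty$, and then proves $\{\widetilde{\mathbf{x}}_{p,t}(\tau)\}_p$ is Cauchy in $L^r$. Your Picard iteration is a different truncation (in iteration depth rather than in lag); both yield the same fixed point. One small slip: your iterates $\widetilde{\mathbf{x}}_t^{(n)}(\tau)$ are \emph{not} functions of finitely many innovations once $n\ge 2$, since $\widetilde{\mathbf{x}}_{t-j}^{(1)}(\tau)$ depends on $\bm\varepsilon_{t-j}$ for every $j\ge 1$; stationarity still holds because each iterate is a fixed shift-covariant functional of the i.i.d.\ sequence $\{\bm\varepsilon_s\}$, so this does not affect the argument.

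For Part~2 the difference is more substantive. The paper bounds $\delta_r^{\widetilde{\mathbf{x}}(\tau)}(k)$ by a triangle inequality through the $p$-Markov approximant: it shows $\|\widetilde{\mathbf{x}}_{p,k}(\tau)-\widetilde{\mathbf{x}}_{p,k}^*(\tau)\|_r=O(\rho(\tau)^{k/p})$ via a clean recursion $v_t\le\rho(\tau)v_{t-p}$ (no tail term, since the $p$-Markov map has only $p$ lags), and then pays the tail $\sum_{j>p}(\alpha_j+\beta_j)$ when passing from $\widetilde{\mathbf{x}}_{p,t}$ back to $\widetilde{\mathbf{x}}_t$. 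You instead work directly on the coupled infinite-memory process, obtain the full renewal inequality $\Delta_k\le\sum_{j\ge 1}a_j\Delta_{k-j}$, and extract both the geometric part and the tail simultaneously through the block-maximum recursion $m_n\le\rho(\tau)\max(m_{n-1},m_n)+C_p$. Your route is more self-contained (no auxiliary process is needed), while the paper's route isolates the finite-memory contraction cleanly and makes the role of the $p$-Markov approximation explicit; either way the constants are uniform in $\tau$ for the reasons you state.
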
	

It is worth mentioning that for a univariate $p$-Markov process
\begin{eqnarray*}
	\widetilde{x}_{p,t}(\tau) = \mu\left(\widetilde{x}_{t-1}(\tau),\ldots,\widetilde{x}_{t-p}(\tau);\bm{\theta}(\tau)\right) + H\left(\widetilde{x}_{t-1}(\tau),\ldots,\widetilde{x}_{t-p}(\tau);\bm{\theta}(\tau)\right)\varepsilon_t,
\end{eqnarray*}
\cite{karmakar2021simultaneous} show that there exists $0 < \rho < 1$ such that $\sup_{\tau \in[0,1]}\delta_{r}^{\widetilde{x}_p(\tau)}(k) =O(\rho^k)$ based on the development of \cite{wu2004limit}. From a methodological viewpoint, we give a set of new proofs which allow us to measure the dependence of multivariate causal processes with infinity memory. The term $\sum_{j=p+1}^{\infty}\left[\alpha_j(\bm{\theta}(\tau)) + \beta_j(\bm{\theta}(\tau))\right]$ in the second result of Proposition \ref{proposition2.1} arises due to the infinity memory structure of $\widetilde{\mathbf{x}}_t(\tau)$. Thus, the dependence $\delta_{r}^{\widetilde{\mathbf{x}}(\tau)}(k)$ relies on the choice of $p$ and the decay rates of the coefficients $\alpha_j(\bm{\theta}(\tau))$ and $\beta_j(\bm{\theta}(\tau))$. 

To ensure $\widetilde{\mathbf{x}}_t(\tau) $ can approximate $\mathbf{x}_t$ reasonably well, we impose more structure below.

\begin{assumption}\label{Ass2}
	\item
	\begin{enumerate}
		\item There exists a nonnegative sequence $\{\chi_j\}$ with $\sum_{j=1}^\infty\chi_j < \infty$ such that for $\forall\mathbf{z}\in (\mathbb{R}^m)^\infty$ and $\forall\bm{\vartheta},\bm{\vartheta}^\prime\in\bm{\Theta}_r$
\begin{eqnarray*}
&&|\bm{\mu}(\mathbf{z};\bm{\vartheta})-\bm{\mu}(\mathbf{z};\bm{\vartheta}^\prime)|+ |\mathbf{H}(\mathbf{z};\bm{\vartheta})-\mathbf{H}(\mathbf{z};\bm{\vartheta}^\prime)| \leq |\bm{\vartheta}-\bm{\vartheta}^\prime| \sum_{j=1}^{\infty}\chi_j|\mathbf{z}_j| .
\end{eqnarray*}
		
		\item Let $\sup_{\tau \in [0,1]} \alpha_j(\bm{\theta}(\tau)) = O(j^{-(2+s)})$ and $\sup_{\tau \in [0,1]} \beta_j(\bm{\theta}(\tau)) =  O(j^{-(2+s)})$ for some $s > 0$.
	\end{enumerate}
\end{assumption}

Assumption \ref{Ass2}.1 imposes another Lipschitz-type condition with respect to the parameter space. Assumption \ref{Ass2}.2 further restricts the decay rates of $\alpha_j(\bm{\theta}(\tau))$ and $\beta_j(\bm{\theta}(\tau))$.

Using Assumptions \ref{Ass1}--\ref{Ass2}, we can measure the distance between $\widetilde{\mathbf{x}}_t(\tau) $ and $\mathbf{x}_t$ as follows.

\begin{proposition}\label{proposition2.2}
	Suppose Assumptions \ref{Ass1}--\ref{Ass2} hold. Then
	\begin{enumerate}
		\item  $\left\|\widetilde{\mathbf{x}}_1(\tau)-\widetilde{\mathbf{x}}_1(\tau^\prime)\right\|_r = O(|\tau - \tau^\prime|)$ for $\forall\tau,\tau^\prime \in [0,1]$,
		\item  $\max_{t\ge 1}\left\|\mathbf{x}_t-\widetilde{\mathbf{x}}_t(\tau_t)\right\|_r = O(T^{-1})$.
	\end{enumerate}
\end{proposition}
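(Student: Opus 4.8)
Part 1 and Part 2 are both driven by the same contraction mechanism that underlies Proposition \ref{proposition2.1}: Assumption \ref{Ass1}.3 makes the ``effective slope'' $\rho(\tau)=\sum_{j\ge1}\alpha_j(\bm\theta(\tau))+\|\bm\varepsilon_1\|_r\sum_{j\ge1}\beta_j(\bm\theta(\tau))$ strictly below one, and since $\bm\theta$ is continuous on the compact interval $[0,1]$ while the series converge uniformly under Assumption \ref{Ass2}.2, one first records that $\bar\rho:=\sup_{\tau\in[0,1]}\rho(\tau)<1$. Every bound below is controlled by the gap $1-\bar\rho$. I would couple the relevant processes through the common innovation sequence $\{\bm\varepsilon_t\}$, exploiting that $\bm\varepsilon_t$ is independent of everything generated up to time $t-1$, so that for any matrix $\mathbf G$ measurable with respect to the past, $\|\mathbf G\bm\varepsilon_t\|_r\le\|\mathbf G\|_r\,\|\bm\varepsilon_1\|_r$. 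I also take for granted the preliminary bound $\sup_t\|\mathbf x_t\|_r<\infty$ (established exactly as Proposition \ref{proposition2.1}.1), which makes the quantities below finite before any division.

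For Part 1, set $\mathbf d_t=\widetilde{\mathbf x}_t(\tau)-\widetilde{\mathbf x}_t(\tau')$; since both processes are stationary and built from the same $\{\bm\varepsilon_t\}$, $\mathbf d_t$ is stationary and $D:=\|\mathbf d_t\|_r$ is independent of $t$. I would subtract the two defining recursions and insert the intermediate term $\bm\mu(\widetilde{\mathbf x}_{t-1}(\tau'),\ldots;\bm\theta(\tau))$ (and likewise for $\mathbf H$), so each increment splits into a \emph{state} part controlled by Assumption \ref{Ass1}.2 and a \emph{parameter} part controlled by Assumption \ref{Ass2}.1. Taking $\|\cdot\|_r$, using the independence bound for the $\mathbf H\bm\varepsilon_t$ piece, and bounding $\|\widetilde{\mathbf x}_{t-j}(\tau')\|_r$ by $\sup_\tau\|\widetilde{\mathbf x}_1(\tau)\|_r<\infty$, the state part contributes $\rho(\tau)\,D$ while the parameter part contributes $C\,|\bm\theta(\tau)-\bm\theta(\tau')|\sum_j\chi_j$ with $\sum_j\chi_j<\infty$. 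This gives $D\le\bar\rho\,D+C'|\bm\theta(\tau)-\bm\theta(\tau')|$, hence $D\le(1-\bar\rho)^{-1}C'|\bm\theta(\tau)-\bm\theta(\tau')|$; since each coordinate of $\bm\theta$ lies in $C^3[0,1]$ it is Lipschitz, so $|\bm\theta(\tau)-\bm\theta(\tau')|=O(|\tau-\tau'|)$ and the claim follows.

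For Part 2, write $\bm\Delta_t=\mathbf x_t-\widetilde{\mathbf x}_t(\tau_t)$ and note that the definition of $\mathbf x_t$ for $t\le0$ makes $\mathbf x_s=\widetilde{\mathbf x}_s(0)$ there, so the recursion closes at the origin. Subtracting the two recursions (both evaluated at the common parameter $\bm\theta(\tau_t)$) and applying Assumption \ref{Ass1}.2 reduces $\|\bm\Delta_t\|_r$ to a weighted sum of $\|\mathbf x_{t-j}-\widetilde{\mathbf x}_{t-j}(\tau_t)\|_r$. For each lag I insert the frozen process at its own time point: when $t-j\ge1$, $\|\mathbf x_{t-j}-\widetilde{\mathbf x}_{t-j}(\tau_t)\|_r\le\|\bm\Delta_{t-j}\|_r+\|\widetilde{\mathbf x}_{t-j}(\tau_{t-j})-\widetilde{\mathbf x}_{t-j}(\tau_t)\|_r$, and when $t-j\le0$ I bound directly against $\widetilde{\mathbf x}_{t-j}(0)$ (no $\bm\Delta_{t-j}$ term appears). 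By Part 1 and stationarity each cross term is $O(|\tau_{t-j}-\tau_t|)=O(j/T)$ uniformly, the boundary case included since $t-j\le0$ forces $j\ge t$ and hence $\tau_t\le j/T$. Writing $M_t=\|\bm\Delta_t\|_r$ and $\bar M=\sup_{t\ge1}M_t$, this yields
\[
M_t\le\rho(\tau_t)\,\bar M+\frac{C}{T}\sum_{j=1}^{\infty}j\,[\alpha_j(\bm\theta(\tau_t))+\|\bm\varepsilon_1\|_r\beta_j(\bm\theta(\tau_t))],
\]
so taking the supremum over $t\ge1$ and solving the contraction gives $\bar M\le(1-\bar\rho)^{-1}(C/T)\sum_j j(\alpha_j+\|\bm\varepsilon_1\|_r\beta_j)=O(T^{-1})$, provided the displayed series is finite uniformly in $\tau$.

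That last proviso is where the work concentrates and is the step I expect to be the main obstacle: the time shift $|\tau_{t-j}-\tau_t|=j/T$ injects an extra factor of $j$ into the coefficients, so one needs $\sum_j j\,\alpha_j(\bm\theta(\tau))$ and $\sum_j j\,\beta_j(\bm\theta(\tau))$ to be finite and uniformly bounded in $\tau$. This is exactly what the decay rate $\sup_\tau\alpha_j(\bm\theta(\tau))=O(j^{-(2+s)})$ and $\sup_\tau\beta_j(\bm\theta(\tau))=O(j^{-(2+s)})$ in Assumption \ref{Ass2}.2 delivers, since $\sum_j j\cdot j^{-(2+s)}=\sum_j j^{-(1+s)}<\infty$; the exponent $2$ supplies precisely the one extra power of $j$ beyond plain summability that the nonstationary coupling demands. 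The only other care needed is the a priori finiteness of $\bar\rho<1$ and of $\sup_t\|\mathbf x_t\|_r$, both of which follow from the compactness of $\bm\Theta$ together with the uniform convergence of the defining series.
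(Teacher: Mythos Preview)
Your proposal is correct and follows essentially the same route as the paper: both parts subtract the two recursions, split into a ``state'' piece controlled by Assumption \ref{Ass1}.2 and a ``parameter'' piece controlled by Assumption \ref{Ass2}.1, and then solve the resulting contraction inequality using $\bar\rho<1$; for Part 2 the paper likewise inserts $\widetilde{\mathbf x}_{t-j}(\tau_{t-j}\vee 0)$, uses Part 1 for the cross term, and invokes $\sum_j j(\alpha_j+\|\bm\varepsilon_1\|_r\beta_j)<\infty$ from Assumption \ref{Ass2}.2. Your explicit supremum step $\bar M\le\bar\rho\,\bar M+O(T^{-1})$ is exactly the contraction the paper is (more tersely) using in its final display.
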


We can consider Proposition \ref{proposition2.2} as the stochastic version of the H\"older continuity. Having established the stationary approximation in Proposition \ref{proposition2.2}, we move on to  investigate the estimation theory in the next subsection.

\subsection{Estimation}\label{Sec2.2}

We point out a few facts to facilitate the setup of the likelihood function. First, let $\mathbf{z}_{t} =(\mathbf{x}_t,\mathbf{x}_{t-1},\ldots)$ include all the information of $\mathbf{x}_t$ up to the time period $t$. However, in practice, our observation on $\mathbf{x}_t$ only starting from $t=1$, so we have to work with the truncated version of $\mathbf{z}_{t}$ for each $t\ge 1$:
\begin{eqnarray}
\mathbf{z}_{t}^c =(\mathbf{x}_t,\ldots, \mathbf{x}_1, \mathbf{0},\ldots).
\end{eqnarray}
Second, we note that when $\tau_t$ is sufficiently close to $\tau$, 
\begin{eqnarray}
\bm{\theta}(\tau_t) \approx \bm{\theta}(\tau ) +h\bm{\theta}^{(1)}(\tau) \cdot \frac{\tau_t-\tau}{h}.
\end{eqnarray}
Therefore, we are able to parametrize $\bm{\theta}(\cdot)$, and consider the maximum-likelihood estimation for each given $\tau$. Finally, since $\bm{\varepsilon}_t$ may not be normally distributed, we consider the local linear quasi-maximum-likelihood estimation (QMLE) method. 

Thus, our likelihood function is specified as follows:
\begin{equation}\label{Eq2.3}
	\mathcal{L}_\tau(\bm{\eta}_1,\bm{\eta}_2)=\frac{1}{T}\sum_{t=1}^{T}\mathcal{l}(\mathbf{x}_t,\mathbf{z}_{t-1}^c;\bm{\eta}_1+\bm{\eta}_2\cdot(\tau_t-\tau)/h)K_h(\tau_t-\tau),
\end{equation}
where 
\begin{eqnarray*}
	\mathcal{l}(\mathbf{x}_t,\mathbf{z}_{t-1}^c;\bm\vartheta) &=&- \frac{1}{2}(\mathbf{x}_t - \bm{\mu}(\mathbf{z}_{t-1}^c;\bm\vartheta))^\top\left(\mathbf{H}(\mathbf{z}_{t-1}^c;\bm\vartheta)\mathbf{H}(\mathbf{z}_{t-1}^c;\bm\vartheta)^\top\right)^{-1}(\mathbf{x}_t - \bm{\mu}(\mathbf{z}_{t-1}^c;\bm\vartheta))\\ 
	&&-\frac{1}{2}\log\det\left(\mathbf{H}(\mathbf{z}_{t-1}^c;\bm\vartheta)\mathbf{H}(\mathbf{z}_{t-1}^c;\bm\vartheta)^\top\right).
\end{eqnarray*}
Accordingly, for $\forall \tau$, $(\bm{\theta}(\tau),h\bm{\theta}^{(1)}(\tau))$ is estimated by
\begin{equation}\label{Eq2.4}
(\widehat{\bm{\theta}} (\tau),\widehat{\bm{\theta}}^\star(\tau)) = \argmax_{(\bm{\eta}_1,\bm{\eta}_2)\in\mathbf{E}_T(r)}\mathcal{L}_\tau (\bm{\eta}_1,\bm{\eta}_2),
\end{equation}
where $\mathbf{E}_T(r) = \bm{\Theta}_r\times(h\cdot\bm{\Theta}^{(1)})$ and $\bm{\Theta}^{(1)}$ is a compact set.

\medskip

We impose more structures in order to derive the asymptotic distribution.

\begin{assumption} \label{Ass3}
	\item
	\begin{enumerate}
		\item $\inf_{\bm{\vartheta}\in\bm{\Theta}_r, \mathbf{z}\in (\mathbb{R}^m)^\infty}\lambda_{\min}\left(\mathbf{H}(\mathbf{z};\bm{\vartheta})\mathbf{H}(\mathbf{z};\bm{\vartheta})^\top\right) \geq \underline{c}$ for some $\underline{c}>0$.
		
		\item For any $\bm{\vartheta}\in\bm{\Theta}_r$, $\bm{\mu}(\widetilde{\mathbf{z}}_t(\tau);\bm{\theta}(\tau)) =\bm{\mu}(\widetilde{\mathbf{z}}_t(\tau);\bm{\vartheta}) $ and $\mathbf{H}(\widetilde{\mathbf{z}}_t(\tau);\bm{\theta}(\tau)) =\mathbf{H}(\widetilde{\mathbf{z}}_t(\tau);\bm{\vartheta})$ a.s. imply $\bm{\vartheta} = \bm{\theta}(\tau)$ for some $t$, where $\widetilde{\mathbf{z}}_t(\tau)= \left(\widetilde{\mathbf{x}}_{t}(\tau),\widetilde{\mathbf{x}}_{t-1}(\tau),\ldots\right)$.
	\end{enumerate}
\end{assumption}

\begin{assumption} \label{Ass4}
	\item
	
	\begin{enumerate}
	\item $\bm{\mu}(\cdot;\bm\vartheta)$ and $\mathbf{H}(\cdot;\bm\vartheta)$ are twice continuously differentiable with respect to $\bm{\vartheta}$. 
	
	\item There exists a nonnegative sequence $\{\chi_j\}_{j=1}^{\infty}$ with $\chi_j = O(j^{-(2+s)})$ and some $s > 0$ such that for any $\mathbf{z},\mathbf{z}^\prime\in (\mathbb{R}^m)^\infty$ and any $\bm{\vartheta},\bm{\vartheta}^\prime\in\bm{\Theta}_r$:
\begin{eqnarray*}
&&|\gradient_{\bm{\vartheta}}^k\bm{\mu}(\mathbf{z};\bm{\vartheta})-\gradient_{\bm{\vartheta}}^k\bm{\mu}(\mathbf{z};\bm{\vartheta}^\prime)|+|\gradient_{\bm{\vartheta}}^k\mathbf{H}(\mathbf{z};\bm{\vartheta})-\gradient_{\bm{\vartheta}}^k\mathbf{H}(\mathbf{z};\bm{\vartheta}^\prime)| \leq |\bm{\vartheta}-\bm{\vartheta}^\prime| \sum_{j=1}^{\infty}\chi_j|\mathbf{z}_j|,\\
&&|\gradient_{\bm{\vartheta}}^k\bm{\mu}(\mathbf{z};\bm{\vartheta})-\gradient_{\bm{\vartheta}}^k\bm{\mu}(\mathbf{z}^\prime;\bm{\vartheta})|+|\gradient_{\bm{\vartheta}}^k\mathbf{H}(\mathbf{z};\bm{\vartheta})-\gradient_{\bm{\vartheta}}^k\mathbf{H}(\mathbf{z}^\prime;\bm{\vartheta})| \leq  \sum_{j=1}^{\infty}\chi_j|\mathbf{z}_j-\mathbf{z}_j^\prime|,
\end{eqnarray*}
	where $\gradient_{\bm{\vartheta}}= \left(\frac{\partial}{\partial \vartheta_1},\ldots,\frac{\partial}{\partial \vartheta_d}\right)^\top$, and $k=1,2$.
	\end{enumerate}

\end{assumption}

\begin{assumption}\label{Ass5}
	Let $K(\cdot)$ be a symmetric and positive kernel function defined on $[-1,1]$ with $\int_{-1}^{1}K(u)\mathrm{d}u = 1$. Moreover, $K(\cdot)$ is Lipschitz continuous on $[-1,1]$. As $(T,h) \to (\infty, 0)$, $Th\to \infty$.
\end{assumption}

Assumption \ref{Ass3}.1 ensures the positive definiteness of the covariance matrix of the likelihood function, and is widely adopted when studying the multivariate time series (e.g., page 2736 of \citealp{bardet2009asymptotic}). In fact, the validity of this assumption is easy to justify in view of \eqref{Eq4.32} and \eqref{Eq4.52} for Example 1 and Example 2 below. Assumption \ref{Ass3}.2 imposes an standard identification condition in the literature of M-estimation (e.g., Proposition 3.4 of \citealp{jeantheau1998strong}). It is noteworthy that the current form of Assumption \ref{Ass3} accommodates the flexibility of the model \eqref{Eq2.1}, which is in fact unnecessary if we have a detailed model in practice. See Section \ref{Sec2.4} for example. 

Assumption \ref{Ass4} imposes the Lipschitz-type conditions on the first and second order derivatives of $\bm{\mu}(\cdot)$ and $\mathbf{H}(\cdot)$ to ensure the smoothness of their functional components.

Assumption \ref{Ass5} is a set of regular conditions on the kernel function and the bandwidth. 

\medskip

With these conditions in hand, we summarize the first theorem of this paper below.

\begin{theorem}\label{Thm3.1}
Suppose Assumptions \ref{Ass1}--\ref{Ass5} hold with $r\geq 6$.  

(1). If $Th^7 \to 0$, then for any $\tau \in (0,1)$
\begin{eqnarray*}
\sqrt{Th}\left(\widehat{\bm{\theta}}(\tau) - \bm{\theta}(\tau) - \frac{1}{2}h^2\widetilde{c}_2\bm{\theta}^{(2)}(\tau)\right) \to_D N\left(\mathbf{0}, \widetilde{v}_0 \bm{\Sigma}_{\bm{\theta}}(\tau) \right),
\end{eqnarray*}
where $\bm{\Sigma}_{\bm{\theta}}(\tau) = \bm{\Sigma}^{-1}(\tau)\bm{\Omega}(\tau)\bm{\Sigma}^{-1}(\tau)$, $\bm{\Sigma}(\tau)=E\left(\gradient_{\bm{\vartheta}}^2\mathcal{l}(\widetilde{\mathbf{x}}_1(\tau),\widetilde{\mathbf{z}}_{0}(\tau);\bm{\theta}(\tau)) \right)$ and 
\begin{eqnarray*}
\bm{\Omega}(\tau)=E\left(\gradient_{\bm{\vartheta}}\mathcal{l}(\widetilde{\mathbf{x}}_1(\tau),\widetilde{\mathbf{z}}_0(\tau);\bm{\theta}(\tau))\cdot \gradient_{\bm{\vartheta}}\mathcal{l}(\widetilde{\mathbf{x}}_1(\tau),\widetilde{\mathbf{z}}_0(\tau);\bm{\theta}(\tau))^\top\right).
\end{eqnarray*}

(2). In addition,  if  $\bm{\varepsilon}_t$ is normally distributed, we have $\bm{\Omega}(\tau)=-\bm{\Sigma}(\tau)$ and thus $\bm{\Sigma}_{\bm{\theta}}(\tau) = \bm{\Omega}^{-1}(\tau)$.
\end{theorem}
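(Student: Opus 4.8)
The plan is to treat Theorem \ref{Thm3.1} as a localized quasi-maximum-likelihood (M-estimation) problem, with extra care for the infinite-memory, nonstationary and truncated structure of the data. Stack the local parameters as $\bm{\beta} = (\bm{\eta}_1^\top,\bm{\eta}_2^\top)^\top$ with population target $\bm{\beta}_0 = (\bm{\theta}(\tau)^\top, h\bm{\theta}^{(1)}(\tau)^\top)^\top$, and write $\widehat{\bm{\beta}} = (\widehat{\bm{\theta}}(\tau)^\top,\widehat{\bm{\theta}}^\star(\tau)^\top)^\top$. First I would establish consistency $\widehat{\bm{\beta}} \to_P \bm{\beta}_0$ via a uniform law of large numbers for $\mathcal{L}_\tau(\cdot)$ over $\mathbf{E}_T(r)$ combined with the identification condition in Assumption \ref{Ass3}, after replacing $(\mathbf{x}_t,\mathbf{z}_{t-1}^c)$ by the stationary approximation $(\widetilde{\mathbf{x}}_t(\tau),\widetilde{\mathbf{z}}_{t-1}(\tau))$ of Propositions \ref{proposition2.1}--\ref{proposition2.2}. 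Given consistency, the first-order condition $\gradient_{\bm{\beta}}\mathcal{L}_\tau(\widehat{\bm{\beta}}) = \mathbf{0}$ and a mean-value expansion around $\bm{\beta}_0$ give
\begin{equation*}
\widehat{\bm{\beta}} - \bm{\beta}_0 = -\left[\gradient_{\bm{\beta}}^2\mathcal{L}_\tau(\bar{\bm{\beta}})\right]^{-1}\gradient_{\bm{\beta}}\mathcal{L}_\tau(\bm{\beta}_0)
\end{equation*}
for some intermediate $\bar{\bm{\beta}}$, so it remains to analyze the Hessian and the score.

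For the Hessian, I would prove $\gradient_{\bm{\beta}}^2\mathcal{L}_\tau(\bar{\bm{\beta}}) \to_P$ a block matrix whose leading ($\bm{\theta}$) block converges to $\bm{\Sigma}(\tau)$; the symmetry of $K$ makes $\widetilde{c}_1 = 0$, so the off-diagonal blocks coupling $\bm{\eta}_1$ and $\bm{\eta}_2$ vanish in the limit and the $\bm{\theta}$-block decouples. This rests on a uniform LLN over a shrinking neighborhood of $\bm{\beta}_0$, obtained from the stationary approximation together with the Lipschitz bounds on second derivatives in Assumption \ref{Ass4} and the moment condition $r \geq 6$. For the score $\gradient_{\bm{\beta}}\mathcal{L}_\tau(\bm{\beta}_0)$, observe that the argument fed into $\mathcal{l}$ at $\bm{\beta}_0$ is $\bm{\theta}(\tau) + \bm{\theta}^{(1)}(\tau)(\tau_t-\tau)$, the first-order Taylor expansion of the true curve $\bm{\theta}(\tau_t)$. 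Expanding the score in the parameter around the truth $\bm{\theta}(\tau_t)$ and using that the QMLE score has conditional mean zero at the truth, the residual error $\bm{\theta}(\tau)+\bm{\theta}^{(1)}(\tau)(\tau_t-\tau)-\bm{\theta}(\tau_t) = -\frac{1}{2}\bm{\theta}^{(2)}(\tau)(\tau_t-\tau)^2 + O((\tau_t-\tau)^3)$ makes the leading mean of the score equal to $-\frac{1}{2}h^2\widetilde{c}_2\,\bm{\Sigma}(\tau)\bm{\theta}^{(2)}(\tau)$; multiplying by $-\bm{\Sigma}^{-1}(\tau)$ reproduces the stated bias $\frac{1}{2}h^2\widetilde{c}_2\bm{\theta}^{(2)}(\tau)$, and $Th^7 \to 0$ guarantees that $\sqrt{Th}$ times the $O(h^3)$ remainder is negligible. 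For the centered score I would pass to $(\widetilde{\mathbf{x}}_t(\tau),\widetilde{\mathbf{z}}_{t-1}(\tau))$, recognize a kernel-weighted sum of a weakly dependent stationary sequence that forms a martingale difference at the true parameter, and apply a central limit theorem for weighted sums through the functional dependence measure $\delta_r$ of Proposition \ref{proposition2.1} and the Cram\'er--Wold device. Because the score is a martingale difference at the truth, the asymptotic variance is the contemporaneous second moment $\bm{\Omega}(\tau)$ rather than a long-run variance; the kernel localization then produces the $\sqrt{Th}$ normalization and the factor $\widetilde{v}_0$, yielding the sandwich form $\bm{\Sigma}_{\bm{\theta}}(\tau) = \bm{\Sigma}^{-1}(\tau)\bm{\Omega}(\tau)\bm{\Sigma}^{-1}(\tau)$.

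The main obstacle I anticipate is controlling the two layers of approximation simultaneously and uniformly after inflation by $\sqrt{Th}$: first replacing the truncated observed history $\mathbf{z}_{t-1}^c$ by the full history, then by the stationary version $\widetilde{\mathbf{z}}_{t-1}(\tau)$, for both the score and the Hessian over the shrinking parameter neighborhood. This is precisely where Assumption \ref{Ass4}, the polynomial decay rates in Assumption \ref{Ass2}, and the requirement $r \geq 6$ enter: since $\mathcal{l}$ is quadratic in $\mathbf{x}_t$, the score is of cubic order in the data and its second moment is of order six, so $r \geq 6$ is what makes the variance of the score finite and forces the approximation remainders to be $o_P((Th)^{-1/2})$. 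Showing these remainders are negligible is the technical heart of the argument.

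Finally, for part (2) I would specialize the explicit form of $\mathcal{l}$ under $\bm{\varepsilon}_t \sim N(\mathbf{0},\mathbf{I}_m)$. Writing out $\gradient_{\bm{\vartheta}}\mathcal{l}$ and $\gradient_{\bm{\vartheta}}^2\mathcal{l}$ at $\bm{\theta}(\tau)$ and taking expectations using $E(\bm{\varepsilon}_1)=\mathbf{0}$, $E(\bm{\varepsilon}_1\bm{\varepsilon}_1^\top)=\mathbf{I}_m$ and the Gaussian fourth-moment identities delivers the information-matrix equality $\bm{\Omega}(\tau) = -\bm{\Sigma}(\tau)$; substituting this into the sandwich gives $\bm{\Sigma}_{\bm{\theta}}(\tau) = \bm{\Sigma}^{-1}(\tau)\bm{\Omega}(\tau)\bm{\Sigma}^{-1}(\tau) = \bm{\Omega}^{-1}(\tau)$, as claimed.
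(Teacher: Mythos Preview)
Your proposal is correct and follows essentially the same route as the paper: stack the local parameters, Taylor-expand the first-order condition, replace $(\mathbf{x}_t,\mathbf{z}_{t-1}^c)$ by the stationary approximation $(\widetilde{\mathbf{x}}_t(\tau_t),\widetilde{\mathbf{z}}_{t-1}(\tau_t))$, extract the $\frac{1}{2}h^2\widetilde{c}_2\bm{\theta}^{(2)}(\tau)$ bias from the quadratic Taylor remainder in the parameter, show Hessian convergence via a uniform LLN, and obtain the CLT for the centered score via Cram\'er--Wold and the martingale CLT; part (2) is handled by direct computation with Gaussian fourth-moment identities. One minor remark: since the score at the truth is already a martingale-difference array, the paper invokes the martingale CLT of Hall and Heyde directly (checking convergence of the conditional variance and the Lindeberg condition) rather than routing the CLT through the dependence measure $\delta_r$; the latter is reserved for the uniform-in-$\bm{\eta}$ and uniform-in-$\tau$ bounds on the Hessian and on the approximation remainders.
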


After deriving the asymptotic distribution, we will establish both the point-wise inference and the simultaneous inference in the following.

\subsection{Inference}\label{Sec2.3}

In this section, we first discuss how to conduct point-wise inference, and then move on to derive the asymptotic results associated with the simultaneous inference. Specifically, for some preassigned significance level $\alpha \in (0,1)$, we shall construct a $100(1-\alpha)\%$ asymptotic simultaneous confidence band (SCB) $\{ \Upsilon(\tau), 0\leq \tau\leq 1 \}$ for $\bm{\theta}(\cdot)$ in the sense that 
$$
\lim_{T\to\infty} \Pr\left(\bm{\theta}(\tau) \in  \Upsilon(\tau), 0\leq \tau\leq 1\right)  =1-\alpha.
$$
Notably, the simultaneous inference nests the traditional constancy test as a special case. It does not only allow one to examine whether a time-varying model should be preferred to its parametric counterpart, but also allows one to test any particular functional form of interest. For example, if a horizontal line can be embedded in the SCB $\{ \Upsilon(\tau)\}$, then we accept the hypothesis that some elements of $\bm{\theta}(\tau)$ are constant. 

\medskip

\noindent \textbf{Point-wise Inference:} First, we construct a bias-corrected estimator in order to remove the asymptotic bias of Theorem \ref{Thm3.1}. Specifically, we let
\begin{equation}\label{Eq3.1}
	\widetilde{\bm{\theta}} (\tau) = 2 \widehat{\bm{\theta}}_{h/\sqrt{2}}(\tau)-\widehat{\bm{\theta}} (\tau),
\end{equation}
where $\widehat{\bm{\theta}}_{h/\sqrt{2}}(\tau)$ is defined in the same way as $\widehat{\bm{\theta}} (\tau)$ but using the bandwidth $h/\sqrt{2}$.

After tedious development (Lemma B.7 of Appendix B), we have uniformly over $\tau \in[h,1-h]$
\begin{eqnarray*}
	\widetilde{\bm{\theta}}(\tau) - \bm{\theta}(\tau) &=& -\bm{\Sigma}^{-1}(\tau)\frac{1}{Th}\sum_{t=1}^{T}\widetilde{K}((\tau_t-\tau)/h) \gradient_{\bm{\vartheta}}\mathcal{l}(\widetilde{\bm{x}}_t(\tau_t),\widetilde{\bm{z}}_{t-1}(\tau_t);\bm{\theta}(\tau_t))\\
	&& +O_P((Th)^{-1/2}h^{3/2}(\log T)^{1/2}) + o(h^3),
\end{eqnarray*}
where $\widetilde{K}(x)=2\sqrt{2}K(\sqrt{2}x)-K(x)$ that is essentially a fourth-order kernel. It then infers that under the conditions of Theorem \ref{Thm3.1},
\begin{eqnarray}
	\sqrt{Th}(\widetilde{\bm{\theta}} (\tau) - \bm{\theta}(\tau) ) \to_D N\left(\mathbf{0}, {v}_0 \bm{\Sigma}_{\bm{\theta}}(\tau) \right),
\end{eqnarray}
where $v_0 = \int_{-1}^{1}\widetilde{K}^2(u)\mathrm{d}u$.

It is noteworthy that the construction of \eqref{Eq3.1} is different from directly using the fourth-order kernel in the regression. In terms of bandwidth selection, the traditional methods (e.g., cross-validation)  still remain valid for \eqref{Eq3.1} (\citealp{richter2019cross}). However, if one directly employs the fourth-order kernel in the regression, it remains unclear how to select the optimal bandwidth in practice.
 
\medskip

Now we discuss how to estimate $\bm{\Sigma}_{\bm{\theta}}(\tau) $ which is constructed by $\bm{\Sigma}(\tau)$ and $\bm{\Omega}(\tau)$. Intuitively, we consider the following estimator 
\begin{equation}\label{Eq3.4}
\widehat{\bm{\Sigma}}_{\bm{\theta}}(\tau) = \widehat{\bm{\Sigma}}^{-1}(\tau) \widehat{\bm{\Omega}}(\tau) \widehat{\bm{\Sigma}}^{-1}(\tau),
\end{equation}
where
\begin{eqnarray*}
\widehat{\bm{\Sigma}}(\tau) &=&A_T(\tau)^{-1} \sum_{t=1}^{T}\gradient_{\bm{\vartheta}}^2\mathcal{l}(\mathbf{x}_t,\mathbf{z}_{t-1}^c;\widehat{\bm{\theta}}(\tau))K_h(\tau_t-\tau),\\
	\widehat{\bm{\Omega}}(\tau) &=& A_T(\tau)^{-1} \sum_{t=1}^{T}\gradient_{\bm{\vartheta}}\mathcal{l}(\mathbf{x}_t,\mathbf{z}_{t-1}^c;\widehat{\bm{\theta}}(\tau)) \cdot \gradient_{\bm{\vartheta}}\mathcal{l}(\mathbf{x}_t,\mathbf{z}_{t-1}^c;\widehat{\bm{\theta}}(\tau))^\top K_h(\tau_t-\tau),\\
	A_T(\tau)&=&\sum_{t=1}^{T}K_h(\tau_t-\tau).
\end{eqnarray*}
Note that we consider a local constant estimator in \eqref{Eq3.4} rather than a local linear one, that is to avoid an implementation issue for finite sample studies (i.e., nonpositive definite covariance may occur when the local linear approach is employed). Such a numerical problem has been well explained and investigated in the literature. See \cite{chen2015local} for example.
 
The following corollary summarizes the asymptotic property of \eqref{Eq3.4}.

\begin{corollary}\label{proposition3.1}
Under the conditions of Theorem \ref{Thm3.1}.1, suppose further that 
\begin{eqnarray*}
\sup_{\tau \in [0,1]} [\alpha_j(\bm{\theta}(\tau))+\beta_j(\bm{\theta}(\tau))] = O(j^{-(5/2+s)})
\end{eqnarray*}
for some $s > 0$. In addition, let $h (\log T)^2\to 0$ and $T^{1-6/r}h\to \infty$. Then
\begin{eqnarray*}
\sup_{\tau \in [0,1]}|\widehat{\bm{\Sigma}}_{\bm{\theta}}(\tau) -\bm{\Sigma}_{\bm{\theta}}(\tau)|=o_P(1).
\end{eqnarray*}
\end{corollary}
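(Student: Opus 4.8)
The plan is to reduce the statement to the uniform consistency of the two building blocks $\widehat{\bm{\Sigma}}(\tau)$ and $\widehat{\bm{\Omega}}(\tau)$, and then recover $\widehat{\bm{\Sigma}}_{\bm{\theta}}(\tau)$ by matrix algebra. First I would record that $\bm{\Sigma}(\tau)$ is uniformly nonsingular: being the expected Hessian (information matrix) of the quasi-likelihood, $\sup_{\tau}\|\bm{\Sigma}^{-1}(\tau)\|\le C$ for some $C<\infty$ under Assumption \ref{Ass3} together with Assumption \ref{Ass3}.1, and $\tau\mapsto\bm{\Sigma}(\tau)$ is continuous by the H\"older continuity of Proposition \ref{proposition2.2}. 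Granting
\[
\sup_{\tau\in[0,1]}|\widehat{\bm{\Sigma}}(\tau)-\bm{\Sigma}(\tau)|=o_P(1), \qquad \sup_{\tau\in[0,1]}|\widehat{\bm{\Omega}}(\tau)-\bm{\Omega}(\tau)|=o_P(1),
\]
the identity $\widehat{\bm{\Sigma}}^{-1}-\bm{\Sigma}^{-1}=-\widehat{\bm{\Sigma}}^{-1}(\widehat{\bm{\Sigma}}-\bm{\Sigma})\bm{\Sigma}^{-1}$ shows $\widehat{\bm{\Sigma}}^{-1}$ is eventually uniformly bounded and $\sup_\tau|\widehat{\bm{\Sigma}}^{-1}-\bm{\Sigma}^{-1}|=o_P(1)$; telescoping $\widehat{\bm{\Sigma}}^{-1}\widehat{\bm{\Omega}}\widehat{\bm{\Sigma}}^{-1}-\bm{\Sigma}^{-1}\bm{\Omega}\bm{\Sigma}^{-1}$ into three differences and bounding each by products of (uniformly bounded) norms then yields the claim. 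So the work is entirely in the two displayed convergences.

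For each building block I would use a three-term decomposition, illustrated for $\widehat{\bm{\Sigma}}(\tau)$. Write the error as (i) the effect of replacing $\widehat{\bm{\theta}}(\tau)$ by $\bm{\theta}(\tau)$ in $\gradient_{\bm{\vartheta}}^2\mathcal{l}$, (ii) the effect of replacing the truncated observed history $(\mathbf{x}_t,\mathbf{z}_{t-1}^c)$ by the stationary approximation $(\widetilde{\mathbf{x}}_t(\tau),\widetilde{\mathbf{z}}_{t-1}(\tau))$, and (iii) the deviation of the kernel average of $\gradient_{\bm{\vartheta}}^2\mathcal{l}(\widetilde{\mathbf{x}}_t(\tau),\widetilde{\mathbf{z}}_{t-1}(\tau);\bm{\theta}(\tau))$ from $\bm{\Sigma}(\tau)$. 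Term (i) is handled by the uniform consistency $\sup_\tau|\widehat{\bm{\theta}}(\tau)-\bm{\theta}(\tau)|=o_P(1)$ (available from the proof of Theorem \ref{Thm3.1}, cf. Lemma B.7) and the first Lipschitz inequality of Assumption \ref{Ass4}.2, whose bound $|\bm{\vartheta}-\bm{\vartheta}'|\sum_j\chi_j|\mathbf{z}_j|$ has a uniformly integrable kernel average because $\chi_j=O(j^{-(2+s)})$ and $\sup_\tau\|\widetilde{\mathbf{x}}_t(\tau)\|_r<\infty$ (Proposition \ref{proposition2.1}). Term (ii) uses Proposition \ref{proposition2.2}: inside the window $|\tau_t-\tau|\le h$ we have $\|\widetilde{\mathbf{x}}_t(\tau_t)-\widetilde{\mathbf{x}}_t(\tau)\|_r=O(h)$ and $\max_t\|\mathbf{x}_t-\widetilde{\mathbf{x}}_t(\tau_t)\|_r=O(T^{-1})$, while the pre-sample truncation in $\mathbf{z}_{t-1}^c$ is absorbed by the tail sums $\sum_{j>t}\chi_j$; combined with the second Lipschitz inequality of Assumption \ref{Ass4}.2 this is $o_P(1)$, and it is here that the strengthened decay $\sup_\tau[\alpha_j(\bm{\theta}(\tau))+\beta_j(\bm{\theta}(\tau))]=O(j^{-(5/2+s)})$ is needed to make the tails negligible at the required uniform rate.

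Term (iii) is the genuine core: a uniform law of large numbers for the kernel-smoothed stationary functionals. I would split it into a bias part, $A_T(\tau)^{-1}\sum_t E[\gradient_{\bm{\vartheta}}^2\mathcal{l}(\widetilde{\mathbf{x}}_t(\tau),\widetilde{\mathbf{z}}_{t-1}(\tau);\bm{\theta}(\tau))]K_h(\tau_t-\tau)-\bm{\Sigma}(\tau)$, which is $O(h)$ by $\int K=1$ and continuity of $\tau\mapsto\bm{\Sigma}(\tau)$, and a stochastic part. For the stochastic part I would place a grid on $[0,1]$ of mesh $\sim T^{-1}$, bound the maximal deviation over grid points by a concentration inequality for the physical dependence measure $\delta_r^{\widetilde{\mathbf{x}}(\tau)}$ of Proposition \ref{proposition2.1} (the summands are measurable functionals of $\{\bm{\varepsilon}_t\}$ with controlled $\delta_r$), and control the oscillation between neighbouring grid points through the Lipschitz continuity of $K$ (Assumption \ref{Ass5}). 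The resulting rate is of order $\sqrt{(\log T)/(Th)}$ up to a moment/truncation factor, and it is exactly the balancing of the truncation level against the available moments that produces the conditions $h(\log T)^2\to0$ and $T^{1-6/r}h\to\infty$. The treatment of $\widehat{\bm{\Omega}}(\tau)$ is identical in structure, except that the summand $\gradient_{\bm{\vartheta}}\mathcal{l}\cdot\gradient_{\bm{\vartheta}}\mathcal{l}^\top$ is a product of two process-dependent factors, so its moments demand $r\ge6$; this is where the moment assumption of Theorem \ref{Thm3.1} becomes binding. I expect the main obstacle to be precisely term (iii) for $\widehat{\bm{\Omega}}(\tau)$: establishing the uniform-in-$\tau$ maximal inequality for the quadratic functional of an infinite-memory, only polynomially weakly dependent process possessing merely $r$ finite moments. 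Because the infinite memory forces $\delta_r$ to decay only polynomially, the usual geometric-mixing maximal inequalities do not apply directly; one must truncate the summands, apply a Nagaev/Rosenthal-type inequality within the functional-dependence framework, and verify that the truncation error, the grid discretization error, and the smoothing bias are simultaneously negligible under the stated bandwidth--moment conditions.
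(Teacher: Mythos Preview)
Your reduction to the two building blocks and the three-term decomposition (i)--(iii) are exactly what the paper does, and for $\widehat{\bm\Sigma}(\tau)$ your plan matches the paper's argument: since $\nabla_{\bm\vartheta}^2\mathcal{l}\in\mathcal{H}(3,\bm\chi,M)$ and $r\ge 6$ gives $r/3\ge 2$ moments for the Hessian, the grid/concentration machinery (the paper's Lemma~B.5) delivers the uniform rate $O_P\big((Th^2)^{-1/2}(\log T)^{1/2}+h\big)=o_P(1)$ under $h(\log T)^2\to 0$.

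The gap is in $\widehat{\bm\Omega}(\tau)$. You assert that its treatment is ``identical in structure,'' but the summand $\nabla_{\bm\vartheta}\mathcal{l}\,(\nabla_{\bm\vartheta}\mathcal{l})^\top$ lies in $\mathcal{H}(6,\bm\chi,M)$, so under the standing assumption $r\ge 6$ it has only $r/6$ finite moments --- possibly barely more than one. Nagaev/Rosenthal-type bounds and the Zhang--Wu maximal inequality you invoke all require at least two moments of the summand; with $q:=r/6\in(1,2]$ a grid plus union bound over $\asymp T$ points costs a polynomial factor in $T$ that cannot be absorbed, so your proposed route fails unless $r>12$. The paper explicitly flags this (``here we use a different argument \ldots which leads to weaker moment conditions'') and instead applies an Abel partial-summation identity: because the kernel weights $\widehat{K}((\tau_t-\tau)/h)$ have bounded total variation in $\tau$, one gets $\sup_\tau|\mathbf{S}_T(\tau)|\le M\max_{1\le t\le T}|\mathbf{S}_{t,T}|$ for centered partial sums $\mathbf{S}_{t,T}$. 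Then Doob's $\mathbb{L}^{r/6}$ maximal inequality combined with Burkholder for the martingale differences $\{\mathcal{P}_{t-l}g\}_t$ --- both valid as soon as $r/6>1$ --- yield $\|\max_t|\mathbf{S}_{t,T}|\|_{r/6}=O(T^{6/r})$, hence $\sup_\tau|\widehat{\bm\Omega}(\tau)-\bm\Omega(\tau)|=O_P(T^{6/r-1}h^{-1})$. This is $o_P(1)$ precisely under the extra condition $T^{1-6/r}h\to\infty$ of the corollary, which your proposal never connects to. The Abel-summation/Doob step replacing the grid argument for the outer-product term is the missing idea.
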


\medskip

\noindent \textbf{Simultaneous Inference:} We now consider the simultaneous inference.  To allow for flexibility, we first introduce a selection matrix $\mathbf{C}$ with full row rank, which selects the parameters of interest as follows:
\begin{eqnarray}
\bm{\theta}_{\mathbf{C}}(\tau):=\mathbf{C}\bm{\theta}(\tau).
\end{eqnarray}
Accordingly, the estimator and the corresponding asymptotic covariance matrix become
\begin{eqnarray}
\widehat{\bm{\theta}}_{\mathbf{C}}(\tau):=\mathbf{C}\widehat{\bm{\theta}} (\tau)  \quad \text{and}\quad \bm{\Sigma}_{\mathbf{C}}(\tau) = \mathbf{C}\bm{\Sigma}_{\bm{\theta}}(\tau)\mathbf{C}^\top.
\end{eqnarray}

\begin{theorem}\label{Thm3.2}
Under the conditions of Theorem \ref{Thm3.1}.1, suppose further that
\begin{eqnarray*}
\sup_{\tau \in [0,1]} [\alpha_j(\bm{\theta}(\tau)) + \beta_j(\bm{\theta}(\tau))]= O(j^{-(3+s)})
\end{eqnarray*}
for some $s > 0$. In addition, let $(\log T)^4/(T^{\nu}h)\to 0$ with $\nu = \frac{1}{2}-\frac{r-6}{4rs/3+2r-4}$ and $Th^7\log T \to 0$. Then
\begin{eqnarray*}
\lim_{T\to\infty}\mathrm{Pr}&&\left(\sqrt{\frac{Th}{\widetilde{v}_0}}\sup_{\tau\in[h,1-h]}\left|\bm{\Sigma}_{\mathbf{C}}^{-1/2}(\tau)\left\{\widehat{\bm{\theta}}_{\mathbf{C}}(\tau) - \bm{\theta}_{\mathbf{C}}(\tau) - \frac{1}{2}h^2\widetilde{c}_2\bm{\theta}_{\mathbf{C}}^{(2)}(\tau) \right\} \right| \right.\\
&& \left.- B(1/h)\leq\frac{u}{\sqrt{2\log(1/h)}} \right) = \exp(-2\exp(-u)),
\end{eqnarray*}
where 
\begin{eqnarray*}
B(1/h) &=& \sqrt{2 \log(1/h)} + \frac{\log (C_K) + (k/2-1/2)\log(\log(1/h))-\log(2)}{\sqrt{2 \log(1/h)}},\\
C_K &=& \frac{\{\int_{-1}^{1}|K^{(1)}(u)|^2\mathrm{d}u/\widetilde{v}_0\pi\}^{1/2}}{\Gamma(k/2)},
\end{eqnarray*}
and $\Gamma(\cdot)$ is the Gamma function.
\end{theorem}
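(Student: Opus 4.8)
The plan is to reduce the maximal deviation of $\widehat{\bm{\theta}}_{\mathbf{C}}(\cdot)$ to the supremum of the Euclidean norm of a smooth stationary vector Gaussian process, and then invoke the extreme-value theory for such a process. First I would establish a uniform (in $\tau$) Bahadur-type linearization, strengthening the pointwise expansion already recorded before Corollary \ref{proposition3.1} (Lemma B.7). Concretely, I would show that uniformly over $\tau\in[h,1-h]$,
\[\widehat{\bm{\theta}}(\tau)-\bm{\theta}(\tau)-\tfrac{1}{2}h^2\widetilde{c}_2\bm{\theta}^{(2)}(\tau) = -\bm{\Sigma}^{-1}(\tau)\frac{1}{Th}\sum_{t=1}^T K\!\left(\frac{\tau_t-\tau}{h}\right)\gradient_{\bm{\vartheta}}\mathcal{l}(\widetilde{\mathbf{x}}_t(\tau_t),\widetilde{\mathbf{z}}_{t-1}(\tau_t);\bm{\theta}(\tau_t)) + R_T(\tau),\]
with $\sqrt{Th/\widetilde{v}_0}\,\sup_\tau|R_T(\tau)| = o_P((\log(1/h))^{-1/2})$. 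The restriction $Th^7\log T\to 0$ disposes of the higher-order bias, Assumption \ref{Ass4} together with the strengthened decay $j^{-(3+s)}$ controls the stochastic remainder, and the stationary-approximation error is absorbed via Proposition \ref{proposition2.2}. Premultiplying by $\mathbf{C}$ and whitening by $\bm{\Sigma}_{\mathbf{C}}^{-1/2}(\tau)$ then leaves
\[\sqrt{\frac{Th}{\widetilde{v}_0}}\sup_\tau\left|\bm{\Sigma}_{\mathbf{C}}^{-1/2}(\tau)\mathbf{C}\bm{\Sigma}^{-1}(\tau)\frac{1}{Th}\sum_t K\!\left(\frac{\tau_t-\tau}{h}\right)\bm{\xi}_t\right|,\]
where $\bm{\xi}_t:=\gradient_{\bm{\vartheta}}\mathcal{l}(\widetilde{\mathbf{x}}_t(\tau_t),\widetilde{\mathbf{z}}_{t-1}(\tau_t);\bm{\theta}(\tau_t))$ forms a martingale-difference array whose conditional second moment at $\tau$ equals $\bm{\Omega}(\tau)$, so that the whitened summand has identity covariance in the limit.

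Second, I would approximate this kernel-weighted score process by a Gaussian process. The weak-dependence bounds of Proposition \ref{proposition2.1}, combined with $r\geq 6$ and the decay $j^{-(3+s)}$, permit a strong (Gaussian) approximation for the partial sums of $\{\bm{\xi}_t\}$ that is uniform in $\tau$; the approximation rate has to be fine enough that the coupling error is of smaller order than the extreme-value resolution $\sqrt{\log(1/h)}$. This is exactly what the bandwidth condition $(\log T)^4/(T^{\nu}h)\to 0$ with $\nu=\tfrac12-\tfrac{r-6}{4rs/3+2r-4}$ is calibrated to deliver, the exponent $\nu$ encoding the trade-off between the moment order $r$ and the memory exponent $s$. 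After this step the quantity of interest becomes, up to negligible terms, $\sup_\tau|\mathbf{G}(\tau)|$ for a $k$-dimensional stationary Gaussian process $\mathbf{G}$ with $\operatorname{cov}(\mathbf{G}(\tau))=\mathbf{I}_k$ and a smooth, kernel-determined local covariance.

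Third, I would apply the extreme-value theory for the maximum of the norm of a smooth stationary vector Gaussian process (the $\chi$-process), following the classical Bickel--Rosenblatt argument in its vector form. The local correlation of $\mathbf{G}$ behaves like $1-\tfrac12\lambda_2(s-t)^2/h^2$ with $\lambda_2\propto \int_{-1}^1|K^{(1)}(u)|^2\mathrm{d}u/\widetilde{v}_0$, so this second spectral moment fixes the intensity of upcrossings and hence the constant $C_K$; the dimension $k$ enters through the $\chi_k$ tail, producing the factor $\Gamma(k/2)$ and the exponent $(k/2-1/2)$ appearing in $B(1/h)$. Assembling the pieces yields the Gumbel limit $\exp(-2\exp(-u))$ with the stated centering and scaling.

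The main obstacle is the uniform Gaussian approximation of the second step: $\{\mathbf{x}_t\}$ is locally stationary with infinite memory, so standard strong-approximation theorems do not apply off the shelf. One must splice the $\tau$-dependent stationary approximations of Propositions \ref{proposition2.1}--\ref{proposition2.2} into a blocking and coupling scheme whose accuracy is governed by the physical-dependence coefficients, and then verify that the resulting rate beats $\sqrt{\log(1/h)}$ rather than merely the leading order. Securing an approximation sharp enough to preserve the Gumbel \emph{centering} $B(1/h)$, not just the first-order scale, is the delicate crux and is precisely where the interplay among $r$, $s$, and the bandwidth through $\nu$ must be handled with care.
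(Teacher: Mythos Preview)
Your proposal is correct and follows essentially the same route as the paper: the paper first establishes the uniform Bahadur representation (its Lemma~\ref{LemmaA.8}, giving exactly your linearization with remainder $O_P(\gamma_T+\beta_Th^2+h^3+(Th)^{-1})=o_P((Th\log T)^{-1/2})$), and then feeds the leading kernel-weighted score into a pre-packaged result (its Lemma~\ref{LemmaB.2}, quoted from Karmakar et al.\ 2021) that bundles your steps two and three---the strong Gaussian approximation under physical dependence and the vector Bickel--Rosenblatt extreme-value limit---into a single statement. Your write-up simply unpacks that black box, and your diagnosis of the crux (calibrating the coupling rate via $r$, $s$, and $\nu$ so that it beats the $\sqrt{\log(1/h)}$ centering scale) is exactly what drives the bandwidth condition $(\log T)^4/(T^{\nu}h)\to 0$.
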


In Theorem \ref{Thm3.2}, $\nu$ is slightly smaller than $1/2$ as we only require $r$ to be slightly larger than $6$. Hence, the usual optimal bandwidth $h_{opt} = O(T^{-1/5})$ satisfies the conditions $(\log T)^4/(T^{\nu}h)\to 0$ and $Th^7\log T \to 0$.

As shown in Theorem \ref{Thm3.2}, the convergence rate of the simultaneous confidence intervals for $\bm{\theta}_{\mathbf{C}}(\cdot)$ is of logarithmic rate and is therefore slow. In order to improve the rate, we consider a bootstrap method which shows a much better finite sample performance. We summarize the result in the following corollary.

\begin{corollary}\label{proposition3.4}
Under the conditions of Theorem \ref{Thm3.2}. Suppose that $h = O(T^{-\kappa})$ with $1/7<\kappa<\nu$. Then, on a richer probability space, there exists i.i.d. $k$-dimensional standard normal variables $\mathbf{v}_1,\ldots,\mathbf{v}_T$ such that
\begin{equation*} 
\sup_{\tau\in[0,1]}|\widehat{\bm{\theta}}_{\mathbf{C}}(\tau)-\bm{\theta}_{\mathbf{C}}(\tau) -\frac{1}{2}h^2b_h(\tau)\bm{\theta}_{\mathbf{C}}^{(2)}(\tau) -\bm{\Sigma}_{\mathbf{C}}^{1/2}(\tau)\mathbf{V}_h^*(\tau)| =O_P\left(\frac{T^{-\alpha}}{\sqrt{Th\log T}} \right),
\end{equation*}
where $\alpha = \min\{(\nu-\kappa)/2,(7\kappa-1)/2,\kappa/2\}$, $\widetilde{c}_{k,h}(\tau) = \int_{-\tau/h}^{(1-\tau)/h} u^k K(u) \mathrm{d}u $,  $\mathbf{V}_h^*(\tau) = T^{-1}\sum_{t=1}^{T}\mathbf{v}_t\omega_{t,h}(\tau)$,
$$
b_h(\tau) = \frac{\widetilde{c}_{2,h}^2(\tau) - \widetilde{c}_{1,h}(\tau)\widetilde{c}_{3,h}(\tau)}{ \widetilde{c}_{0,h}(\tau)\widetilde{c}_{2,h}(\tau)-\widetilde{c}_{1,h}^2(\tau)}\quad \text{and}\quad \omega_{t,h}(\tau) = K_h(\tau_t-\tau)\frac{ \widetilde{c}_{2,h}(\tau)- \frac{\tau_t-\tau}{h}\widetilde{c}_{1,h}(\tau)}{\widetilde{c}_{0,h}(\tau)\widetilde{c}_{2,h}(\tau)-\widetilde{c}_{1,h}^2(\tau)}.
$$
\end{corollary}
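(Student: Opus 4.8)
The plan is to reduce the bias-corrected estimation error to a kernel-weighted sum of martingale-difference scores, couple that sum—uniformly in $\tau$—with the Gaussian process $\bm{\Sigma}_{\mathbf{C}}^{1/2}(\tau)\mathbf{V}_h^*(\tau)$, and then transfer the coupling from partial sums to the weighted sum by summation by parts. Much of the groundwork is shared with the proof of Theorem \ref{Thm3.2}, of which this coupling is essentially the first half: the Gumbel limit there follows by feeding $\bm{\Sigma}_{\mathbf{C}}^{1/2}(\tau)\mathbf{V}_h^*(\tau)$ into a Bickel--Rosenblatt-type extreme-value argument, so a clean statement of the coupling is exactly what is needed to justify the bootstrap and bypass the slow logarithmic rate.

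First I would record the uniform asymptotic-linear (Bahadur-type) expansion already developed for Theorems \ref{Thm3.1}--\ref{Thm3.2} (cf.\ the display preceding Theorem \ref{Thm3.2} and Lemma B.7), but written with the boundary-corrected local-linear weights so that it is valid on all of $[0,1]$. With $\mathbf{s}_t := \gradient_{\bm{\vartheta}}\mathcal{l}(\widetilde{\mathbf{x}}_t(\tau_t),\widetilde{\mathbf{z}}_{t-1}(\tau_t);\bm{\theta}(\tau_t))$ this reads, uniformly over $\tau\in[0,1]$,
\begin{equation*}
\widehat{\bm{\theta}}_{\mathbf{C}}(\tau)-\bm{\theta}_{\mathbf{C}}(\tau)-\tfrac{1}{2}h^2 b_h(\tau)\bm{\theta}_{\mathbf{C}}^{(2)}(\tau)=\frac{1}{T}\sum_{t=1}^{T}\omega_{t,h}(\tau)\,\mathbf{C}\bm{\Sigma}^{-1}(\tau_t)\mathbf{s}_t+\mathbf{R}_T(\tau),
\end{equation*}
where $b_h$ and $\omega_{t,h}$ are exactly the boundary-adjusted bias factor and projection weights in the statement. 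Propositions \ref{proposition2.1}--\ref{proposition2.2} license replacing the observed truncated process by its stationary approximation (the strengthened decay $O(j^{-(3+s)})$ guarantees the needed summability), and the slowly varying $\bm{\Sigma}^{-1}(\tau)$ may be evaluated at $\tau_t$ up to an $O(h)$ error folded into $\mathbf{R}_T$. The remainder $\mathbf{R}_T$ collects the $O(h^3)$ higher-order bias—negligible precisely when $Th^7\to0$, i.e.\ $\kappa>1/7$, which is the source of the $(7\kappa-1)/2$ term—together with the combined smoothness-substitution and covariance-matching errors producing the $\kappa/2$ term.

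The crux is the Gaussian coupling of the leading sum. Because the quasi-score has conditional mean zero at the truth, $\{\mathbf{s}_t\}$ is a martingale difference sequence with $E(\mathbf{s}_t\mathbf{s}_t^\top)=\bm{\Omega}(\tau_t)$, so the whitened array $\mathbf{u}_t:=\bm{\Sigma}_{\mathbf{C}}^{-1/2}(\tau_t)\mathbf{C}\bm{\Sigma}^{-1}(\tau_t)\mathbf{s}_t$ has identity covariance and inherits the polynomial decay of the functional dependence measure from Proposition \ref{proposition2.1}. I would then construct, on a richer probability space, i.i.d.\ $N(\mathbf{0},\mathbf{I}_k)$ vectors $\mathbf{v}_1,\ldots,\mathbf{v}_T$ for which the partial sums $S_n:=\sum_{t\le n}(\mathbf{u}_t-\mathbf{v}_t)$ satisfy $\max_{n\le T}|S_n|=O_P(T^{(1-\nu)/2})$ up to logarithmic factors. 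This follows the standard three-stage recipe for weakly dependent arrays: (i) an $m$-dependent approximation with truncation error governed by $\sum_{j>m}\delta_r(j)$; (ii) a large-block/small-block decomposition rendering the big-block sums independent; and (iii) a multivariate Sakhanenko/Einmahl (KMT-type) strong approximation applied block-wise, with the near-constancy of $\bm{\Sigma}_{\mathbf{C}}(\tau_t)$ within a block used to match covariances before transforming back to i.i.d.\ standard normals. Optimizing the block length against the moment order $r\ge6$ and the decay exponent $s$ yields exactly $\nu=\tfrac12-\tfrac{r-6}{4rs/3+2r-4}$, and hence the requirement $\kappa<\nu$.

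Finally I would transfer the partial-sum coupling to the $\tau$-indexed weighted sum by Abel summation-by-parts,
\begin{equation*}
\frac{1}{T}\sum_{t=1}^{T}\omega_{t,h}(\tau)(\mathbf{u}_t-\mathbf{v}_t)=\frac{1}{T}\Big\{\sum_{t=1}^{T-1}\big(\omega_{t,h}(\tau)-\omega_{t+1,h}(\tau)\big)S_t+\omega_{T,h}(\tau)S_T\Big\},
\end{equation*}
and bound it by $(Th)^{-1}\max_n|S_n|$ after using the Lipschitz continuity of $K$ (Assumption \ref{Ass5}) to control the total variation $\sum_t|\omega_{t,h}(\tau)-\omega_{t+1,h}(\tau)|=O(1/h)$ uniformly in $\tau$. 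Since this bound does not depend on $\tau$, it gives $\sup_\tau|\frac1T\sum_t\omega_{t,h}(\tau)(\mathbf{u}_t-\mathbf{v}_t)|=O_P(T^{-(\nu-\kappa)/2}/\sqrt{Th\log T})$, and combining with the $\mathbf{R}_T$ contributions yields the claimed rate with $\alpha=\min\{(\nu-\kappa)/2,(7\kappa-1)/2,\kappa/2\}$. The main obstacle is steps (ii)--(iii): securing a multivariate Gaussian coupling for the weakly dependent, time-varying-covariance score array with a rate sharp enough in both $r$ and $s$ to produce $\nu$, while the whitening keeps the coupling target i.i.d.\ standard normal. The block-wise covariance matching under the slowly drifting $\bm{\Sigma}_{\mathbf{C}}(\tau_t)$, and the preservation of uniformity in $\tau$ through the summation-by-parts transfer, are the delicate points.
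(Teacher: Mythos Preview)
Your proposal is correct and follows essentially the same route as the paper. The paper obtains the boundary-corrected Bahadur expansion from Lemma~B.7, invokes the Gaussian approximation of \cite{wu2011gaussian} (packaged as Lemma~B.3) to couple the score partial sums with $\sum_{s\le t}\bm{\Omega}^{1/2}(\tau_s)\mathbf{v}_s$, transfers this to the $\omega_{t,h}(\tau)$-weighted sum by summation by parts, and finally replaces $\bm{\Omega}^{1/2}(\tau_t)$ by $\bm{\Omega}^{1/2}(\tau)$ via Lipschitz continuity; your three-stage recipe is precisely the internal construction behind that cited lemma, so the only real difference is that the paper treats the coupling as a black box while you sketch its proof.
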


By Corollary \ref{proposition3.4}, we propose the following numerical procedure to construct the SCB of $\bm{\theta}_{\mathbf{C}}(\tau)$:

\begin{itemize}		
\item[Step 1] Use the sample $\{\mathbf{x}_t\}_{t=1}^T$ to estimate $\widehat{\bm{\theta}}_{\mathbf{C}}(\tau)$ by \eqref{Eq2.4}, and compute $\widetilde{\bm{\theta}}_{\mathbf{C}}(\tau)$ based on \eqref{Eq3.1}.
	
\item[Step 2] Generate i.i.d. $k$-dimensional standard normal variables $\{\mathbf{v}_t^*\}$ and calculate the quantity $\sup_{\tau \in [0,1]}|\mathbf{V}_{h}^*(\tau)|$, in which $\mathbf{V}_{h}^*(\tau)= T^{-1}\sum_{t=1}^{T}\mathbf{v}_t^*(2\omega_{t,h/\sqrt{2}}(\tau)-\omega_{t,h}(\tau))$.
	
\item[Step 3] Repeat Step 2 $R$ times to obtain the empirical $(1-\alpha)^{th}$ quantile $\widehat{q}_{1-\alpha}$ of $\sup_{\tau \in [0,1]}|\mathbf{V}_{h}^*(\tau)|$.
	
\item[Step 4] Calculate $\widehat{\bm{\Sigma}}_{\mathbf{C}}(\tau)$ using \eqref{Eq3.4}, and construct the SCB of $\bm{\theta}_{\mathbf{C}}(\tau)$ by $\widetilde{\bm{\theta}}_{\mathbf{C}}(\tau) + \widehat{\bm{\Sigma}}_{\mathbf{C}}^{1/2}(\tau) \widehat{q}_{1-\alpha} \mathbb{B}_k$, where $\mathbb{B}_k = \{\mathbf{u}\in \mathbb{R}^k:|\mathbf{u}|\leq 1\}$ is the unit ball, and $k$ is the rank of $\mathbf{C}$.
\end{itemize}

\subsection{Examples} \label{Sec2.4}

Below, we demonstrate the usefulness of the aforementioned results by considering Example 1 and Example 2 of Section \ref{Sec1}. 

\smallskip

\noindent \textbf{Example 1 (Cont.)} --- For $\forall\tau\in[0,1]$, simple algebra shows that the approximated stationary process is defined by
\begin{eqnarray}\label{Eq4.3}
\widetilde{\bm{x}}_t(\tau) = \bm{\mu}\left(\widetilde{\mathbf{x}}_{t-1}(\tau),\widetilde{\mathbf{x}}_{t-2}(\tau),\ldots;\bm{\theta}(\tau)\right) +\mathbf{H}\left(\widetilde{\mathbf{x}}_{t-1}(\tau),\widetilde{\mathbf{x}}_{t-2}(\tau),\ldots;\bm{\theta}(\tau)\right)\bm{\varepsilon}_t,
\end{eqnarray}
where $\bm{\theta}(\tau)$ has been defined in \eqref{Eq4.1.1}, and
\begin{eqnarray}\label{Eq4.32}
 \bm{\mu}\left(\widetilde{\mathbf{x}}_{t-1}(\tau),\widetilde{\mathbf{x}}_{t-2}(\tau),\ldots;\bm{\theta}(\tau)\right) &=&  \mathbf{B}_{\tau}^{-1}(1)\mathbf{a}(\tau)+\sum_{j=1}^{\infty}\bm{\Gamma}_j(\tau)\widetilde{\bm{x}}_{t-j}(\tau),\nonumber \\
 \mathbf{H}\left(\widetilde{\mathbf{x}}_{t-1}(\tau),\widetilde{\mathbf{x}}_{t-2}(\tau),\ldots;\bm{\theta}(\tau)\right) &=& \bm{\omega}(\tau).
\end{eqnarray}
Additionally, in \eqref{Eq4.32}, $\bm{\Gamma}_j(\tau)$ is yielded as follows:
\begin{eqnarray}
\mathbf{I}_m - \sum_{j=1}^{\infty}\bm{\Gamma}_j(\tau)L^j = \mathbf{B}_{\tau}^{-1}(L)\mathbf{A}_{\tau}(L),
\end{eqnarray}
where $\mathbf{A}_{\tau}(L):=\mathbf{I}_m - \mathbf{A}_1(\tau)L-\cdots-\mathbf{A}_p(\tau)L^p$
and $\mathbf{B}_{\tau}(L):=\mathbf{I}_m + \mathbf{B}_1(\tau)L+\cdots+\mathbf{B}_q(\tau)L^q$.

Then we are able to present the following proposition.

\begin{proposition}\label{proposition4.1}
Let $\|\bm{\varepsilon}_t\|_r < \infty$ for some $r>4$. Suppose that there is a compact set
\begin{eqnarray*}
\bm{\Theta} =\{\bm{\vartheta} = \mathrm{vec}(\mathbf{a} ,\mathbf{A}_1 ,\ldots,\mathbf{A}_p ,\mathbf{B}_1 ,\ldots,\mathbf{B}_q ,\bm{\Omega} ) \mid \bm{\vartheta} \in \mathbb{R}^d\}
\end{eqnarray*}
 such that (1). for $\forall \tau\in [0,1]$, $\bm{\theta}(\tau)$ lies in the interior of $\bm{\Theta}$, (2). $\mathrm{det}(\mathbf{A}(L)\mathbf{B}(L)) \neq 0$ for all $|L|\leq 1$, (3). $\bm{\Omega}>0$, where $\mathbf{A}(L):=\mathbf{I}_m - \mathbf{A}_1L-\cdots-\mathbf{A}_pL^p$ and $\mathbf{B}(L):=\mathbf{I}_m + \mathbf{B}_1L+\cdots+\mathbf{B}_qL^q$ are coprime and satisfy some necessary identification conditions. Then, the results of Theorems \ref{Thm3.1} and \ref{Thm3.2} hold for model \eqref{Eq4.1}.
\end{proposition}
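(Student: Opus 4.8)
The plan is to prove Proposition \ref{proposition4.1} by verifying that the time-varying VARMA model \eqref{Eq4.1}, once written in the infinite-order autoregressive form \eqref{Eq4.32}, satisfies Assumptions \ref{Ass1}--\ref{Ass5}; Theorems \ref{Thm3.1} and \ref{Thm3.2} then follow by direct application. The workhorse is \eqref{Eq4.32}, where $\bm{\mu}(\cdot;\bm{\vartheta})$ is affine in its arguments with coefficient matrices $\bm{\Gamma}_j(\bm{\vartheta})$ determined by $\mathbf{B}^{-1}(L)\mathbf{A}(L)$, while $\mathbf{H}(\cdot;\bm{\vartheta})\equiv\bm{\omega}(\bm{\vartheta})$ is free of the lagged observations. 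The first step is to show that, under conditions (2) and (3), the roots of $\det(\mathbf{A}(L)\mathbf{B}(L))$ stay bounded away from the closed unit disk uniformly in $\tau$; since $\bm{\theta}(\cdot)\in C^3[0,1]$ takes values in the interior of the compact set $\bm{\Theta}$, a compactness-and-continuity argument yields $\sup_{\tau\in[0,1]}|\bm{\Gamma}_j(\bm{\theta}(\tau))|\leq C\rho^{j}$ for some $\rho\in(0,1)$, i.e.\ exponentially decaying autoregressive coefficients. This single estimate drives almost every remaining verification.

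Most of the assumptions then fall out routinely. For Assumption \ref{Ass1}.2 the affine structure lets us take $\alpha_j(\bm{\vartheta})=|\bm{\Gamma}_j(\bm{\vartheta})|$ and $\beta_j(\bm{\vartheta})=0$ (as $\mathbf{H}$ does not depend on $\mathbf{z}$), and the decay rates required in Assumptions \ref{Ass2}.2 and \ref{Ass4}.2 (and the strengthened rates demanded by Corollary \ref{proposition3.1} and Theorem \ref{Thm3.2}) hold because exponential decay dominates any polynomial one. Assumption \ref{Ass3}.1 is immediate, since $\mathbf{H}(\mathbf{z};\bm{\vartheta})\mathbf{H}(\mathbf{z};\bm{\vartheta})^\top=\bm{\Omega}(\bm{\vartheta})$ is bounded below on the compact interior set by condition (3). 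The parameter-smoothness conditions (Assumptions \ref{Ass2}.1 and \ref{Ass4}) follow because $\bm{\Gamma}_j(\bm{\vartheta})$ and $\bm{\omega}(\bm{\vartheta})$ are real-analytic on the region where $\det\mathbf{B}(L)\neq 0$, so that $\gradient_{\bm{\vartheta}}^k\bm{\mu}(\mathbf{z};\bm{\vartheta})=\text{const}+\sum_j\gradient_{\bm{\vartheta}}^k\bm{\Gamma}_j(\bm{\vartheta})\mathbf{z}_j$ for $k=1,2$, with exponentially summable derivative coefficients supplying the Lipschitz sequence $\{\chi_j\}$. Assumption \ref{Ass1}.1 is inherited from the moment hypothesis on $\bm{\varepsilon}_t$ (the constant-$\mathbf{H}$, linear structure makes $\widetilde{\mathbf{x}}_t(\tau)$ a causal linear process that inherits the moments of $\bm{\varepsilon}_t$), and Assumption \ref{Ass5} is not model-specific.

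For the identification condition, Assumption \ref{Ass3}.2, equality of $\mathbf{H}$ forces $\bm{\Omega}(\bm{\vartheta})=\bm{\Omega}(\bm{\theta}(\tau))$, while the a.s.\ equality of $\bm{\mu}(\widetilde{\mathbf{z}}_t(\tau);\cdot)$ becomes a coefficient-matching statement: because $\bm{\Omega}>0$ makes the innovations, and hence the stationary regressors $\widetilde{\mathbf{x}}_{t-j}(\tau)$, nondegenerate, the two affine maps can agree only if $\bm{\Gamma}_j(\bm{\vartheta})=\bm{\Gamma}_j(\bm{\theta}(\tau))$ for every $j$ (with equal intercepts). Coprimeness of $\mathbf{A}(L)$ and $\mathbf{B}(L)$ together with the stated identification conditions then recovers $(\mathbf{a},\mathbf{A}_1,\ldots,\mathbf{B}_q)$ uniquely from the ratio $\mathbf{B}^{-1}(L)\mathbf{A}(L)$, giving $\bm{\vartheta}=\bm{\theta}(\tau)$; this is the classical VARMA identifiability argument.

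The hard part will be Assumption \ref{Ass1}.3, namely that $\bm{\theta}(\tau)$ lies in the interior of $\bm{\Theta}_r$, which here reduces to the contraction requirement $\sum_{j\geq 1}|\bm{\Gamma}_j(\bm{\theta}(\tau))|<1$ uniformly in $\tau$. This does \emph{not} follow from the roots condition (2) when $|\cdot|$ is the Euclidean/spectral norm: already for a scalar ARMA$(1,1)$ one computes $\sum_j|\Gamma_j|=|\phi+\psi|/(1-|\psi|)$, which can exceed $1$ even with both roots strictly outside the unit disk. The resolution is to exploit the freedom of norm in the Lipschitz framework: condition (2) states precisely that the companion operator of the autoregressive representation has spectral radius below one, so one passes to a norm adapted to that operator (equivalently, works with the companion VAR$(1)$ form) in which the one-step map is a genuine contraction, and then transfers the bound back to \eqref{Eq2.1} by norm equivalence, absorbing the equivalence constants into the $O(1)$ terms of Propositions \ref{proposition2.1}--\ref{proposition2.2}. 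Making this adapted norm explicit, and checking that the contraction constant can be chosen below one uniformly over $\tau\in[0,1]$ by the same compactness argument used for the exponential decay, is where the genuine work of the proof lies.
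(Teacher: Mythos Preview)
Your diagnosis of the difficulty is exactly right: the contraction requirement $\sum_{j\geq 1}|\bm{\Gamma}_j(\bm{\theta}(\tau))|<1$ of Assumption~\ref{Ass1}.3 does \emph{not} follow from the root condition~(2), and your scalar ARMA$(1,1)$ counterexample is the correct one. The gap is in your proposed fix. The adapted-norm argument cannot rescue Assumption~\ref{Ass1}.3 as written: in the univariate case $m=1$ all norms on $\mathbb{R}^m$ differ by a positive scalar, so the Lipschitz constants $\alpha_j=|\Gamma_j|$ are unchanged and $\sum_j|\Gamma_j|$ stays above~$1$. Recasting the VARMA as a companion VAR$(1)$ on a finite-dimensional state does give a contraction in an adapted state-space norm, but that state contains the unobserved innovations $\bm{\eta}_{t-j}$, so the recast system is no longer of the form~\eqref{Eq2.1} in the observable $\mathbf{x}_t$; carrying the estimation theory through that state-space form would mean redoing the likelihood construction and all of Section~\ref{Sec2.2}, not merely ``absorbing equivalence constants.''

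The paper's resolution is to \emph{bypass} Assumption~\ref{Ass1}.3 rather than verify it. Since $\det(\mathbf{A}_\tau(L))\neq 0$ on $|L|\leq 1$ yields the causal MA$(\infty)$ representation $\widetilde{\mathbf{x}}_t(\tau)=\sum_{j\geq 0}\bm{\Phi}_j(\tau)\widetilde{\bm{\eta}}_{t-j}(\tau)$ with $|\bm{\Phi}_j(\tau)|=O(\rho^j)$, one reads off $\sup_\tau\|\widetilde{\mathbf{x}}_t(\tau)\|_r<\infty$ and $\delta_r^{\widetilde{\mathbf{x}}(\tau)}(k)=O(\rho^k)$ directly---the \emph{conclusions} of Proposition~\ref{proposition2.1} (and, by the same device, of Proposition~\ref{proposition2.2}) hold even though the general hypothesis does not. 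The downstream lemmas feeding Theorems~\ref{Thm3.1}--\ref{Thm3.2} use only those conclusions together with Assumptions~\ref{Ass2}--\ref{Ass5}, which you verify correctly via the exponential decay of $\bm{\Gamma}_j$ and its derivatives. You should also account for why $r>4$ suffices here in place of the $r\geq 6$ stated in Theorem~\ref{Thm3.1}: because $\mathbf{H}(\mathbf{z};\bm{\vartheta})\equiv\bm{\omega}(\bm{\vartheta})$ does not depend on $\mathbf{z}$, Lemma~\ref{LemmaA.2} places $\mathcal{l},\gradient\mathcal{l},\gradient^2\mathcal{l}$ in $\mathcal{H}(2,\bm{\chi},M)$ rather than $\mathcal{H}(3,\bm{\chi},M)$, so the moment condition $r/C>2$ becomes $r>4$.
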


We note that the detailed identification conditions required for VARMA processes (e.g., the final equations form or echelon form) can be found in \cite{lutkepohl2005new}. We no longer discuss them here in order not to derivative from our main goal.

\medskip

\noindent \textbf{Example 2 (Cont.)} ---  We further let
\begin{eqnarray}\label{GARCH_rho}
\bm{\Omega}(\tau) = \left[\begin{matrix}
1 & \rho_{1,2}(\tau) & \cdots &\rho_{1,m}(\tau) \\
\rho_{1,2}(\tau) & 1 & \ddots &\vdots \\
\vdots & \ddots & \ddots &\rho_{m-1,m}(\tau) \\
\rho_{1,m}(\tau) & \rho_{m-1,m}(\tau)  & \ddots &1 \\
\end{matrix} \right].
\end{eqnarray}

For $\forall\tau\in[0,1]$, the corresponding approximated stationary process is defined as
\begin{eqnarray}\label{Eq4.5}
\widetilde{\bm{x}}_t(\tau) = \mathbf{H}\left(\widetilde{\mathbf{x}}_{t-1}(\tau),\widetilde{\mathbf{x}}_{t-2}(\tau),\ldots;\bm{\theta}(\tau)\right)\bm{\varepsilon}_t,
\end{eqnarray}
where 
\begin{eqnarray}\label{Eq4.52}
&&\mathbf{H}\left(\widetilde{\mathbf{x}}_{t-1}(\tau),\widetilde{\mathbf{x}}_{t-2}(\tau),\ldots;\bm{\theta}(\tau)\right)\nonumber \\
& =& \mathrm{diag}^{1/2}\left( \mathbf{D}_{\tau}^{-1}(1)\mathbf{c}_0(\tau)+ \sum_{j=1}^{\infty} \mathbf{\Psi}_j(\tau) \left(\widetilde{\mathbf{x}}_{t-j}(\tau)\odot\widetilde{\mathbf{x}}_{t-j}(\tau)\right)\right).
\end{eqnarray}
Note that $ \mathbf{\Psi}_j(\tau)$ is generated as follows:
\begin{eqnarray}
\bm{\Psi}_{\tau}(L):=\mathbf{I}_m - \sum_{j=1}^{\infty}\bm{\Psi}_j(\tau)L^j=\mathbf{D}_{\tau}^{-1}(L)\mathbf{C}_{\tau}(L),
\end{eqnarray}
where $\mathbf{C}_{\tau}(L):=\mathbf{C}_1(\tau)L+\cdots+\mathbf{C}_p(\tau)L^p$ and $\mathbf{D}_{\tau}(L):=\mathbf{I}_m - \mathbf{D}_1(\tau)L-\cdots-\mathbf{D}_q(\tau)L^q$.
 
Consequently, we can present the following proposition.

\begin{proposition}\label{proposition4.2}
Suppose that there is a compact set
\begin{eqnarray*}
\bm{\Theta} =\{\bm{\vartheta} = \mathrm{vec}(\mathbf{c}_0,\mathbf{C}_1,\ldots,\mathbf{C}_p ,\mathbf{D}_1,\ldots,\mathbf{D}_q,\bm{\Omega}) \mid \bm{\vartheta} \in \mathbb{R}^d\}
\end{eqnarray*}
 such that (1). for $\forall \tau\in [0,1]$, $\bm{\theta}(\tau)$ lies in the interior of $\bm{\Theta}$, (2). $\|\bm{\Omega}^{1/2}\bm{\varepsilon}_t\|_r^2 \sum_{j=1}^\infty|\mathbf{\Psi}_j|< 1$ for some $r >  6$, (3). all the roots of $|\mathbf{I}_m - \sum_{j=1}^{p} \mathbf{C}_j - \sum_{j=1}^{q} \mathbf{D}_j|$ are outside the unit circle with $\mathbf{C}_j$'s and $\mathbf{D}_j$'s being squared matrices of nonnegative elements, (4). $\mathbf{c}_0$ is a vector of positive elements, (5). $\mathbf{C}(L)$ and $\mathbf{D}(L)$ are coprime and the formulation of the GARCH part is minimal, where $\mathbf{C}(L):=\mathbf{C}_1 L+\cdots+\mathbf{C}_p L^p$ and $\mathbf{D} (L):=\mathbf{I}_m - \mathbf{D}_1 L-\cdots-\mathbf{D}_q L^q$. Then the results Theorems \ref{Thm3.1} and \ref{Thm3.2} hold for model \eqref{Eq4.4}.
\end{proposition}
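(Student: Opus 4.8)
The plan is to verify that the time-varying multivariate GARCH model \eqref{Eq4.4}, under conditions (1)--(5), satisfies Assumptions \ref{Ass1}--\ref{Ass5}, so that Theorems \ref{Thm3.1}--\ref{Thm3.2} apply verbatim. I would start from the ARCH$(\infty)$ representation \eqref{Eq4.52}, in which $\bm{\mu}\equiv\mathbf{0}$ and the only nontrivial object is $\mathbf{H}(\mathbf{z};\bm{\theta}(\tau))=\mathrm{diag}^{1/2}(\mathbf{h}(\mathbf{z}))\bm{\Omega}^{1/2}(\tau)$ with $\mathbf{h}(\mathbf{z})=\mathbf{D}_\tau^{-1}(1)\mathbf{c}_0(\tau)+\sum_{j\ge1}\bm{\Psi}_j(\tau)(\mathbf{z}_j\odot\mathbf{z}_j)$, so that $\mathbf{H}\mathbf{H}^\top=\mathrm{diag}^{1/2}(\mathbf{h})\bm{\Omega}\,\mathrm{diag}^{1/2}(\mathbf{h})$. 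The first preparatory step is to show that the matrices $\bm{\Psi}_j(\tau)$, defined through $\sum_{j\ge1}\bm{\Psi}_j(\tau)L^j=\mathbf{D}_\tau^{-1}(L)\mathbf{C}_\tau(L)$, have nonnegative entries and decay geometrically, uniformly in $\tau$; this follows from the nonnegativity of $\mathbf{C}_j,\mathbf{D}_j$ and the root condition in (3) via a Bougerol--Picard / companion-matrix argument, together with the compactness of $\bm{\Theta}$.

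Next I would verify Assumption \ref{Ass1}. Part 1 is exactly the moment requirement carried by condition (2) with $r>6$. Since $\bm{\mu}\equiv\mathbf{0}$ we may take $\alpha_j\equiv0$, so the heart of Assumption \ref{Ass1}.2 is the Lipschitz bound on $\mathbf{H}$. Writing $|\sqrt{a}-\sqrt{b}|=|a-b|/(\sqrt{a}+\sqrt{b})$ for the diagonal entries and using $\mathbf{h}_i(\mathbf{z})\ge(\bm{\Psi}_j(\mathbf{z}_j\odot\mathbf{z}_j))_i$ (valid because $\bm{\Psi}_j\ge\mathbf{0}$), the linear-growth factor $|\mathbf{z}_j|+|\mathbf{z}_j^\prime|$ produced by $|\mathbf{z}_j\odot\mathbf{z}_j-\mathbf{z}_j^\prime\odot\mathbf{z}_j^\prime|$ cancels against the denominator, leaving $\beta_j(\bm{\vartheta})=O(|\bm{\Psi}_j|^{1/2})$. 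With $\alpha_j\equiv0$ and this $\beta_j$, the defining inequality of $\bm{\Theta}_r$ in Assumption \ref{Ass1}.3 reduces to the standard GARCH contraction, which is precisely condition (2) once it is phrased at the level of the squared process in the $\|\cdot\|_r^2$ metric; this simultaneously delivers $\sup_\tau\|\widetilde{\mathbf{x}}_t(\tau)\|_r<\infty$.

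With these in hand the remaining assumptions are comparatively routine. Because $\beta_j$ decays geometrically, it dominates any polynomial rate, so Assumption \ref{Ass2}.2 and the strengthened decay rates demanded by Corollary \ref{proposition3.1} and Theorem \ref{Thm3.2} hold automatically; the same geometric decay supplies the summable Lipschitz sequence $\{\chi_j\}$ in Assumptions \ref{Ass2}.1 and \ref{Ass4}. For the smoothness in Assumption \ref{Ass4} I would differentiate $\mathrm{diag}^{1/2}(\mathbf{h})$ in $\bm{\vartheta}$: the positive intercept in condition (4) gives $\mathbf{h}_i\ge\underline{c}:=\min_i[\mathbf{D}_\tau^{-1}(1)\mathbf{c}_0]_i>0$ uniformly, so every factor $\mathbf{h}_i^{-1/2}$ and its derivatives stay bounded and the first two $\bm{\vartheta}$-derivatives inherit the geometric Lipschitz coefficients. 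The same lower bound yields Assumption \ref{Ass3}.1, since $\lambda_{\min}(\mathbf{H}\mathbf{H}^\top)\ge\lambda_{\min}(\bm{\Omega})\underline{c}>0$ using the positive-definiteness of the correlation matrix $\bm{\Omega}$. Assumption \ref{Ass3}.2 is the identification step: a.s.\ equality of the first two conditional moments forces equality of the intercept and of all ARCH$(\infty)$ coefficients, and coprimeness together with the minimality of the GARCH parametrization in condition (5) maps these back one-to-one to $(\mathbf{c}_0,\mathbf{C}_j,\mathbf{D}_j)$, with $\bm{\Omega}$ pinned down by the contemporaneous covariance, exactly as in \cite{jeantheau1998strong}. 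Assumption \ref{Ass5} concerns only the kernel and needs no checking.

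The main obstacle is Assumption \ref{Ass1}.2 (and its derivative analogue in Assumption \ref{Ass4}): the volatility map is not globally Lipschitz in $\mathbf{z}$ because of the squaring, so the cancellation device exploiting $\bm{\Psi}_j\ge\mathbf{0}$ (condition (3)) and the strictly positive intercept (condition (4)) is indispensable, and it is also what reconciles the generic contraction condition of Assumption \ref{Ass1}.3 with the natural GARCH moment bound (2) --- matching the two requires tracking that the GARCH contraction operates on the squared process. Once Assumptions \ref{Ass1}--\ref{Ass5} are confirmed, Theorems \ref{Thm3.1} and \ref{Thm3.2} apply and the proposition follows.
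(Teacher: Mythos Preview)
Your plan is structurally the same as the paper's: verify Assumptions~1--5 (and the strengthened decay conditions of Theorem~2.2) for the ARCH$(\infty)$ representation, using geometric decay of $\bm{\Psi}_j$, the strictly positive intercept for the lower bound on $\mathbf{H}\mathbf{H}^\top$, and Jeantheau's minimality argument for identification. On those points you are fine.

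There is, however, a real gap at the step where you write that ``the defining inequality of $\bm{\Theta}_r$ in Assumption~1.3 reduces to the standard GARCH contraction, which is precisely condition~(2).'' It does not. With $\alpha_j\equiv 0$ and the Lipschitz constant $\beta_j=O(|\bm{\Psi}_j|^{1/2})$ that your square-root trick produces, Assumption~1.3 reads $\|\bm{\Omega}^{1/2}\bm{\varepsilon}_1\|_r\sum_{j\ge 1}|\bm{\Psi}_j|^{1/2}<1$, which is strictly stronger than condition~(2), $\|\bm{\Omega}^{1/2}\bm{\varepsilon}_1\|_r^{2}\sum_{j\ge 1}|\bm{\Psi}_j|<1$. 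The paper makes exactly this distinction: it first notes that the generic verification would require the stronger inequality, and then supplies a separate, GARCH-specific argument to work under~(2). That argument does not go through Assumption~1.3 at all; it bypasses Propositions~2.1--2.2 and proves their conclusions directly for the squared process $\widetilde{\mathbf{y}}_t(\tau)=\widetilde{\mathbf{x}}_t(\tau)\odot\widetilde{\mathbf{x}}_t(\tau)$. One writes $\widetilde{\mathbf{y}}_t(\tau)=\widetilde{\mathbf{V}}_t(\tau)\bm{\alpha}(\tau)+\sum_{j\ge 1}\widetilde{\mathbf{V}}_t(\tau)\bm{\Psi}_j(\tau)\widetilde{\mathbf{y}}_{t-j}(\tau)$ with $\widetilde{\mathbf{V}}_t=\mathrm{diag}(\widetilde{\bm{\eta}}_t\odot\widetilde{\bm{\eta}}_t)$, iterates to obtain a Volterra/chaotic expansion, and uses independence of the $\widetilde{\mathbf{V}}_t$'s together with condition~(2) to get $\sup_\tau\|\widetilde{\mathbf{y}}_t(\tau)\|_{r/2}<\infty$ (hence $\sup_\tau\|\widetilde{\mathbf{x}}_t(\tau)\|_{r}<\infty$) and $\delta_r^{\widetilde{\mathbf{x}}(\tau)}(k)=O(\rho^k)$. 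Your last paragraph gestures at ``tracking that the GARCH contraction operates on the squared process,'' but without this chaotic-expansion step you have not actually shown that condition~(2) suffices; you have only shown that the stronger square-root condition would. That missing step is the substantive content of the proof.
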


For the identification conditions of the GARCH process, we refer readers to Proposition 3.4 of \cite{jeantheau1998strong}, who proves that assuming the minimal representation is enough for ensuring Assumption \ref{Ass3} holds.

In the following section, we conduct numerical studies using both simulated and real data to evaluate the finite-sample performance of the proposed estimation and inferential methods.

\section{Numerical Studies}\label{Sec3}

In this section, we first present the details of the numerical implementations in Section \ref{Sec3.1}, and then conduct extensive simulations in Section \ref{Sec3.2}. Section \ref{Sec3.3} presents a real data example on the conditional correlations between the Chinese and U.S. stock markets.

\subsection{Numerical Implementation}\label{Sec3.1}
Throughout the numerical studies, the Epanechnikov kernel $K(u) = 0.75(1-u^2)I(|u|\leq1)$ is adopted. Following \cite{zhou2010simultaneous}, we use $\widetilde{h} = 2\widehat{h}$ for the biased corrected estimator, where $\widehat{h}$ is the bandwidth selected by the cross-validation method of \cite{richter2019cross}. 

Specifically, define the leave-one-out local linear QMLE
\begin{equation}\label{Eq5.1}
	(\widehat{\bm{\theta}}_{h,-t}(\tau),h\widehat{\bm{\theta}}_{h,-t}^{(1)}(\tau)) = \argmax_{(\bm{\eta}_1,\bm{\eta}_2)\in\mathbf{E}_T(r)}\mathcal{L}_{T,-t}^c(\tau,\bm{\eta}_1,\bm{\eta}_2),
\end{equation}
where 
$$
\mathcal{L}_{T,-t}^c(\tau,\bm{\eta}_1,\bm{\eta}_2)=\frac{1}{T}\sum_{s=1,\neq t}^{T}\mathcal{l}(\mathbf{x}_s,\mathbf{z}_{s-1}^c;\bm{\eta}_1+\bm{\eta}_2\cdot(\tau_s-\tau)/h)K_h(\tau_s-\tau).
$$
Then, the bandwidth is chosen by 
\begin{equation}\label{Eq5.2}
	\widehat{h} = \argmax_{h}T^{-1}\sum_{t=1}^{T}\mathcal{l}(\mathbf{x}_t,\mathbf{z}_{t-1}^c;\widehat{\bm{\theta}}_{h,-t}(\tau_t)).
\end{equation}
As shown in \cite{richter2019cross}, this cross validation method works well as long as $\gradient\mathcal{l}$ is uncorrelated, which implies that this desirable property should hold in our case.

Notably, when considering some specific models, the implementation may be further simplified. We provide more discussions along this line in Appendix B.4.

\subsection{Simulation Results}\label{Sec3.2}

In the simulation studies, we examine the empirical coverage probabilities of simultaneous confidence intervals for nominal levels $\alpha =90\%,\  95\%$. We consider the time-varying VARMA($2,1$) and multivariate GARCH($1,1$) model as follows:
\begin{enumerate}
\item $\text{DGP 1}: \mathbf{x}_t = a_1(\tau_t)\mathbf{x}_{t-1}+a_2(\tau_t)\mathbf{x}_{t-2} + \bm{\eta}_t + \mathbf{B}_1(\tau_t)\bm{\eta}_{t-1},\quad \bm{\eta}_t = \bm{\omega}(\tau_t) \bm{\varepsilon}_t$, where $\{\bm{\varepsilon}_t\}$ are i.i.d. draws from $N(\mathbf{0}_{2\times 1},\mathbf{I}_2)$, $a_1(\tau) = 0.6\exp(\tau-1)$, $a_2(\tau) = -0.3\exp(\tau-1)$,
\begin{eqnarray*}
	\mathbf{B}_1(\tau)&=&\left[\begin{matrix}
		0.5\exp{\tau-0.5} & -0.8(\tau-0.5)^2 \\
		-0.8(\tau-0.5)^2 & 0.5+0.3\sin(\pi \tau)
	\end{matrix} \right], \nonumber \\
	\bm{\omega}(\tau)&=&\left[\begin{matrix}
		1.5+0.2\exp{0.5-\tau}& 0 \\
		0.2\exp{0.5-\tau}   & 1.5+0.5(\tau-0.5)^2
	\end{matrix}\right].
\end{eqnarray*}
Here we use final equations form to ensure the uniqueness of the VARMA representation.

\item $\text{DGP 2}: \mathbf{x}_t = \mathrm{diag}(h_{1,t}^{1/2},\ldots,h_{m,t}^{1/2}) \bm{\eta}_t$, where $\bm{\eta}_t = \bm{\Omega}^{1/2}(\tau_t) \bm{\varepsilon}_t$, $\mathbf{h}_{t} = \mathbf{c}_0(\tau_t) + \mathbf{C}_1(\tau_t) \left(\mathbf{x}_{t-1}\odot\mathbf{x}_{t-1}\right) + \mathbf{D}_1(\tau_t)\mathbf{h}_{t-1}$, $\{\bm{\varepsilon}_t\}$ are i.i.d. draws from $N(\mathbf{0}_{2\times 1},\mathbf{I}_2)$, $\mathbf{c}_0(\tau) = [2\exp{0.5\tau-0.5}, 3+0.2\cos(\tau)]^\top$,
\begin{eqnarray*}
\mathbf{C}_1(\tau)&=&\left[\begin{matrix}
		0.4 + 0.05\cos(\tau) & 0.05(\tau-0.5)^2 \\
		0.05(\tau-0.5)^2 & 0.4+0.05\sin(\tau)
\end{matrix} \right], \nonumber \\
\mathbf{D}_1(\tau)&=&\left[\begin{matrix}
	0.4 - 0.1\cos(\tau) & 0 \\
	0 & 0.3 - 0.1\sin(\tau)
\end{matrix} \right], \nonumber \\
\bm{\Omega}(\tau)&=&\left[\begin{matrix}
		1 & 0.3\sin(\tau) \\
		0.3\sin(\tau)   & 1 
\end{matrix}\right].
\end{eqnarray*}
\end{enumerate}

Let the sample size be $T \in\{500,1000\}$ ($T \in\{1000,2000,4000\}$) for the VARMA model (the GARCH model). We conduct $1000$ replications for each choice of $T$. Several different bandwidths close to $\widetilde{h}$ are reported to check the sensitivity of bandwidth selection. 

We present the empirical coverage probabilities associated with the SCB
in Tables \ref{table_sim1}--\ref{table_sim2}. For the vector- or matrix-valued unknown coefficients, we take an average across the elements. A few facts emerge from the tables. First, the finite sample coverage probabilities are smaller than their nominal level when $T = 500$ ($T = 1000,2000$) for the VARMA model (the GARCH model), but are fairly close to their nominal level as $T = 1000$ ($T=4000$) for the VARMA model (the GARCH model). Second, the behaviour of the estimated simultaneous confidence intervals is not sensitive to the choices of bandwidths. Third, the GARCH model requires more data to reach a reasonable finite sample performance.

\begin{table}[h]
\caption{Empirical Coverage Probabilities of the SCB  for DGP 1}\label{table_sim1}
\begin{center}
\begin{tabular}{c c c cccc c cccc}
\hline
& & &\multicolumn{4}{c}{$90\%$}& &\multicolumn{4}{c}{$95\%$}\\
\cline{4-7} \cline{9-12}
&\text{$\widetilde{h}$}& &$\alpha_1(\cdot)$ &$\alpha_2(\cdot)$ &$\mathbf{B}_1(\cdot)$ & $\bm{\Omega}(\cdot)$ &   &$\alpha_1(\cdot)$ &$\alpha_2(\cdot)$ &$\mathbf{B}_1(\cdot)$ & $\bm{\Omega}(\cdot)$\\
\hline
\multirow{4}{*}{\shortstack{$T=500$}}
&$0.35$ & &  0.845 & 0.877  & 0.821&0.847 &  &  0.905 & 0.915  & 0.889 &0.905\\
&$0.4$ & &  0.865& 0.875  & 0.847 &0.876 &  &  0.912 & 0.930 & 0.897& 0.909\\
&$0.45$ & & 0.862  & 0.895  & 0.847 &0.878 &  &0.915   &0.930    & 0.898  &0.919\\
&$0.5$ & & 0.875   &0.895    & 0.847 &0.876 &  &0.905    &0.945    & 0.901  &0.920\\
\hline
\multirow{4}{*}{\shortstack{$T=1000$}}
&$0.3$ & & 0.895  & 0.925  & 0.887& 0.884&  & 0.960  & 0.960 & 0.947& 0.947\\
&$0.35$ & & 0.910 & 0.927  & 0.886&0.890&  & 0.940  & 0.967  & 0.940&0.930 \\
&$0.4$ & & 0.917  & 0.939  & 0.901&0.899 &  & 0.947  & 0.959 & 0.948 &0.939\\
&$0.45$ & &0.937  & 0.932   &0.908  &0.895 &  &0.957    &0.957   &0.947 &0.937 \\
\hline
\end{tabular}
\end{center}
\end{table}

\begin{table}[h]
	\caption{Empirical Coverage Probabilities of the SCB  for DGP 2}\label{table_sim2}
	\begin{center}
		\begin{tabular}{c c c cccc c cccc}
			\hline
			& & &\multicolumn{4}{c}{$90\%$}& &\multicolumn{4}{c}{$95\%$}\\
			\cline{4-7} \cline{9-12}
			&\text{$\widetilde{h}$}& &$\mathbf{c}_0(\cdot)$ &$\mathbf{C}_1(\cdot)$ &$\mathbf{D}_1(\cdot)$ & $\bm{\Omega}(\cdot)$ &   &$\mathbf{c}_0(\cdot)$ &$\mathbf{C}_1(\cdot)$ &$\mathbf{D}_1(\cdot)$ & $\bm{\Omega}(\cdot)$\\
			\hline
			\multirow{4}{*}{\shortstack{$T=1000$}}
			&$0.55$ & &  0.802& 0.810  & 0.784 &0.889 &  &  0.869 & 0.876 & 0.838 & 0.945\\
			&$0.60$ & & 0.824  & 0.820  & 0.791 &0.879 &  &0.882   &0.866    & 0.843  &0.945\\
			&$0.65$ & & 0.832   &0.820    & 0.796 &0.874 &  &0.889    &0.872    & 0.859  &0.945\\
			&$0.70$ & & 0.820  &0.823    & 0.792 &0.879 &  &0.892    &0.881    & 0.871  &0.950\\
			\hline
			\multirow{4}{*}{\shortstack{$T=2000$}}
			&$0.50$ & &  0.827 & 0.835  & 0.841 &0.889 &  &  0.897 & 0.881  & 0.901&0.950\\
			&$0.55$ & &  0.829 & 0.825  & 0.843 &0.884 &  &  0.892 & 0.881 & 0.903 & 0.940 \\
			&$0.60$ & & 0.849  & 0.833  & 0.871 &0.900 &  &0.900   &0.888    & 0.910  &0.950\\
			&$0.65$ & & 0.852   &0.835    & 0.873 &0.910 &  &0.907    &0.889    & 0.910  &0.950\\
			\hline
			\multirow{4}{*}{\shortstack{$T=4000$}}
			&$0.35$ & & 0.879 & 0.879  & 0.882 &0.869&  & 0.929  & 0.932  & 0.943 &0.920 \\
			&$0.4$ & & 0.899  & 0.879 & 0.882 &0.859 &  & 0.950  & 0.944 & 0.943 &0.919\\
			&$0.45$ & &0.904 & 0.899   &0.879  &0.838 &  &0.950    &0.947   &0.946 &0.950 \\
			&$0.50$ & &0.867 & 0.857   &0.884  &0.898 &  &0.929    &0.944   &0.946 &0.960 \\
			\hline
		\end{tabular}
	\end{center}
\end{table}

\subsection{A Real Data Example}\label{Sec3.3}

In this subsection, we investigate the time-varying conditional correlations between the Chinese and U.S. stock markets using the time-varying multivariate GARCH model. Recently, there is a growing literature to study the relationship of the two stock markets (e.g., \citealp{zhang2014has,pan2022modeling}), as the Chinese stock market has become the world's second largest stock market after 2009.  Understanding the interactions among different financial markets is important for investors and policymakers \cite[]{diebold2009measuring,bensaida2019good}. For example, high equity market interdependence implies poor diversification benefits from portfolios, but highlights the possibility of better hedging benefits.

Previous research documents a strong positive link between the degree of globalization and equity market interdependence \cite[]{baele2005volatility}. Along this line of research, one important question is that whether the interdependence between the Chinese and U.S. stock markets has increased over time due to globalization so that estimates from historical data are unreliable for modern policy analysis, asset pricing and risk management. The existing results present many discrepancies, which may be due to the fact that the relationship evolves with time. Apparently, the results also indicate that one should use time-varying GARCH model to accommodate potential nonstationarity inherited in these financial variables. In addition, as pointed out by \cite{caporin2013ten}, dynamic conditional correlation (DCC) GARCH model represents the dynamic conditional covariances of the standardized residuals, and hence does not yield dynamic
conditional correlations; DCC yields inconsistent two step estimators; DCC has no asymptotic properties. In what follows, we address these issues using the newly proposed approach. The estimation is conducted in exactly the same way as in Section \ref{Sec3.1}, so we no longer repeat the details.

We calculate the Chinese and U.S. stock returns based on weekly Shanghai Stock Exchange (SSE) Composite Index and S\&P 500 Index as they are the most comprehensive and diversified stock indices. The sample employed in this study spanning from January 2000 to February 2022 provides $1119$ observations\footnote{The data are collected from Yahoo Finance at \url{https://finance.yahoo.com/}.}. Figure \ref{Fg1} plots the two weekly returns as well as sample autocorrelation functions of squared data, which shows the typical ``volatility clustering'' phenomenon.

\begin{figure}[h]
	\centering
	{\includegraphics[width=16cm]{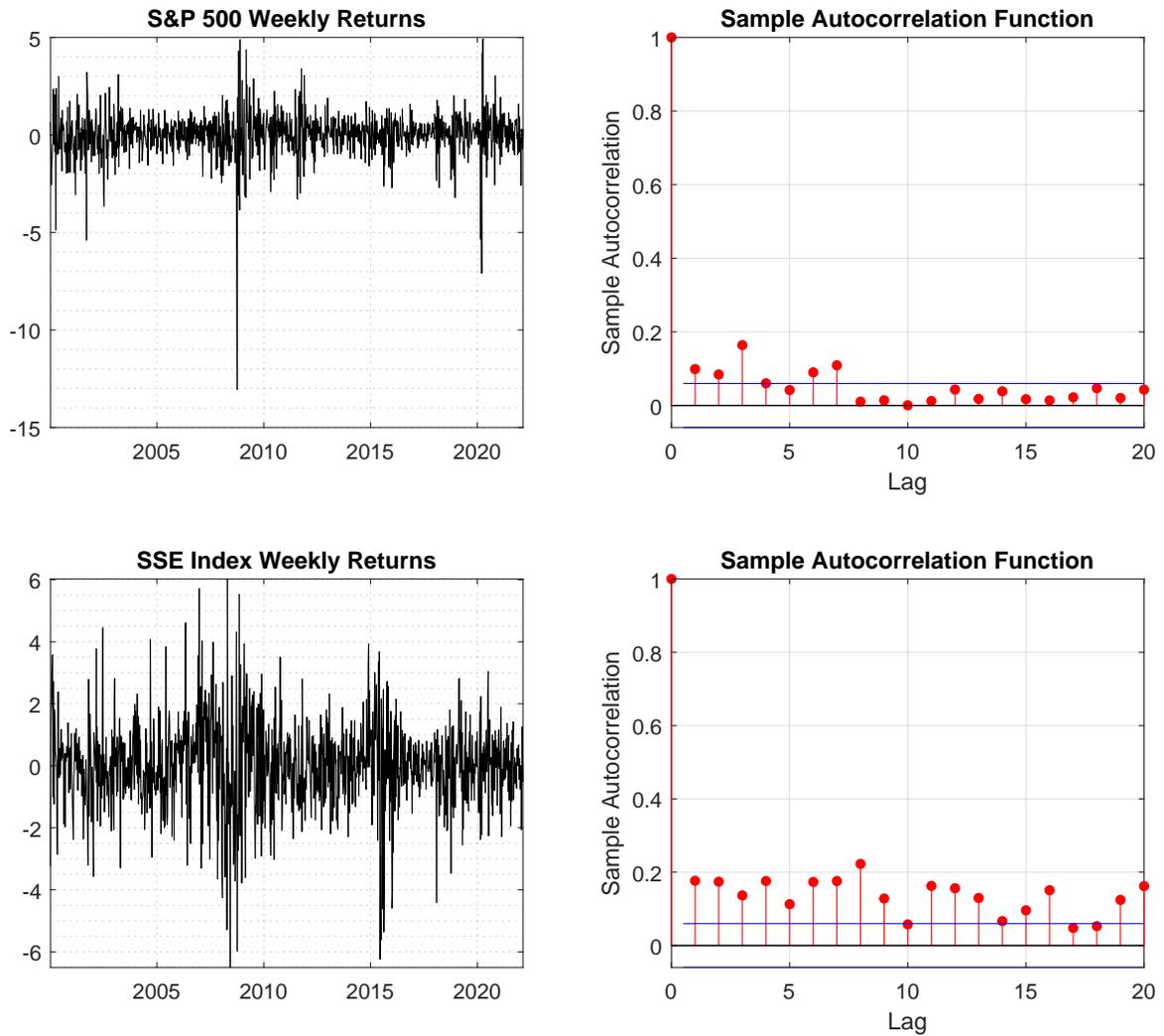}}
	\caption{\small S\&P 500 and SSE Index returns as well as sample autocorrelation functions of squared data}\label{Fg1}
\end{figure}

We next fit the data to a time-varying multivariate GARCH(1,1) model and are particularly interested in the estimates of time-varying conditional correlations, i.e.,
$$
E\left(x_{1,t}x_{2,t}\mid \mathcal{F}_{t-1}\right)/\sqrt{E\left(x_{1,t}^2\mid \mathcal{F}_{t-1}\right)E\left(x_{2,t}^2\mid \mathcal{F}_{t-1}\right)} = \rho_{1,2}(\tau_t),
$$
where $\rho_{1,2}(\cdot)$ is defined in \eqref{GARCH_rho}. Figure \ref{Fg2} plots the estimates (black solid line) of time-varying conditional correlations between the two stock markets as well as 95\% simultaneous confidence intervals (red dashed line) and 95\% pointwise confidence intervals (black dashed line). Based on the simultaneous confidence intervals, apparently, the conditional correlations vary with respect to time. Moreover, as clearly presented in Figure \ref{Fg2}, the interdependence between the two stock markets is increasing over time. By examining  the pointwise confidence intervals, we can conclude that the two stock markets are not significantly correlated before 2005, but the relationship has been greatly enhanced in recent years. These results have important implications for investment and risk management. For example, it implies that the Chinese and U.S. investors who use cross-country portfolio strategies to eliminate country specific risks may be benefit from hedging. However, all types of investors should be cautious since the relations between the Chinese and U.S. stock markets are time-varying.

\begin{figure}[h]
	\centering
	{\includegraphics[width=14cm]{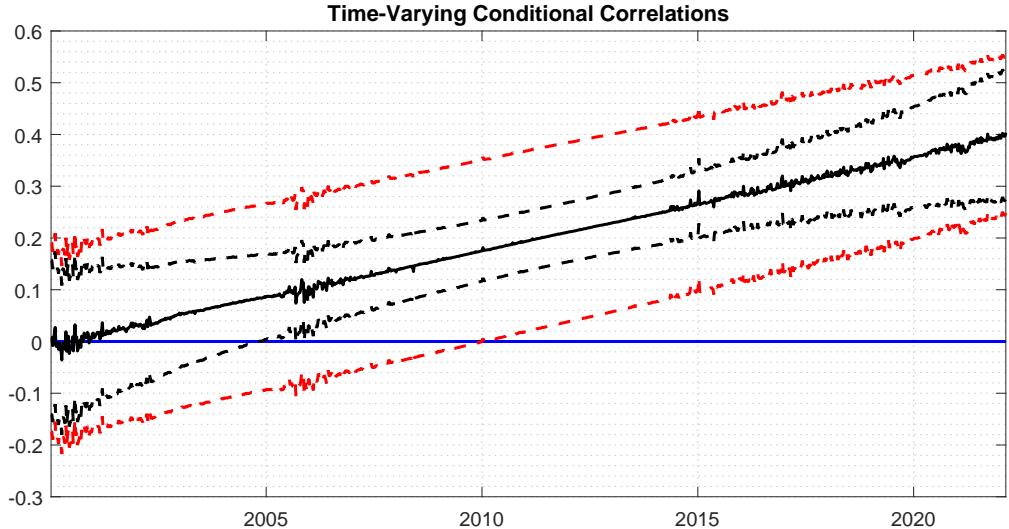}}
	\caption{\small Time-varying conditional correlations between the Chinese and U.S. stock markets}\label{Fg2}
\end{figure}

\section{Conclusions}\label{Sec4}

In this paper,  we consider a wide class of time-varying multivariate causal processes which nests many classic and new examples as special cases. We first  prove the existence of a weakly dependent stationary approximation for the model \eqref{Eq2.1} which is the foundation to establish the corresponding asymptotic properties. Afterwards, we consider the QMLE estimation approach, and provide both point-wise and simultaneous inferences on the coefficient functions. In addition, we demonstrate the theoretical findings through both simulated and real data examples. In particular, we show the empirical relevance of our study using an application to evaluate the conditional correlations between the stock markets of China and U.S. We find that the interdependence between the two stock markets is increasing over time.

There are several directions for possible extensions. The first one is to consider quantile regression methods for such locally stationary multivariate causal processes. The second one is to propose a more powerful $L_2$ test based on the weighted integrated squared errors for testing whether some coefficients are time-invariant. We wish to leave such issues for future study.

\section{Acknowledgements}

The authors of this paper would like to thank George Athanasopoulos, David Frazier and Gael Martin for their constructive comments on earlier versions of this paper. Thanks also go to seminar participants for their insightful suggestions. Gao and Peng would also like to acknowledge the Australian Research Council Discovery Projects Program for its financial support under Grant Numbers: DP170104421 \& DP210100476.
 
 {\footnotesize
 
\bibliography{Bibliography-MM-MC}

}

\bigskip

{\small

\begin{center}
	{\large \textbf{Online Supplementary Appendices to \\``Time-Varying Multivariate Causal Processes"}}
	
	\bigskip
	
	{\sc Jiti Gao$^\ast$ and Bin Peng$^{\ast}$ and Wei Biao Wu$^{\dag}$ and Yayi Yan$^{\ast}$}
	
	\medskip
	
	$^\ast$Monash University and $^{\dag}$University of Chicago
	
	\medskip
	
	\today
\end{center}

The file includes Appendix A and Appendix B. We first present some technical tools in Appendix \ref{App.A1}, which will be repeatedly used in the development. We then provide the proofs of main results in Appendix \ref{App.A3}.  We provide several preliminary lemmas in Appendix \ref{AppB.1} as well as some secondary lemmas in Appendix \ref{AppB.2}, and then present the proofs of preliminary lemmas in Appendix \ref{AppB.3}. Appendix \ref{AppB.4} discusses several computational issues of the local linear ML estimation. 

In what follows, $M$ and $O(1)$ always stand for some bounded constants, and may be different at each appearance. 

{\small
	
	\section*{Appendix A}\label{App.A}
	
	\renewcommand{\theequation}{A.\arabic{equation}}
	\renewcommand{\thesection}{A.\arabic{section}}
	\renewcommand{\thefigure}{A.\arabic{figure}}
	\renewcommand{\thetable}{A.\arabic{table}}
	\renewcommand{\thelemma}{A.\arabic{lemma}}
	\renewcommand{\theassumption}{A.\arabic{assumption}}
	\renewcommand{\thetheorem}{A.\arabic{theorem}}
	\renewcommand{\theproposition}{A.\arabic{proposition}}
	
	\setcounter{equation}{0}
	\setcounter{lemma}{0}
	\setcounter{section}{0}
	\setcounter{table}{0}
	\setcounter{figure}{0}
	\setcounter{assumption}{0}
	\setcounter{proposition}{0}
	\numberwithin{equation}{section}

	\section{Technical Tools}\label{App.A1}
	
	\textbf{Projection Operator:}
	Define the projection operator  
	\begin{eqnarray*}
		\mathcal{P}_t(\cdot) = E[\cdot \mid \mathcal{F}_{t}] - E[\cdot \mid \mathcal{F}_{t-1}],
	\end{eqnarray*}
	where $\mathcal{F}_{t} = \sigma(\bm{\varepsilon}_t, \bm{\varepsilon}_{t-1},\ldots)$.
	By the Jensen's inequality and the stationarity of $\widetilde{\mathbf{x}}_t(\tau)$, for $l\geq 0$, we have
	\begin{eqnarray*}
		\|\mathcal{P}_{t-l}(\widetilde{\mathbf{x}}_t(\tau))\|_{r}&=&\| E [\widetilde{\mathbf{x}}_t(\tau)\mid\mathcal{F}_{t-l}]-E[\widetilde{\mathbf{x}}_t(\tau)\mid\mathcal{F}_{t-l-1}]\|_{r}\nonumber\\
		&=&\|E[\widetilde{\mathbf{x}}_t(\tau) \mid \mathcal{F}_{t-l}]-E[ \widetilde{\mathbf{x}}_t^{(t-l,*)}(\tau) \mid \mathcal{F}_{t-l-1}]\|_{r}\nonumber\\
		&=&\|E[\widetilde{\mathbf{x}}_t(\tau)-\widetilde{\mathbf{x}}_t^{(t-l,*)}(\tau)\mid\mathcal{F}_{t-l}]\|_{r}\nonumber\\
		&\leq& \|\widetilde{\mathbf{x}}_t(\tau)-\widetilde{\mathbf{x}}_t^{(t-l,*)}(\tau)\|_{r} = \delta_{r}^{\mathbf{x}(\tau)}(l),
	\end{eqnarray*}
	where $\widetilde{\mathbf{x}}_t^{(t-l,*)}(\tau)$ is a coupled version of $\widetilde{\mathbf{x}}_t(\tau)$ with $\varepsilon_{t-l}$ replaced by $\varepsilon_{t-l}^*$. 
	
	\medskip
	
	\noindent \textbf{The Class $\mathcal{H}(C,\bm{\chi},M)$:}
	
	Recall that we have defined $\bm{\Theta}_r$ in Assumption1. Let $\bm{\chi} = \{\chi_j\}_{j=1}^{\infty}$ be a sequence of nonnegative real numbers with $|\bm{\chi}|_1:=\sum_{j=1}^{\infty}\chi_j < \infty$ and $M > 0$ be some finite constant. Let $|\mathbf{z}|_{\bm{\chi}} := \sum_{j=1}^{\infty}\chi_j|\mathbf{z}_j|$ for any $\mathbf{z} \in (\mathbb{R}^{m})^{\infty}$ and $C\geq 1$, where $\mathbf{z}_j$ is the $j^{th}$ column of $\mathbf{z}$. A function $g(\mathbf{z},\bm{\vartheta}):(\mathbb{R}^m)^{\infty}\times \bm{\Theta}_r\to \mathbb{R}$ is in class $\mathcal{H}(C,\bm{\chi},M)$ if  
	\begin{eqnarray*}
		&&\sup_{\bm{\vartheta} \in \bm{\Theta}_r}|g(\mathbf{0},\bm{\vartheta})|\leq M,\\
		&&\sup_{\mathbf{z}}\sup_{\bm{\vartheta}\neq\bm{\vartheta}^\prime}\frac{|g(\mathbf{z},\bm{\vartheta})-g(\mathbf{z},\bm{\vartheta}^\prime)|}{|\bm{\vartheta}-\bm{\vartheta}^\prime|(1+|\mathbf{z}|_{\bm{\chi}}^C )}\leq M,\\
		&&\sup_{\bm{\vartheta}}\sup_{\mathbf{z}\neq\mathbf{z}^\prime}\frac{|g(\mathbf{z},\bm{\vartheta})-g(\mathbf{z}^\prime,\bm{\vartheta})|}{|\mathbf{z}-\mathbf{z}^\prime|_{\bm{\chi}}(1+|\mathbf{z}|_{\bm{\chi}}^{C-1}+|\mathbf{z}^\prime|_{\bm{\chi}}^{C-1} )}\leq M.
	\end{eqnarray*}
	
	If $g$ is vector- or matrix-valued, $g \in \mathcal{H}(C,\bm{\chi},M)$ means that every component of $g$ is in $\mathcal{H}(C,\bm{\chi},M)$.
	
	\medskip
	
	\noindent \textbf{Analytical Gradient:}
	
	Let 
	\begin{eqnarray*}
		\mathcal{l}(\mathbf{x},\mathbf{z};\bm{\vartheta})=-\frac{1}{2}(\mathbf{x}- \bm{\mu}(\mathbf{z};\bm{\vartheta}))^\top\mathbf{M}^{-1}(\mathbf{z};\bm{\vartheta})(\mathbf{x} - \bm{\mu}(\mathbf{z};\bm{\vartheta}))-\frac{1}{2}\log\det\left(\mathbf{M}(\mathbf{z};\bm{\vartheta})\right),
	\end{eqnarray*}
	where $\mathbf{M}(\mathbf{z};\bm{\vartheta})=\mathbf{H}(\mathbf{z};\bm{\vartheta})\mathbf{H}(\mathbf{z};\bm{\vartheta})^\top$. Then the first partial derivative is as follows:
	\begin{eqnarray}\label{EqA.1}
		\frac{\partial\mathcal{l}(\mathbf{x},\mathbf{z};\bm{\vartheta})}{\partial \vartheta_i} &=& (\mathbf{x}- \bm{\mu}(\mathbf{z};\bm{\vartheta}))^\top\mathbf{M}^{-1}(\mathbf{z};\bm{\vartheta}) \frac{\partial\bm{\mu}(\mathbf{z};\bm{\vartheta})}{\partial\vartheta_i} \nonumber\\
		&&-\frac{1}{2}(\mathbf{x}- \bm{\mu}(\mathbf{z};\bm{\vartheta}))^\top\frac{\partial\mathbf{M}^{-1}(\mathbf{z};\bm{\vartheta})}{\partial\vartheta_i}(\mathbf{x} - \bm{\mu}(\mathbf{z};\bm{\vartheta}))\nonumber\\
		&&-\frac{1}{2}\mathrm{tr}\left(\mathbf{M}^{-1}(\mathbf{z};\bm{\vartheta})\frac{\partial \mathbf{M}(\mathbf{z};\bm{\vartheta})}{\partial \vartheta_i} \right),
	\end{eqnarray}
	where $\vartheta_i$ is the $i^{th}$ element of $\bm{\vartheta}$.
	
	By \eqref{EqA.1}, the second partial derivative of $\mathcal{l}(\mathbf{x},\mathbf{z};\bm{\vartheta})$ is given by
	\begin{eqnarray}\label{EqA.2}
		\frac{\partial^2\mathcal{l}(\mathbf{x},\mathbf{z};\bm{\vartheta})}{\partial \vartheta_i \partial \vartheta_j}&=& (\mathbf{x}- \bm{\mu}(\mathbf{z};\bm{\vartheta}))^\top\mathbf{M}^{-1}(\mathbf{z};\bm{\vartheta}) \frac{\partial^2\bm{\mu}(\mathbf{z};\bm{\vartheta})}{\partial\vartheta_i\partial\vartheta_j} \nonumber\\
		&&-\frac{1}{2}(\mathbf{x}- \bm{\mu}(\mathbf{z};\bm{\vartheta}))^\top\frac{\partial^2\mathbf{M}^{-1}(\mathbf{z};\bm{\vartheta})}{\partial\vartheta_i\partial\vartheta_j}(\mathbf{x} - \bm{\mu}(\mathbf{z};\bm{\vartheta})) \nonumber\\
		&&+(\mathbf{x}- \bm{\mu}(\mathbf{z};\bm{\vartheta}))^\top\left( \frac{\partial\mathbf{M}^{-1}(\mathbf{z};\bm{\vartheta})}{\partial\vartheta_i}\frac{\partial\bm{\mu}(\mathbf{z};\bm{\vartheta})}{\partial\vartheta_j}+\frac{\partial\mathbf{M}^{-1}(\mathbf{z};\bm{\vartheta})}{\partial\vartheta_j}\frac{\partial\bm{\mu}(\mathbf{z};\bm{\vartheta})}{\partial\vartheta_i}\right)\nonumber\\
		&&-\left(\frac{\partial\bm{\mu}(\mathbf{z};\bm{\vartheta})}{\partial\vartheta_j}\right)^\top\mathbf{M}^{-1}(\mathbf{z};\bm{\vartheta}) \frac{\partial\bm{\mu}(\mathbf{z};\bm{\vartheta})}{\partial\vartheta_i}\nonumber\\
		&&-\frac{1}{2}\mathrm{tr}\left(\mathbf{M}^{-1}(\mathbf{z};\bm{\vartheta})\frac{\partial^2 \mathbf{M}(\mathbf{z};\bm{\vartheta})}{\partial \vartheta_i\partial \vartheta_j} \right)-\frac{1}{2}\mathrm{tr}\left(\frac{\partial\mathbf{M}^{-1}(\mathbf{z};\bm{\vartheta})}{\partial \vartheta_j}\frac{\partial \mathbf{M}(\mathbf{z};\bm{\vartheta})}{\partial \vartheta_i} \right).
	\end{eqnarray}

	\section{Proofs of the Main Results}\label{App.A3}
	
	\begin{proof}[Proof of Proposition 2.1]
		\item
		\noindent (1). To prove the first result, we consider an approximated $p$-Markov process defined by
		\begin{eqnarray}\label{EqB.1}
			\widetilde{\mathbf{x}}_{p,t}(\tau) &=& \bm{\mu}\left(\widetilde{\mathbf{x}}_{p,t-1}(\tau),\ldots,\widetilde{\mathbf{x}}_{p,t-p}(\tau),\mathbf{0},\ldots;\bm{\theta}(\tau)\right)\nonumber \\
			&& + \mathbf{H}\left(\widetilde{\mathbf{x}}_{p,t-1}(\tau),\ldots,\widetilde{\mathbf{x}}_{p,t-p}(\tau),\mathbf{0},\ldots;\bm{\theta}(\tau)\right)\bm{\varepsilon}_t
		\end{eqnarray}
		for $p \geq 1$, and 
		\begin{eqnarray}
			\widetilde{\mathbf{x}}_{0,t}(\tau) = \bm{\mu}\left(\mathbf{0},\ldots;\bm{\theta}(\tau)\right) + \mathbf{H}\left(\mathbf{0},\ldots;\bm{\theta}(\tau)\right)\bm{\varepsilon}_t.
		\end{eqnarray}	
		Let $\mu_{p,r}(\tau)= \left\|\widetilde{\mathbf{x}}_{p,t}(\tau)\right\|_r$ and $\Delta_{p,r}(\tau) = \left\|\widetilde{\mathbf{x}}_{p+1,t}(\tau)-\widetilde{\mathbf{x}}_{p,t}(\tau)\right\|_r$. 
		
		By construction, we immediately obtain
		\begin{eqnarray*}
			\mu_{p,r}(\tau)&\leq& \|\widetilde{\mathbf{x}}_{p,t}(\tau)-\widetilde{\mathbf{x}}_{0,t}(\tau)\|_r + \mu_{0,r}(\tau) \\
			&\leq& \left(\sum_{j=1}^{p}\alpha_j(\bm{\theta}(\tau)) + \|\bm{\varepsilon}_1\|_r \sum_{j=1}^{p}\beta_j(\bm{\theta}(\tau))\right)\mu_{p,r}(\tau) + \mu_{0,r}(\tau),
		\end{eqnarray*}
		where the second inequality follows from Assumption 1.
		
		Recall that we have defined $\rho(\tau):=\sum_{j=1}^{\infty}\alpha_j(\bm{\theta}(\tau)) + \|\bm{\varepsilon}_t\|_r \sum_{j=1}^{\infty}\beta_j(\bm{\theta}(\tau))$ in the body of this proposition. As $0\leq \rho(\tau) < 1$ by Assumption 1, we have 
		\begin{eqnarray*}
			\sup_{p\geq 0}\mu_{p,r}(\tau) \leq (1-\rho(\tau))^{-1}\mu_{0,r}(\tau)<\infty.
		\end{eqnarray*}
		Similarly, we have
		\begin{eqnarray*}
			\Delta_{p,r}(\tau)&\leq& \left(\sum_{j=1}^{p}\alpha_j(\bm{\theta}(\tau)) + \|\bm{\varepsilon}_1\|_r \sum_{j=1}^{p}\beta_j(\bm{\theta}(\tau))\right)\Delta_{p,r}(\tau)\\
			&&+\left(\alpha_{p+1}(\bm{\theta}(\tau))+\|\bm{\varepsilon}_1\|_r\beta_{p+1}(\bm{\theta}(\tau))\right)\left\|\widetilde{\mathbf{x}}_{p+1,t-p-1}(\tau)\right\|_r.
		\end{eqnarray*}
		Hence,  
		\begin{eqnarray*}
			\Delta_{p,r}(\tau)\leq\left(\alpha_{p+1}(\bm{\theta}(\tau))+\|\bm{\varepsilon}_1\|_r\beta_{p+1}(\bm{\theta}(\tau))\right)\left(1-\rho(\tau)\right)^{-2}\mu_{0,r}(\tau) \to 0
		\end{eqnarray*}	
		as $p\to \infty$. 
		
		According to the above development, we are readily to conclude that $\widetilde{\mathbf{x}}_{p,t}(\tau) \to \widetilde{\mathbf{x}}_{t}(\tau)$ as $p \to \infty$ in the space of $\mathbb{L}^r =\{\mathbf{x}\mid \|\mathbf{x} \|_r<\infty \}$. As a limit of strictly stationary process in $\mathbb{L}^r$, $\widetilde{\mathbf{x}}_{t}(\tau)$ is a stationary process and $\sup_{\tau\in[0,1]}\left\|\widetilde{\mathbf{x}}_{t}(\tau)\right\|_r < \infty$.
		
		\medskip
		
		\noindent (2). Let $\{\bm{\varepsilon}_t^*\}$ be an independent copy of $\{\bm{\varepsilon}_t\}$. Similar to \eqref{EqB.1}, we define the process $\{\widetilde{\mathbf{x}}_{p,t}^*(\tau)\}$, in which the difference is that we use $\bm{\varepsilon}_t$ when $t\ne 0$, and use $\bm{\varepsilon}_t^*$ when $t= 0$.  In addition, define the process $\{\widetilde{\mathbf{x}}_t^*(\tau)\}$ as $\{ \widetilde{\mathbf{x}}_t(\tau)\}$,  in which again the difference is that we use $\bm{\varepsilon}_t$ when $t\ne 0$, and use $\bm{\varepsilon}_t^*$ when $t= 0$. Further define $u_t = \left\|\widetilde{\mathbf{x}}_{p,t}^*(\tau) - \widetilde{\mathbf{x}}_{p,t}(\tau)\right\|_r$.

		By construction, $u_t = 0$ for $t<0$, and $u_0 = \left\|\widetilde{\mathbf{x}}_{p,0}^*(\tau) - \widetilde{\mathbf{x}}_{p,0}(\tau)\right\|_r = O\left(\left\|\bm{\varepsilon}_0^* - \bm{\varepsilon}_0\right\|_r\right)=O(1)$. For $t > 0 $, Assumption 1 gives that
		\begin{eqnarray}\label{EqB.2}
			\left\|\widetilde{\mathbf{x}}_{p,t}(\tau) - \widetilde{\mathbf{x}}_{p,t}^*(\tau) \right\|_r\leq\sum_{j=1}^{p}\left(\alpha_j(\bm{\theta}(\tau)) + \|\bm{\varepsilon}_t\|_r \beta_j(\bm{\theta}(\tau))\right) \left\|\widetilde{\mathbf{x}}_{p,t-j}(\tau) - \widetilde{\mathbf{x}}_{p,t-j}^*(\tau)\right\|_r.
		\end{eqnarray}
		Since $ \sum_{j=1}^{p}\left(\alpha_j(\bm{\theta}(\tau)) + \|\bm{\varepsilon}_t\|_r \beta_j(\bm{\theta}(\tau))\right) \leq \rho(\tau)< 1$, by a recursion argument, we have $u_t\leq u_0$ for all $t$. 
		
		Now, let $v_t = \max_{k\geq t}u_k$. Using \eqref{EqB.2} and the fact that $v_t$ is a nonincreasing sequence, we have $v_t\leq \rho(\tau)v_{t-p}$ for all $t\geq 1$. Then recursively $v_t\leq \rho(\tau)^{-\lfloor-t/p\rfloor}v_{t+p\lfloor-t/p\rfloor}$. Since $v_{t+p\lfloor-t/p\rfloor}\leq u_0$ and $-\lfloor-t/p\rfloor\geq t/p$, we have $u_t\leq v_t\leq \rho(\tau)^{t/p}u_0$, i.e., $\left\|\widetilde{\mathbf{x}}_{p,t}(\tau) - \widetilde{\mathbf{x}}_{p,t}^*(\tau) \right\|_r = O(\rho(\tau)^{t/p})$.
		
		The proof of the first result gives
		\begin{eqnarray*}
			\left\|\widetilde{\mathbf{x}}_{t}(\tau) -\widetilde{\mathbf{x}}_{p,t}(\tau) \right\|_r \leq \sum_{j=p}^\infty \Delta_{p,r} \leq \frac{\mu_r(\tau)}{(1-\rho(\tau))^2}\sum_{j=p}^\infty\left(\alpha_{p+1}(\bm{\theta}(\tau))+\|\bm{\varepsilon}_t\|_r\beta_{p+1}(\bm{\theta}(\tau))\right).
		\end{eqnarray*}
		The same bound holds for the quantity $\|\widetilde{\mathbf{x}}_{t}^*(\tau) -\widetilde{\mathbf{x}}_{p,t}^*(\tau) \|_r$. Thus,
		\begin{eqnarray*}
			\|\widetilde{\mathbf{x}}_{t}(\tau) -\widetilde{\mathbf{x}}_{t}^*(\tau) \|_r &\leq& \|\widetilde{\mathbf{x}}_{t}(\tau) -\widetilde{\mathbf{x}}_{p,t}(\tau) \|_r  + \|\widetilde{\mathbf{x}}_{p,t}(\tau) -\widetilde{\mathbf{x}}_{p,t}^*(\tau) \|_r +\|\widetilde{\mathbf{x}}_{t}^*(\tau) -\widetilde{\mathbf{x}}_{p,t}^*(\tau) \|_r \\
			&=& O\left(\rho(\tau)^{t/p}+\sum_{j=p+1}^{\infty}\left(\alpha_j(\bm{\theta}(\tau)) + \|\bm{\varepsilon}_t\|_r \beta_j(\bm{\theta}(\tau))\right)\right),
		\end{eqnarray*}
		which completes the proof.
	\end{proof}

	\begin{proof}[Proof of Proposition 2.2]
		\item
		\noindent (1). Write
		\begin{eqnarray*}
			\|\widetilde{\mathbf{x}}_t(\tau)-\widetilde{\mathbf{x}}_t(\tau^\prime)\|_r &\leq& \|\bm{\mu}\left(\widetilde{\mathbf{x}}_{t-1}(\tau), \ldots;\bm{\theta}(\tau)\right)-\bm{\mu}\left(\widetilde{\mathbf{x}}_{t-1}(\tau^\prime),\ldots;\bm{\theta}(\tau^\prime)\right)\|_r\\
			&&+\|\bm{\varepsilon}_t\|_r\|\mathbf{H}\left(\widetilde{\mathbf{x}}_{t-1}(\tau), \ldots;\bm{\theta}(\tau)\right)-\mathbf{H}\left(\widetilde{\mathbf{x}}_{t-1}(\tau^\prime),\ldots;\bm{\theta}(\tau^\prime)\right)\|_r\\
			&\leq& \sum_{j=1}^\infty\left(\alpha_j(\tau)+\|\bm{\varepsilon}_t\|_r\beta_j(\tau)\right)\|\widetilde{\mathbf{x}}_{t-j}(\tau)-\widetilde{\mathbf{x}}_{t-j}(\tau^\prime)\|_r\\
			&& + M |\tau - \tau^\prime|\sum_{j=1}^\infty\chi_j\|\widetilde{\mathbf{x}}_{t-j}(\tau^\prime)\|_r,
		\end{eqnarray*}
		where the second inequality follows from Assumption 1 and Assumption 2. In view of the stationarity of $\widetilde{\mathbf{x}}_t(\tau)$, rearranging the terms in the above inequality yields that
		\begin{eqnarray*}
			\|\widetilde{\mathbf{x}}_t(\tau)-\widetilde{\mathbf{x}}_t(\tau^\prime)\|_r \leq M (1-\rho(\tau))^{-1}|\tau - \tau^\prime|\sum_{j=1}^\infty\chi_j\|\widetilde{\mathbf{x}}_{t-j}(\tau^\prime)\|_r = O(|\tau - \tau^\prime|).
		\end{eqnarray*}
		
		\medskip
		
		\noindent (2). Write
		\begin{eqnarray*}
			\|\mathbf{x}_t-\widetilde{\mathbf{x}}_t(\tau_t)\|_r&\leq&\|\bm{\mu}\left(\mathbf{x}_{t-1},\mathbf{x}_{t-2},\ldots;\bm{\theta}(\tau_t)\right)-\bm{\mu}\left(\widetilde{\mathbf{x}}_{t-1}(\tau_t),\widetilde{\mathbf{x}}_{t-2}(\tau_t),\ldots;\bm{\theta}(\tau_t)\right)\|_r\\
			&&+\|\bm{\varepsilon}_t\|_r\|\mathbf{H}\left(\mathbf{x}_{t-1},\mathbf{x}_{t-2},\ldots;\bm{\theta}(\tau_t)\right)-\mathbf{H}\left(\widetilde{\mathbf{x}}_{t-1}(\tau_t),\widetilde{\mathbf{x}}_{t-2}(\tau_t),\ldots;\bm{\theta}(\tau_t)\right)\|_r\\
			&\leq& \sum_{j=1}^{\infty}\left(\alpha_j(\tau_t)+\|\bm{\varepsilon}_t\|_r\beta_j(\tau_t)\right)\left\|\mathbf{x}_{t-j}-\widetilde{\mathbf{x}}_{t-j}(\tau_t)\right\|_r\\
			&\leq& \sum_{j=1}^{\infty}\left(\alpha_j(\tau_t)+\|\bm{\varepsilon}_t\|_r\beta_j(\tau_t)\right)\left\|\mathbf{x}_{t-j}-\widetilde{\mathbf{x}}_{t-j}(\tau_{t-j}\lor0)\right\|_r\\
			&& +\sum_{j=1}^{\infty}\left(\alpha_j(\tau_t)+\|\bm{\varepsilon}_t\|_r\beta_j(\tau_t)\right) \left\|\widetilde{\mathbf{x}}_{t-j}(\tau_{t-j}\lor0)-\widetilde{\mathbf{x}}_{t-j}(\tau_t)\right\|_r.
		\end{eqnarray*}
		As $\left\|\mathbf{x}_{t-j}-\widetilde{\mathbf{x}}_{t-j}(\tau_{t-j}\lor0)\right\|_r = 0$ for $j\geq t$, by the first result of this proposition, we have
		\begin{eqnarray*}
			\|\mathbf{x}_t-\widetilde{\mathbf{x}}_t(\tau_t)\|_r&\leq&\sum_{j=1}^{t-1}\left(\alpha_j(\tau_t)+\|\bm{\varepsilon}_t\|_r\beta_j(\tau_t)\right)\left\|\mathbf{x}_{t-j}-\widetilde{\mathbf{x}}_{t-j}(\tau_{t-j})\right\|_r\\
			&& + M\cdot T^{-1}\sum_{j=1}^{\infty}j\left(\alpha_j(\tau_t)+\|\bm{\varepsilon}_t\|_r\beta_j(\tau_t)\right).
		\end{eqnarray*} 
		In addition, as $\left\|\mathbf{x}_{1}-\widetilde{\mathbf{x}}_{1}(\tau_{1})\right\|_r = O(T^{-1})$ and $\sup_{t\geq 2}\sum_{j=1}^{t-1}\left(\alpha_j(\tau_t)+\|\bm{\varepsilon}_t\|_r\beta_j(\tau_t)\right) < 1$, we have
		\begin{eqnarray*}
			\|\mathbf{x}_t-\widetilde{\mathbf{x}}_t(\tau_t)\|_r&\leq&\sum_{j=1}^{t-1}\left(\alpha_j(\tau_t)+\|\bm{\varepsilon}_t\|_r\beta_j(\tau_t)\right)O(T^{-1})\\
			&& + M\cdot T^{-1}\sum_{j=1}^{\infty}j\left(\alpha_j(\tau_t)+\|\bm{\varepsilon}_t\|_r\beta_j(\tau_t)\right) = O(T^{-1}).
		\end{eqnarray*} 
		The proof is now complete.	
	\end{proof}

	\begin{proof}[Proof of Theorem 2.1]
		\item
		
		(1). First, we introduce a few notations to facilitate the development. Let $\widehat{\bm{\eta}} (\tau):= [\widehat{\bm{\theta}} (\tau)^\top, \widehat{\bm{\theta}}^\star(\tau)^\top]^\top$, $\bm{\eta}(\tau):= [\bm{\theta}(\tau)^\top,h\bm{\theta}^{(1)}(\tau)^\top]^\top$, and $\mathcal{L}_\tau(\bm{\eta}) := \mathcal{L}_\tau(\bm{\eta}_1,\bm{\eta}_2)$ for $\bm{\eta}=[\bm{\eta}_1^\top,\bm{\eta}_2^\top]^\top$. Recall that we have defined $\gradient_{\bm{\vartheta}}$,  and let $\gradient_{\bm{\eta}}$ be defined similarly with respect to the  elements of $\bm{\eta}$.
		
		By the Taylor expansion, we have
		\begin{eqnarray*}
			\widehat{\bm{\eta}}(\tau) - \bm{\eta}(\tau) &=& -(\gradient_{\bm{\eta}}^2 \mathcal{L}_\tau(\overline{\bm{\eta}}))^{-1}\gradient_{\bm{\eta}} \mathcal{L}_\tau(\bm{\eta}(\tau)),
		\end{eqnarray*}
		with $\overline{\bm{\eta}}$ between $\widehat{\bm{\eta}}(\tau)$ and $\bm{\eta}(\tau)$. By Lemma \ref{LemmaA.3}.4, we have 
		\begin{eqnarray*}
			|\gradient_{\bm{\eta}} \mathcal{L}_\tau( \bm{\eta}(\tau))-\gradient_{\bm{\eta}} \widetilde{\mathcal{L}}_\tau(\bm{\eta}(\tau))|=O_P((Th)^{-1}),
		\end{eqnarray*}
		where $
		\widetilde{\mathcal{L}}_\tau(\bm{\eta}(\tau))=T^{-1}\sum_{t=1}^{T}\mathcal{l}(\widetilde{\bm{x}}_t(\tau_t),\widetilde{\bm{z}}_{t-1}(\tau_t);\bm{\theta}(\tau)+\bm{\theta}^{(1)}(\tau)(\tau_t-\tau))K_h(\tau_t-\tau)$.
		
		\medskip
		
		Then we consider $\gradient_{\bm{\eta}} \widetilde{\mathcal{L}}_\tau(\bm{\eta}(\tau))$. Since each element of $\bm{\theta}(\tau)$ is in $C^3[0,1]$, we have $\bm{\theta}(\tau_t)=\bm{\theta}(\tau) + \bm{\theta}^{(1)}(\tau)(\tau_t-\tau)+\mathbf{r}(\tau_t)$, where $\mathbf{r}(\tau_t) = \frac{1}{2}\bm{\theta}^{(2)}(\tau)(\tau_t-\tau)^2+\frac{1}{6}\bm{\theta}^{(3)}(\overline{\tau})(\tau_t-\tau)^3$ with $\overline{\tau}$ between $\tau_t$ and $\tau$. Let $\widehat{\bm{K}}((\tau_t-\tau)/h)=[K((\tau_t-\tau)/h),(\tau_t-\tau)/hK((\tau_t-\tau)/h)]^\top$. By the Mean Value Theorem, we have
		\begin{eqnarray*}
			&&\gradient_{\bm{\eta}} \widetilde{\mathcal{L}}_\tau(\bm{\eta}(\tau))-\frac{1}{Th}\sum_{t=1}^{T}\widehat{\bm{K}}((\tau_t-\tau)/h)\otimes \gradient_{\bm{\vartheta}}\mathcal{l}(\widetilde{\bm{x}}_t(\tau_t),\widetilde{\bm{z}}_{t-1}(\tau_t);\bm{\theta}(\tau_t))\\
			&=&-\frac{1}{Th}\sum_{t=1}^{T}\widehat{\bm{K}}((\tau_t-\tau)/h)\otimes\left[\gradient_{\bm{\vartheta}}^2\mathcal{l}(\widetilde{\bm{x}}_t(\tau_t),\widetilde{\bm{z}}_{t-1}(\tau_t);\bm{\theta}(\tau_t)-u\mathbf{r}(\tau_t))\mathbf{r}(\tau_t)\right]
		\end{eqnarray*}
		with some $u\in[0,1]$.  Since $\gradient_{\bm{\vartheta}}^2\mathcal{l}$ is in class $\mathcal{H}(3,\bm{\chi},M)$ by Lemma \ref{LemmaA.2},  using Lemma \ref{LemmaB.1} and $|\tau_t-\tau|\leq h$ yields
		\begin{eqnarray*}
			\|\gradient_{\bm{\vartheta}}^2\mathcal{l}(\widetilde{\bm{x}}_t(\tau_t),\widetilde{\bm{z}}_{t-1}(\tau_t);\bm{\theta}(\tau_t)-u\mathbf{r}(\tau_t)) - \gradient_{\bm{\vartheta}}^2\mathcal{l}(\widetilde{\bm{x}}_t(\tau),\widetilde{\bm{z}}_{t-1}(\tau);\bm{\theta}(\tau))\|_1 = O(h).
		\end{eqnarray*}
		
		The above analyses plus Lemma \ref{LemmaA.5} reveal that
		\begin{eqnarray*}
			&&\gradient_{\bm{\eta}} \widetilde{\mathcal{L}}_\tau(\bm{\eta}(\tau))-\frac{1}{Th}\sum_{t=1}^{T}\widehat{\bm{K}}((\tau_t-\tau)/h)\otimes \gradient_{\bm{\vartheta}}\mathcal{l}(\widetilde{\bm{x}}_t(\tau_t),\widetilde{\bm{z}}_{t-1}(\tau_t);\bm{\theta}(\tau_t))\\
			&=&-\frac{1}{2}h^2\frac{1}{Th}\sum_{t=1}^{T}\widehat{\bm{K}}((\tau_t-\tau)/h)\otimes\left[\gradient_{\bm{\vartheta}}^2\mathcal{l}(\widetilde{\bm{x}}_t(\tau),\widetilde{\bm{z}}_{t-1}(\tau);\bm{\theta}(\tau))\cdot \bm{\theta}^{(2)}(\tau)\left(\frac{\tau_t-\tau}{h}\right)^2\right]+O_P(h^3)\\
			&=&\frac{1}{2}h^2\int_{-\tau/h}^{(1-\tau)/h}K(u)[u^2,u^3]^\top\mathrm{d}u\otimes \left(-\bm{\Sigma}(\tau) \bm{\theta}^{(2)}(\tau)\right)+ O_P(h^3).
		\end{eqnarray*}
		
		By Lemmas \ref{LemmaA.4} and \ref{LemmaA.5}, we have 
		\begin{eqnarray*}
			\gradient_{\bm{\eta}} \widetilde{\mathcal{L}}_\tau(\bm{\eta}(\tau)) \to_P0 \quad\text{and}\quad \sup_{\bm{\eta}}|\gradient_{\bm{\eta}}^2 \mathcal{L}_\tau( \bm{\eta})-E(\gradient_{\bm{\eta}}^2 \widetilde{\mathcal{L}}_\tau(\bm{\eta}))|\to_P 0.
		\end{eqnarray*}
		Hence, we have $\gradient_{\bm{\eta}}^2 \mathcal{L}_\tau(\overline{\bm{\eta}}) \to_P \bm{\Sigma}(\tau)$ and thus for any $\tau \in [h,1-h]$, as $Th^7\to 0$, we have
		\begin{eqnarray*}
			&&\sqrt{Th}\left(\widehat{\bm{\theta}} (\tau) - \bm{\theta}(\tau) -\frac{1}{2}h^2\widetilde{c}_2\bm{\theta}^{(2)}(\tau)\right)\\
			&=& -\bm{\Sigma}^{-1}(\tau)\frac{1}{\sqrt{Th}}\sum_{t=1}^{T}K((\tau_t-\tau)/h) \gradient_{\bm{\vartheta}}\mathcal{l}(\widetilde{\bm{x}}_t(\tau_t),\widetilde{\bm{z}}_{t-1}(\tau_t);\bm{\theta}(\tau_t))+o_P(1).
		\end{eqnarray*}
		
		In addition, by Lemma \ref{LemmaA.1}, we have $E\left(\gradient_{\bm{\vartheta}}\mathcal{l}(\widetilde{\mathbf{x}}_t(\tau),\widetilde{\mathbf{z}}_{t-1}(\tau);\bm{\theta}(\tau))\right) = 0$. To prove this theorem, by the Cramer-Wold device, it suffices to show that for any unit vector $\mathbf{d}$,
		\begin{eqnarray*}
			\frac{1}{\sqrt{Th}}\sum_{t=1}^{T}K((\tau_t-\tau)/h) \mathbf{d}^\top\gradient_{\bm{\vartheta}}\mathcal{l}(\widetilde{\bm{x}}_t(\tau_t),\widetilde{\bm{z}}_{t-1}(\tau_t);\bm{\theta}(\tau_t)) \to_D N\left(\mathbf{0}, \widetilde{v}_0\mathbf{d}^\top \bm{\Omega}(\tau)\mathbf{d}\right).
		\end{eqnarray*}
		Note that $\{\gradient_{\bm{\vartheta}}\mathcal{l}(\widetilde{\bm{x}}_t(\tau_t),\widetilde{\bm{z}}_{t-1}(\tau_t);\bm{\theta}(\tau_t))\}_t$ is a sequence of martingale differences, we prove the asymptotic normality by using the martingale central limit theorem \cite[]{hall2014martingale}. We first consider the convergence of conditional variance. Let $w_t(u) = \frac{1}{\sqrt{Th}}K((\tau_t-\tau)/h) \mathbf{d}^\top\gradient_{\bm{\vartheta}}\mathcal{l}(\widetilde{\bm{x}}_t(u),\widetilde{\bm{z}}_{t-1}(u);\bm{\theta}(u))$. By Lemma \ref{LemmaB.1}, we have
		\begin{eqnarray*}
			&&\sum_{t=1}^{T}\|w_t(\tau_t)^2- w_t(\tau)^2\|_1\\
			&\leq& \frac{1}{Th}\sum_{t=1}^{T}K((\tau_t-\tau)/h)^2\|\gradient_{\bm{\vartheta}}\mathcal{l}(\widetilde{\bm{x}}_t(\tau_t),\widetilde{\bm{z}}_{t-1}(\tau_t);\bm{\theta}(\tau_t))- \gradient_{\bm{\vartheta}}\mathcal{l}(\widetilde{\bm{x}}_t(\tau),\widetilde{\bm{z}}_{t-1}(\tau);\bm{\theta}(\tau))\|_2\\
			&&\times2\sup_{u}\|\gradient_{\bm{\vartheta}}\mathcal{l}(\widetilde{\bm{x}}_t(u),\widetilde{\bm{z}}_{t-1}(u);\bm{\theta}(u))\|_2\\
			&=&O(h)=o(1).
		\end{eqnarray*}
		In addition, by Proposition 2.1, $\{E\left[(\mathbf{d}^\top\gradient_{\bm{\vartheta}}\mathcal{l}(\widetilde{\bm{x}}_t(u),\widetilde{\bm{z}}_{t-1}(u);\bm{\theta}(u)))^2 \mid \mathcal{F}_{t-1} \right]\}_{t=1}^T$ is a sequence of stationary variables and thus we have
		\begin{eqnarray*}
			&&\sum_{t=1}^{T}E\left(w_t(\tau)^2\mid\mathcal{F}_{t-1}\right)\\
			&=&\frac{1}{Th}\sum_{t=1}^{T}K((\tau_t-\tau)/h)^2E\left[(\mathbf{d}^\top\gradient_{\bm{\vartheta}}\mathcal{l}(\widetilde{\bm{x}}_t(u),\widetilde{\bm{z}}_{t-1}(u);\bm{\theta}(u)))^2 \mid \mathcal{F}_{t-1} \right]\\
			&\to_P&\widetilde{v}_0\mathbf{d}^\top \bm{\Omega}(\tau)\mathbf{d}.
		\end{eqnarray*}
		We next verify the Lindeberg condition. The sum $\sum_{t=1}^{T}E\left(w_t^2(\tau_t) I(|w_t(\tau_t)|>v) \right)$ is bounded by
		\begin{eqnarray*}
			M E\left(\sup_{\tau}|\gradient_{\bm{\vartheta}}\mathcal{l}(\widetilde{\bm{x}}_t(\tau),\widetilde{\bm{z}}_{t-1}(\tau);\bm{\theta}(\tau))|^2 I(\sup_{\tau}|\gradient_{\bm{\vartheta}}\mathcal{l}(\widetilde{\bm{x}}_t(\tau),\widetilde{\bm{z}}_{t-1}(\tau);\bm{\theta}(\tau))|>\sqrt{Th}v) \right),
		\end{eqnarray*}
		which converges to zero since $\|\sup_{\tau}|\gradient_{\bm{\vartheta}}\mathcal{l}(\widetilde{\bm{x}}_t(\tau),\widetilde{\bm{z}}_{t-1}(\tau);\bm{\theta}(\tau))| \|_2<\infty$ by Lemma \ref{LemmaB.1}.3. The asymptotic normality is then obtained.
		
		The proof of the first result is now complete.
		
		\medskip
		
		
		(2).	For notation simplicity, we abbreviate $\mathcal{l}(\mathbf{x},\mathbf{z};\bm{\vartheta}),\bm{\mu}(\mathbf{x},\mathbf{z};\bm{\vartheta}),\mathbf{M}(\mathbf{z};\bm{\vartheta})$ to $\mathcal{l},\bm{\mu},\mathbf{M}$ in what follows. Note that
		\begin{eqnarray*}
			\mathrm{d}\mathcal{l} &=& (\mathbf{x} - \bm{\mu})^\top\mathbf{M}^{-1} \mathrm{d}\bm{\mu} - \frac{1}{2} (\mathbf{x} - \bm{\mu})^\top\mathrm{d}\mathbf{M}^{-1}(\mathbf{x} - \bm{\mu})-\frac{1}{2}\mathrm{tr}\{\mathbf{M}^{-1} \mathrm{d}\mathbf{M}\}\\
			&=&(\mathbf{x} - \bm{\mu})^\top\mathbf{M}^{-1}\frac{\partial \bm{\mu}}{\partial \bm{\vartheta}^\top} \mathrm{d}\bm{\vartheta} + \frac{1}{2} ((\mathbf{x} - \bm{\mu})^\top\mathbf{M}^{-1}\otimes(\mathbf{x} - \bm{\mu})^\top\mathbf{M}^{-1}) \frac{\partial \mathrm{vec}(\mathbf{M})}{\partial \bm{\vartheta}^\top} \mathrm{d}\bm{\vartheta}\\
			&& - \frac{1}{2} \mathrm{vec}(\mathbf{M}^{-1})^\top\frac{\partial \mathrm{vec}(\mathbf{M})}{\partial \bm{\vartheta}^\top} \mathrm{d}\bm{\vartheta}.
		\end{eqnarray*} 
		Hence, we have
		\begin{eqnarray}\label{EqB.3}
			\frac{\partial\mathcal{l}}{\partial \bm{\vartheta}} \frac{\partial\mathcal{l}}{\partial \bm{\vartheta}^\top} &=&  \frac{\partial \bm{\mu}^\top}{\partial \bm{\vartheta}} \mathbf{M}^{-1}(\mathbf{x} - \bm{\mu})(\mathbf{x} - \bm{\mu})^\top\mathbf{M}^{-1}\frac{\partial \bm{\mu}}{\partial \bm{\vartheta}^\top}+ \frac{1}{4} \frac{\partial \mathrm{vec}(\mathbf{M})^\top}{\partial \bm{\vartheta}} \mathrm{vec}(\mathbf{M}^{-1})\mathrm{vec}(\mathbf{M}^{-1})^\top \frac{\partial \mathrm{vec}(\mathbf{M})}{\partial \bm{\vartheta}^\top} \nonumber\\
			&& + \frac{1}{4} \frac{\partial \mathrm{vec}(\mathbf{M})^\top}{\partial \bm{\vartheta}} [ \mathbf{M}^{-1}(\mathbf{x} - \bm{\mu}) (\mathbf{x} - \bm{\mu})^\top\mathbf{M}^{-1} \otimes\mathbf{M}^{-1}(\mathbf{x} - \bm{\mu}) (\mathbf{x} - \bm{\mu})^\top\mathbf{M}^{-1} ] \frac{\partial \mathrm{vec}(\mathbf{M})}{\partial \bm{\vartheta}^\top} \nonumber\\
			&& + \frac{1}{2} \frac{\partial \bm{\mu}^\top}{\partial \bm{\vartheta}}[ \mathbf{M}^{-1}(\mathbf{x} - \bm{\mu}) (\mathbf{x} - \bm{\mu})^\top\mathbf{M}^{-1} \otimes(\mathbf{x} - \bm{\mu})^\top\mathbf{M}^{-1} ] \frac{\partial \mathrm{vec}(\mathbf{M})}{\partial \bm{\vartheta}^\top} \nonumber\\
			&& - \frac{1}{2}\frac{\partial \bm{\mu}^\top}{\partial \bm{\vartheta}} \mathbf{M}^{-1}(\mathbf{x} - \bm{\mu})\mathrm{vec}(\mathbf{M}^{-1})^\top\frac{\partial \mathrm{vec}(\mathbf{M})}{\partial \bm{\vartheta}^\top} \nonumber\\
			&& + \frac{1}{2} \frac{\partial \bm{\mu}^\top}{\partial \bm{\vartheta}}[(\mathbf{x} - \bm{\mu})^\top\mathbf{M}^{-1} \otimes \mathbf{M}^{-1}(\mathbf{x} - \bm{\mu}) (\mathbf{x} - \bm{\mu})^\top\mathbf{M}^{-1}] \frac{\partial \mathrm{vec}(\mathbf{M})}{\partial \bm{\vartheta}^\top} \nonumber\\
			&& - \frac{1}{4} \frac{\partial \mathrm{vec}(\mathbf{M})^\top}{\partial \bm{\vartheta}} [ \mathbf{M}^{-1}(\mathbf{x} - \bm{\mu})\otimes\mathbf{M}^{-1}(\mathbf{x} - \bm{\mu})]\mathrm{vec}(\mathbf{M}^{-1})^\top \frac{\partial \mathrm{vec}(\mathbf{M})}{\partial \bm{\vartheta}^\top} \nonumber\\
			&& - \frac{1}{2} \frac{\partial \mathrm{vec}(\mathbf{M})^\top}{\partial \bm{\vartheta}}\mathrm{vec}(\mathbf{M}^{-1}) (\mathbf{x} - \bm{\mu})^\top\mathbf{M}^{-1}\frac{\partial \bm{\mu}}{\partial \bm{\vartheta}^\top} \nonumber\\
			&& - \frac{1}{4} \frac{\partial \mathrm{vec}(\mathbf{M})^\top}{\partial \bm{\vartheta}}\mathrm{vec}(\mathbf{M}^{-1}) [(\mathbf{x} - \bm{\mu})^\top\mathbf{M}^{-1} \otimes (\mathbf{x} - \bm{\mu})^\top\mathbf{M}^{-1}] \frac{\partial \mathrm{vec}(\mathbf{M})}{\partial \bm{\vartheta}^\top}.
		\end{eqnarray}
		In addition, if $\bm{\varepsilon}_t$ is normal distributed, we have $E(\bm{\varepsilon_t}\bm{\varepsilon_t}^\top \otimes \bm{\varepsilon_t}\bm{\varepsilon_t}^\top) = 2\mathbf{N}_m + \mathrm{vec}(\mathbf{I}_m)\mathrm{vec}(\mathbf{I}_m)^\top$ and $E(\mathbf{c}\bm{\varepsilon_t}^\top \otimes \bm{\varepsilon_t}\bm{\varepsilon_t}^\top) = \mathbf{0}$, where $\mathbf{c}$ is independent of $\bm{\varepsilon_t}$, $2\mathbf{N}_m = \mathbf{I}_{m^2} + \mathbf{K}_{mm}$ and $\mathbf{K}_{mm}$ is a commutation matrix. By \eqref{EqB.3}, if $\bm{\varepsilon_t}$ is normal distributed, we have
		\begin{eqnarray*}
			&&\bm{\Omega}(\tau)=E\left(\gradient_{\bm{\vartheta}}\mathcal{l}(\widetilde{\mathbf{x}}_t(\tau),\widetilde{\mathbf{z}}_{t-1}(\tau);\bm{\vartheta}(\tau))\cdot \gradient_{\bm{\vartheta}}\mathcal{l}(\widetilde{\mathbf{x}}_t(\tau),\widetilde{\mathbf{z}}_{t-1}(\tau);\bm{\vartheta}(\tau))^\top\right) \\
			&=& E\left(\frac{\partial \bm{\mu}(\widetilde{\mathbf{x}}_t(\tau),\widetilde{\mathbf{z}}_{t-1}(\tau);\bm{\vartheta}(\tau))^\top}{\partial \bm{\vartheta}} \mathbf{M}^{-1}(\widetilde{\mathbf{x}}_t(\tau),\widetilde{\mathbf{z}}_{t-1}(\tau);\bm{\vartheta}(\tau))\frac{\partial \bm{\mu}(\widetilde{\mathbf{x}}_t(\tau),\widetilde{\mathbf{z}}_{t-1}(\tau);\bm{\vartheta}(\tau))}{\partial \bm{\vartheta}^\top}\right) \\
			&&+ \frac{1}{2} E\left(\frac{\partial \mathrm{vec}(\mathbf{M}(\widetilde{\mathbf{x}}_t(\tau),\widetilde{\mathbf{z}}_{t-1}(\tau);\bm{\vartheta}(\tau)))^\top}{\partial \bm{\vartheta}} [\mathbf{M}^{-1}(\widetilde{\mathbf{x}}_t(\tau),\widetilde{\mathbf{z}}_{t-1}(\tau);\bm{\vartheta}(\tau))\otimes \mathbf{M}^{-1}(\widetilde{\mathbf{x}}_t(\tau),\widetilde{\mathbf{z}}_{t-1}(\tau);\bm{\vartheta}(\tau))] \right.\\
			&&\left. \times \frac{\partial \mathrm{vec}(\mathbf{M}(\widetilde{\mathbf{x}}_t(\tau),\widetilde{\mathbf{z}}_{t-1}(\tau);\bm{\vartheta}(\tau)))}{\partial \bm{\vartheta}^\top}\right).
		\end{eqnarray*}
		
		Next, consider the Hessian matrix.
		\begin{eqnarray}\label{EqB.4}
			\mathrm{d}^2\mathcal{l} &=&- \mathrm{d}\bm{\vartheta}^\top \frac{\partial \bm{\mu}^\top}{\partial \bm{\vartheta}}\mathbf{M}^{-1}\frac{\partial \bm{\mu}}{\partial \bm{\vartheta}^\top} \mathrm{d}\bm{\vartheta} - \mathrm{d}\bm{\vartheta}^\top\frac{\partial \mathrm{vec}(\mathbf{M})^\top}{\partial \bm{\vartheta}} (\mathbf{M}^{-1}(\mathbf{x} - \bm{\mu}) \otimes \mathbf{M}^{-1}) \frac{\partial \bm{\mu}}{\partial \bm{\vartheta}^\top} \mathrm{d}\bm{\vartheta} \nonumber \\
			&&+\mathrm{d}\bm{\vartheta}^\top\frac{\partial \mathrm{vec}(\frac{\partial \bm{\mu}^\top}{\partial \bm{\vartheta}})^\top}{\partial \bm{\vartheta}} (\mathbf{M}^{-1}(\mathbf{x} - \bm{\mu}) \otimes \mathbf{I}_{m}) \frac{\partial \bm{\mu}}{\partial \bm{\vartheta}^\top} \mathrm{d}\bm{\vartheta} \nonumber\\
			&& + \frac{1}{2} \mathrm{d}\bm{\vartheta}^\top \frac{\partial \mathrm{vec}(\mathbf{M})^\top}{\partial \bm{\vartheta}} [ \mathbf{M}^{-1}\otimes\mathbf{M}^{-1}] \frac{\partial \mathrm{vec}(\mathbf{M})}{\partial \bm{\vartheta}^\top} \mathrm{d}\bm{\vartheta} \nonumber\\
			&&- \frac{1}{2}  \mathrm{d}\bm{\vartheta}^\top\frac{\partial \mathrm{vec}(\frac{\partial \mathrm{vec}(\mathbf{M})^\top}{\partial \bm{\vartheta}})^\top}{\partial \bm{\vartheta}} (\mathrm{vec}(\mathbf{M}^{-1})^\top \otimes \mathbf{I}_{d})  \mathrm{d}\bm{\vartheta} \nonumber\\
			&& -\frac{1}{2}\mathrm{d}\bm{\vartheta}^\top \frac{\partial \mathrm{vec}(\mathbf{M})^\top}{\partial \bm{\vartheta}} [ \mathbf{M}^{-1}(\mathbf{x} - \bm{\mu}) (\mathbf{x} - \bm{\mu})^\top\mathbf{M}^{-1}\otimes\mathbf{M}^{-1}] \frac{\partial \mathrm{vec}(\mathbf{M})}{\partial \bm{\vartheta}^\top} \mathrm{d}\bm{\vartheta} \nonumber\\
			&& - \frac{1}{2}\mathrm{d}\bm{\vartheta}^\top\frac{\partial \bm{\mu}^\top}{\partial \bm{\vartheta}}  (\mathbf{M}^{-1}(\mathbf{x} - \bm{\mu}) \otimes \mathbf{M}^{-1}) \frac{\partial \mathrm{vec}(\mathbf{M})}{\partial \bm{\vartheta}^\top} \mathrm{d}\bm{\vartheta} \nonumber\\
			&&- \frac{1}{2}\mathrm{d}\bm{\vartheta}^\top\frac{\partial \bm{\mu}^\top}{\partial \bm{\vartheta}}  (\mathbf{M}^{-1}\otimes \mathbf{M}^{-1}(\mathbf{x} - \bm{\mu}) ) \frac{\partial \mathrm{vec}(\mathbf{M})}{\partial \bm{\vartheta}^\top} \mathrm{d}\bm{\vartheta} \nonumber\\
			&& -\frac{1}{2}\mathrm{d}\bm{\vartheta}^\top \frac{\partial \mathrm{vec}(\mathbf{M})^\top}{\partial \bm{\vartheta}} [ \mathbf{M}^{-1}\otimes\mathbf{M}^{-1}(\mathbf{x} - \bm{\mu}) (\mathbf{x} - \bm{\mu})^\top\mathbf{M}^{-1}] \frac{\partial \mathrm{vec}(\mathbf{M})}{\partial \bm{\vartheta}^\top} \mathrm{d}\bm{\vartheta} \nonumber\\
			&& + \frac{1}{2} \mathrm{d}\bm{\vartheta}^\top\frac{\partial \mathrm{vec}(\frac{\partial \mathrm{vec}(\mathbf{M})^\top}{\partial \bm{\vartheta}})^\top}{\partial \bm{\vartheta}} (\mathrm{vec}(\mathbf{M}^{-1}(\mathbf{x} - \bm{\mu}) (\mathbf{x} - \bm{\mu})^\top\mathbf{M}^{-1}) \otimes \mathbf{I}_{d})  \mathrm{d}\bm{\vartheta}.
		\end{eqnarray}
		By \eqref{EqB.4}, if $\bm{\varepsilon_t}$ is normal distributed, we have
		\begin{eqnarray*}
			&&\bm{\Sigma}(\tau)=E\left(\gradient_{\bm{\vartheta}}^2\mathcal{l}(\widetilde{\mathbf{x}}_t(\tau),\widetilde{\mathbf{z}}_{t-1}(\tau);\bm{\vartheta}(\tau)) \right)\\
			&=& - E\left(\frac{\partial \bm{\mu}(\widetilde{\mathbf{x}}_t(\tau),\widetilde{\mathbf{z}}_{t-1}(\tau);\bm{\vartheta}(\tau))^\top}{\partial \bm{\vartheta}} \mathbf{M}^{-1}(\widetilde{\mathbf{x}}_t(\tau),\widetilde{\mathbf{z}}_{t-1}(\tau);\bm{\vartheta}(\tau))\frac{\partial \bm{\mu}(\widetilde{\mathbf{x}}_t(\tau),\widetilde{\mathbf{z}}_{t-1}(\tau);\bm{\vartheta}(\tau))}{\partial \bm{\vartheta}^\top}\right) \\
			&& - \frac{1}{2} E\left(\frac{\partial \mathrm{vec}(\mathbf{M}(\widetilde{\mathbf{x}}_t(\tau),\widetilde{\mathbf{z}}_{t-1}(\tau);\bm{\vartheta}(\tau)))^\top}{\partial \bm{\vartheta}} [\mathbf{M}^{-1}(\widetilde{\mathbf{x}}_t(\tau),\widetilde{\mathbf{z}}_{t-1}(\tau);\bm{\vartheta}(\tau))\otimes \mathbf{M}^{-1}(\widetilde{\mathbf{x}}_t(\tau),\widetilde{\mathbf{z}}_{t-1}(\tau);\bm{\vartheta}(\tau))] \right.\\
			&&\left. \times \frac{\partial \mathrm{vec}(\mathbf{M}(\widetilde{\mathbf{x}}_t(\tau),\widetilde{\mathbf{z}}_{t-1}(\tau);\bm{\vartheta}(\tau)))}{\partial \bm{\vartheta}^\top}\right).
		\end{eqnarray*}
		
		Then we have $\bm{\Omega}(\tau) = - \bm{\Sigma}(\tau)$ if $\bm{\varepsilon_t}$ is normal distributed. The proof is now complete.
	\end{proof}
	
	\begin{proof}[Proof of Corollary 2.1]
		\item
		By Lemma \ref{LemmaA.5} (2) and the proof of Theorem 2.1, we have 
		$$
		\sup_{\tau \in [0,1]}|\widehat{\bm{\eta}}(\tau) - \bm{\eta}(\tau)| = O_P((Th)^{-1/2}h^{-1/2}(\log T)^{1/2}).
		$$
		In addition, applying Lemma \ref{LemmaA.3} (4), Lemma \ref{LemmaA.5} (2) and Lemma \ref{LemmaA.3} (2) to $g = \gradient_{\bm{\vartheta}}^2\mathcal{l}$, we have
		$$
		\sup_{\tau \in [0,1]}|\widehat{\bm{\Sigma}}(\tau)-\bm{\Sigma}(\tau)|=O_P((Th)^{-1/2}h^{-1/2}(\log T)^{1/2}+h)=o_P(1)
		$$
		as $h (\log T)^2 \to 0$ and $\gradient_{\bm{\vartheta}}^2\mathcal{l} \in \mathcal{H}(3,\bm{\chi},M)$.
		
		For $\widehat{\bm{\Omega}}(\tau)$, as $\gradient_{\bm{\theta}}\mathcal{l} (\gradient_{\bm{\theta}}\mathcal{l})^\top \in \mathcal{H}(6,\bm{\chi},M)$, here we use a different argument to prove the result, which leads to weaker moment conditions.
		
		By Lemma \ref{LemmaA.3}.4 and Lemma \ref{LemmaB.1}.2, we have
		\begin{eqnarray*}
			&&\widehat{\bm{\Omega}}(\tau) -(Th)^{-1}\sum_{t=1}^{T}\gradient_{\bm{\theta}}\mathcal{l}(\widetilde{\mathbf{x}}_t(\tau_t),\widetilde{\mathbf{z}}_{t-1}(\tau_t);\bm{\theta}(\tau)) \gradient_{\bm{\theta}}\mathcal{l}(\widetilde{\mathbf{x}}_t(\tau_t),\widetilde{\mathbf{z}}_{t-1}(\tau_t);\bm{\theta}(\tau))^\top \widehat{K}\left(\frac{\tau_t-\tau}{h}\right)\\
			&=&O_P((Th)^{-1/2}h^{-1/2}(\log T)^{1/2}+(Th)^{-1})=o_P(1),
		\end{eqnarray*}
		where $\widehat{K}\left(\frac{\tau_t-\tau}{h}\right) = K\left(\frac{\tau_t-\tau}{h}\right)/\left(T^{-1}\sum_{t=1}^{T}K\left(\frac{\tau_t-\tau}{h}\right)\right)$.
		
		Define $g(\widetilde{\mathbf{y}}_t(\tau),\bm{\theta}(\tau)):=\gradient_{\bm{\vartheta}}\mathcal{l}(\widetilde{\mathbf{x}}_t(\tau),\widetilde{\mathbf{z}}_{t-1}(\tau);\bm{\theta}(\tau)) \gradient_{\bm{\vartheta}}\mathcal{l}(\widetilde{\mathbf{x}}_t(\tau),\widetilde{\mathbf{z}}_{t-1}(\tau);\bm{\theta}(\tau))^\top$. By Lemma \ref{LemmaA.4}.1, we have $\sup_{\tau \in [0,1]}\delta_{q/2}^{\sup_{\bm{\vartheta}}|g(\widetilde{\mathbf{y}}_t(\tau),\bm{\vartheta})|}(j) = O(j^{-(3/2+s)})$.
		
		Define $\mathbf{S}_T(\tau) = \sum_{t=1}^{T}\left[g(\widetilde{\mathbf{y}}_t(\tau_t),\bm{\theta}(\tau))-E(g(\widetilde{\mathbf{y}}_t(\tau_t),\bm{\theta}(\tau)))\right]\widehat{K}\left(\frac{\tau_t-\tau}{h}\right)$ and 
		$$
		\mathbf{S}_{k,T}=\sum_{t=1}^{k}\left[g(\widetilde{\mathbf{y}}_t(\tau_t),\bm{\theta}(\tau))-E(g(\widetilde{\mathbf{y}}_t(\tau_t),\bm{\theta}(\tau)))\right].
		$$
		By partial summation, we have
		$$
		\mathbf{S}_T(\tau) = \sum_{t=1}^{T-1}\left[\widehat{K}\left(\frac{\tau_t-\tau}{h}\right)-\widehat{K}\left(\frac{\tau_{t+1}-\tau}{h}\right) \right]\mathbf{S}_{t,T} + \widehat{K}\left(\frac{1-\tau}{h}\right)\mathbf{S}_{T,T}.
		$$
		Hence, we have $\sup_{\tau\in[0,1]}|\mathbf{S}_T(\tau)|\leq M \max_{t}|\mathbf{S}_{t,T}|$. Note that $\{\mathcal{P}_{t-j}g(\widetilde{\mathbf{y}}_t(\tau_t),\bm{\vartheta})\}_t$ forms a sequence of martingale differences. By the Doob's $\mathbb{L}^q$ maximal inequality, Burkholder's inequality and the elementary inequality $(\sum_i|a_i|)^{q}\leq \sum_i|a_i|^q$ for $0<q\leq1$, we obtain that
		\begin{eqnarray*}
			\|\max_{t}|\mathbf{S}_{t,T}|\|_{q/2}&\leq&\sum_{l=0}^{\infty}\|\max_{t=1,...,T}|\sum_{s=1}^{t} \mathcal{P}_{s-l}g(\widetilde{\mathbf{y}}_s(\tau_s),\bm{\theta}(\tau))|\|_{q/2}\\
			&\leq& \sum_{l=0}^{\infty}\frac{q/2}{q/2-1}\|\sum_{s=1}^{T} \mathcal{P}_{s-l}g(\widetilde{\mathbf{y}}_s(\tau_s),\bm{\theta}(\tau))\|_{q/2}\\
			&\leq& \sum_{l=0}^{\infty}\frac{q/2}{(q/2-1)^2}\left[E\left(\sum_{s=1}^{T}\left( \mathcal{P}_{s-l}g(\widetilde{\mathbf{y}}_s(\tau_s),\bm{\theta}(\tau))\right)^2\right)^{q/4}\right]\\
			&\leq& \frac{q/2}{(q/2-1)^2}\sum_{l=0}^{\infty}\left(\sum_{s=1}^{T}\|\mathcal{P}_{s-l}g(\widetilde{\mathbf{y}}_s(\tau_s),\bm{\theta}(\tau))\|_{q/2}^{q/2} \right)^{2/q}\\
			&\leq&  \frac{q/2}{(q/2-1)^2} T^{2/q}\sum_{l=0}^{\infty}\sup_{\tau \in [0,1]}\delta_{q/2}^{\sup_{\bm{\vartheta}}|g(\widetilde{\mathbf{y}}_t(\tau),\bm{\vartheta})|}(l)
		\end{eqnarray*}
		which shows that $\sup_{\tau\in[0,1]}|\frac{1}{Th}\mathbf{S}_T(\tau)|=O_P(T^{2/q-1}h^{-1})=o_P(1)$. The result then follows directly by Lemma \ref{LemmaA.3}.2.
	\end{proof}

	\begin{proof}[Proof of Theorem 2.2]
		\item
		
		We prove this theorem by applying Lemma \ref{LemmaB.2} to the weak Bahadur representation of $\widehat{\bm{\theta}}(\tau)$ given in Lemma \ref{LemmaA.8}.
		
		By Lemma \ref{LemmaA.8}, we have
		\begin{eqnarray}\label{EqB.5}
			\sup_{\tau \in[h,1-h]}&&\left|\mathbf{C}(\widehat{\bm{\theta}}(\tau)-\bm{\theta}(\tau))-\frac{1}{2}h^2\widetilde{c}_2\mathbf{C}\bm{\theta}^{(2)}(\tau)\right. \nonumber\\ &&\left.-\frac{1}{T}\sum_{t=1}^{T}(-\mathbf{C}\bm{\Sigma}^{-1}(\tau))\gradient_{\bm{\vartheta}}\mathcal{l}(\widetilde{\mathbf{x}}_t(\tau_t),\widetilde{\mathbf{z}}_{t-1}(\tau_t);\bm{\theta}(\tau_t))K_h(\tau_t-\tau)\right|\nonumber\\
			&=& O_P(\gamma_T+\beta_Th^2+h^3+(Th)^{-1})=o_P((Th\log T)^{-1/2})
		\end{eqnarray}
		as $Th^7\log T \to 0$ and $Th^2/(\log T)^4 \to \infty$. In addition, by Lemmas \ref{LemmaB.1}, \ref{LemmaA.4}.2 and \ref{LemmaB.2}, we have
		\begin{eqnarray}\label{EqB.6}
			\lim_{T\to\infty}\mathrm{Pr}&&\left(\sqrt{\frac{Th}{\widetilde{v}_0}}\sup_{\tau\in[h,1-h]}\left|\bm{\Sigma}_{\mathbf{C}}^{-1/2}(\tau)\frac{1}{T}\sum_{t=1}^T(-\mathbf{C}\bm{\Sigma}^{-1}(\tau))\gradient_{\bm{\vartheta}}\mathcal{l}(\widetilde{\mathbf{x}}_t(\tau_t),\widetilde{\mathbf{z}}_{t-1}(\tau_t);\bm{\theta}(\tau_t))K_h(\tau_t-\tau) \right| \right. \nonumber\\
			&& \left.- B(1/h)\leq\frac{u}{\sqrt{2\log(1/h)}} \right) = \exp(-2\exp(-u)).
		\end{eqnarray}
		
		By \eqref{EqB.5} and \eqref{EqB.6}, the proof is complete.
	\end{proof}

	\begin{proof}[Proof of Corollary 2.2]
		\item
		By the proof of Lemma \ref{LemmaA.8}, we have
		\begin{eqnarray*}
			&&\sup_{\tau \in [0,1]}\left| \widehat{\bm{\eta}}(\tau)-\bm{\eta}(\tau) - \frac{1}{2}h^2\left[\begin{matrix}
				\widetilde{c}_{0,h}(\tau) & \widetilde{c}_{1,h}(\tau)\\
				\widetilde{c}_{1,h}(\tau) & \widetilde{c}_{2,h}(\tau)
			\end{matrix}\right]^{-1}\left[\begin{matrix}
				\widetilde{c}_{2,h}(\tau) \\
				\widetilde{c}_{3,h}(\tau)
			\end{matrix} \right] \otimes \bm{\theta}^{(2)}(\tau) \right.\\
			&&\left.-\frac{1}{T}\sum_{t=1}^{T}K_h(\tau_t-\tau)\left[\begin{matrix}
				\widetilde{c}_{0,h}(\tau) & \widetilde{c}_{1,h}(\tau)\\
				\widetilde{c}_{1,h}(\tau) & \widetilde{c}_{2,h}(\tau)
			\end{matrix}\right]^{-1}\left[\begin{matrix}
				1 \\
				\frac{\tau_t-\tau}{h}
			\end{matrix} \right]\otimes(-\bm{\Sigma}^{-1}(\tau)) \gradient_{\bm{\vartheta}}\mathcal{l}(\widetilde{\bm{x}}_t(\tau_t),\widetilde{\bm{z}}_{t-1}(\tau_t);\bm{\theta}(\tau_t))  \right|\\
			&=& O_P((Th)^{-1/2}h^{3/2}(\log T)^{1/2}) + O(h^3). 
		\end{eqnarray*}
		Hence, we have
		\begin{eqnarray*}
			&&\sup_{\tau \in [0,1]}\left| \widehat{\bm{\theta}}(\tau)-\bm{\theta}(\tau) - \frac{1}{2}h^2b_h(\tau) \bm{\theta}^{(2)}(\tau) -\frac{1}{T}\sum_{t=1}^{T} (-\bm{\Sigma}^{-1}(\tau)) \gradient_{\bm{\vartheta}}\mathcal{l}(\widetilde{\bm{x}}_t(\tau_t),\widetilde{\bm{z}}_{t-1}(\tau_t);\bm{\theta}(\tau_t)) \omega_{t,h}(\tau) \right|\\
			&=& O_P((Th)^{-1/2}h^{3/2}(\log T)^{1/2}) + O(h^3). 
		\end{eqnarray*}
		
		By Lemma \ref{LemmaB.3}, there exists i.i.d. $k$-dimensional standard normal variables $\mathbf{v}_1,...,\mathbf{v}_T$ such that
		\begin{eqnarray*}
			&&\sup_{\tau \in [0,1]}\left|\frac{1}{Th}\sum_{t=1}^{T}\omega_{t,h}(\tau) (\gradient_{\bm{\theta}}\mathcal{l}(\widetilde{\bm{x}}_t(\tau_t),\widetilde{\bm{z}}_{t-1}(\tau_t);\bm{\theta}(\tau_t)) - \bm{\Omega}^{1/2}(\tau_t)\mathbf{v}_t) \right|\\
			&=&O_P\left(\frac{T^{\frac{q(s+3)-4}{2q(2s+3)-4}}(\log T)^{\frac{2(s+1)(q+1)}{q(2s+3)-2}}}{Th}\right) = O_P\left(\frac{(\log T)^{2}(hT^{\frac{qs+2}{q(2s+3)-2}})^{-1/2}}{(Th)^{1/2}(\log T)^{1/2}} \right)\\
			&=&O_P\left(\frac{(\log T)^{2}(hT^{\nu})^{-1/2}}{(Th \log T)^{1/2}} \right)
		\end{eqnarray*}
		with $\nu = \frac{qs+2}{q(2s+3)-2}$. Since $\bm{\Omega}(\tau)$ is Lipschitz continuous and $\{\mathbf{v}_t\}_{t=1}^T$ is a sequence of i.i.d. normal variables, we have
		\begin{eqnarray*}
			&&\sup_{\tau \in [0,1]}\left|\frac{1}{Th}\sum_{t=1}^{T}\omega_{t,h}(\tau) ( \bm{\Omega}^{1/2}(\tau) - \bm{\Omega}^{1/2}(\tau_t))\mathbf{v}_t \right|\\
			&=&O_P\left(\frac{h (\log T)^{1/2}}{(Th)^{1/2}}\right) = O_P\left(\frac{h \log T}{(Th\log T)^{1/2}}\right).
		\end{eqnarray*}
		
		Combining the above analyses, we then complete the proof.
		
	\end{proof}
	
	\begin{proof}[Proof of Proposition 2.3]
		\item
		Note that in this case $\mathcal{l},\gradient\mathcal{l}, \gradient^2\mathcal{l}$ is in class $\mathcal{H}(2,\bm{\chi},M)$ as $\mathbf{H}(\mathbf{z};\bm{\theta}) \equiv \mathbf{H}(\mathbf{0};\bm{\theta})$ by Lemma \ref{LemmaA.2}. Hence, we only need that the innovation process has $4 + s$ moments for some $s > 0$ compared to $6+s$ moments needed in Theorem 2.2.
		
		Consider Assumptions 1--2 first. For notation simplicity, we ignore the time-varying intercept, and rewrite model (2.11) as
		$$
		\widetilde{\mathbf{y}}_t(\tau) = \bm{\Gamma}(\tau) \widetilde{\mathbf{y}}_{t-1}(\tau) + \widetilde{\mathbf{u}}_t(\tau),
		$$
		where $\widetilde{\mathbf{y}}_t(\tau) = [\widetilde{\bm{\eta}}_t^\top(\tau),...,\widetilde{\bm{\eta}}_{t-q+1}^\top(\tau),\widetilde{\mathbf{x}}_t^\top(\tau),...,\widetilde{\mathbf{x}}_{t-p+1}^\top(\tau)]^\top$, $\widetilde{\mathbf{u}}_t(\tau) = [\widetilde{\mathbf{x}}_t^\top(\tau),\mathbf{0}_{m(q-1)\times1}^\top,\widetilde{\mathbf{x}}_t^\top(\tau),\mathbf{0}_{m(p-1)\times1}^\top]^\top$ and
		$$
		\bm{\Psi}(\tau)=\left[
		\begin{array}{cccc  cccc}
			-\mathbf{B}_1(\tau)& \cdots  &  -\mathbf{B}_{q-1}(\tau)  &  -\mathbf{B}_{q}(\tau) & -\mathbf{A}_1(\tau)& \cdots  &  -\mathbf{A}_{p-1}(\tau)  &  -\mathbf{A}_{p}(\tau) \\
			\mathbf{I}_m & \cdots  &  \mathbf{0}_m       &  \mathbf{0}_m    & \mathbf{0}_{m}     & \cdots  &  \mathbf{0}_{m}    &  \mathbf{0}_{m}      \\
			\vdots   & \ddots  &  \vdots  &\vdots &\vdots & \ddots  &\vdots &  \vdots \\
			\mathbf{0}_m        & \cdots  &  \mathbf{I}_m&  \mathbf{0}_m    & \mathbf{0}_{m}     & \cdots  &  \mathbf{0}_{m}    &  \mathbf{0}_{ m}      \\
			\multicolumn{4}{c}{\multirow{4}{*}{$\mathbf{0}$}}& \mathbf{0}_m     & \cdots  &  \mathbf{0}_m   &  \mathbf{0}_m      \\
			\multicolumn{4}{c}{}& \mathbf{I}_{m}     & \cdots  &  \mathbf{0}_m    &  \mathbf{0}_m      \\
			\multicolumn{4}{c}{}& \vdots         & \ddots  &  \vdots      &  \vdots      \\
			\multicolumn{4}{c}{}& \mathbf{0}_m       & \cdots  &  \mathbf{I}_{m}  &  \mathbf{0}_m      \\
		\end{array}
		\right].
		$$
		Let $\mathbf{J} = [\mathbf{I}_m,\mathbf{0}_{m\times(m(p+q-1))}]$ and $\mathbf{H} = [\mathbf{I}_{m},\mathbf{0}_{m(q-1)\times1}^\top,\mathbf{I}_{m},\mathbf{0}_{m(p-1)\times1}^\top]^\top$, we have
		\begin{eqnarray*}
			\widetilde{\bm{\eta}}_t(\tau) &=& \widetilde{\mathbf{x}}_t(\tau) + \sum_{j=1}^{\infty}(\mathbf{J} \bm{\Psi}^j(\tau) \mathbf{H})\widetilde{\mathbf{x}}_{t-j}(\tau)\\
			\widetilde{\mathbf{x}}_t(\tau) &=&  \sum_{j=1}^{\infty}(-\mathbf{J} \bm{\Psi}^j(\tau) \mathbf{H})\widetilde{\mathbf{x}}_{t-j}(\tau) + \widetilde{\bm{\eta}}_t(\tau)
		\end{eqnarray*}
		and thus $\bm{\Gamma}_j(\tau) = - \mathbf{J} \bm{\Psi}^j(\tau) \mathbf{H}$. Then Assumption 1 is automatically met if $\sum_{j=1}^{\infty}|\bm{\Gamma}_j(\tau)| < 1$. By using the property of block matrix determinants and $\mathrm{det}(\mathbf{B}_\tau(L)) \neq 0$ for all $|L|\leq1$ (this implies the maximum eigenvalue of left upper $mq\times mq$ matrix in $\bm{\Gamma}(\tau)$ is less than $1$), it can be shown that the maximum eigenvalue of $\bm{\Gamma}(\tau)$, denoted by $\rho$, is less than 1 uniformly over $\tau\in[0,1]$. Hence, we have $\alpha_j(\bm{\theta}(\tau)) = |\bm{\Gamma}_j(\tau)| = O(\rho^j)$ and $\beta_j(\bm{\theta}(\tau)) = 0$. In addition, $|\bm{\Psi}^j-\bm{\Psi}^{\prime j}| = |\sum_{i=1}^{j-1}\bm{\Psi}^i(\bm{\Psi} - \bm{\Psi}')\bm{\Psi}^{\prime j-1-i} | = |\bm{\Psi} - \bm{\Psi}'|O(\rho^{j-1})$. Then Assumption 2 is met.
		
		However, by using techniques which are more specific to the VARMA models, the condition $\sum_{j=1}^{\infty}|\bm{\Gamma}_j(\tau)| < 1$ can be weakened to $\mathrm{det}(\mathbf{A}_\tau(L)) \neq 0$ for all $|L|\leq1$. Similar to the above analysis, we have $\widetilde{\mathbf{x}}_t(\tau) =  \sum_{j=0}^{\infty}\bm{\Phi}_{j}(\tau)\widetilde{\bm{\eta}}_{t-j}(\tau)$ with $|\bm{\Phi}_{j}(\tau)| = O(\rho^{j})$ as $\mathrm{det}(\mathbf{A}_\tau(L)) \neq 0$ for all $|L|\leq1$, which implies that $\| \widetilde{\mathbf{x}}_t(\tau)\|_r < \infty$ and $\delta_{r}^{\widetilde{\mathbf{x}}(\tau)}(k) = O(\rho^k)$.
		
		For the identification conditions stated in Assumption 3, it is well known that the final form or echelon form is enough to ensure the uniqueness of the VARMA representation.
		
		For verifying Assumption 4, one need the derivatives of $\bm{\Gamma}_j$. Define $\bm{\alpha}= - \mathrm{vec}(\mathbf{B}_1,...,\mathbf{B}_{q}, \mathbf{A}_1,..., \mathbf{A}_{p})$. Note that $\mathrm{d}\mathrm{vec}(\bm{\Psi}) = (\mathbf{I}_{m(p+q)}\otimes \mathbf{J}^\top)\mathrm{d}\bm{\alpha}$ and $\mathrm{d}\mathrm{vec}(\bm{\Psi}^j) = (\bm{\Psi}^\top\otimes\mathbf{I}_{m(p+q)})\mathrm{d}\mathrm{vec}(\bm{\Psi}^{j-1})+(\mathbf{I}_{m(p+q)}\otimes \bm{\Psi}^{j-1}\mathbf{J}^\top)\mathrm{d}\bm{\alpha}$, it is easy to show that
		\begin{equation*}
			\frac{\partial \mathrm{vec}(\bm{\Gamma}_j)}{\partial \bm{\alpha}^\top} = - \sum_{i=0}^{j-1} \mathbf{H}^\top (\bm{\Psi}^{\top})^{j-1-i} \otimes \mathbf{J} \bm{\Psi}^j(\tau) \mathbf{J}^\top.
		\end{equation*}
		Hence, we have $|\frac{\partial \mathrm{vec}(\bm{\Gamma}_j)}{\partial \bm{\alpha}^\top}|=O(\rho^{j-1})$ and $|\frac{\partial \mathrm{vec}(\bm{\Gamma}_j)}{\partial \bm{\alpha}^\top} - \frac{\partial \mathrm{vec}(\bm{\Gamma}_j^\prime)}{\partial \bm{\alpha}^{\prime\top}}|= |\bm{\Psi} - \bm{\Psi}'|O(\rho^{j-2})$. Similarly, we can verify the conditions imposed on second order derivatives.
		
		The proof is now complete.
		
	\end{proof}
	
	\begin{proof}[Proof of Proposition 2.4]
		\item
		Since $\mathbf{H}(\mathbf{z};\bm{\theta})$ is a positive and diagonal matrix, we have
		\begin{eqnarray*}
			|\mathbf{H}(\mathbf{z};\bm{\theta}) - \mathbf{H}(\mathbf{z}^\prime;\bm{\theta})|  & = & |(\mathbf{H}^2(\mathbf{z};\bm{\theta}) - \mathbf{H}^2(\mathbf{z}^\prime;\bm{\theta}))\cdot (\mathbf{H}(\mathbf{z};\bm{\theta}) + \mathbf{H}(\mathbf{z}^\prime;\bm{\theta}))^{-1}| \\
			&\leq & \sum_{j=1}^{\infty}|\bm{\Psi}_j(\tau)|^{1/2}\cdot |\mathbf{x}_j - \mathbf{x}_j^\prime|.
		\end{eqnarray*}
		Then Assumption 1 is automatically met if $\|\widetilde{\bm{\eta}}_t(\tau)\|_r\sum_{j=1}^{\infty}|\bm{\Psi}_j(\tau)|^{1/2}<1$. In addition, as $|\bm{\Psi}_j(\tau)|$ converges to zero with exponential rate and ${\partial h_{i,t}^{1/2}}/{\partial  \theta_i} = \frac{1}{2}h_{i,t}^{-1/2}{\partial h_{i,t}}/{\partial  \theta_i}$, similar to the proof of Proposition 2.3 we can easily verify Assumptions 2 and 4. For the identification conditions of the GARCH process, we refer readers to Proposition 3.4 of \cite{jeantheau1998strong}, who proves that assuming the minimal representation is enough for ensuring Assumption 3 holds.
		
		However, by using techniques which are more specific to the GARCH models, the condition $\|\widetilde{\bm{\eta}}_t(\tau)\|_r\sum_{j=1}^{\infty}|\bm{\Psi}_j(\tau)|^{1/2}<1$ can be weaken to $\|\widetilde{\bm{\eta}}_t(\tau)\|_{r}^2\sum_{j=1}^{\infty}|\bm{\Psi}_j(\tau)|<1$. Define $\widetilde{\mathbf{y}}_t(\tau) = \widetilde{\mathbf{x}}_t(\tau) \odot \widetilde{\mathbf{x}}_t(\tau)$ and $\widetilde{\mathbf{V}}_t(\tau) = \mathrm{diag}\left(\widetilde{\bm{\eta}}_t(\tau)\odot \widetilde{\bm{\eta}}_t(\tau)\right)$. We first prove the existence of $\|\widetilde{\mathbf{y}}_t(\tau)\|_{r/2}$ (which implies the existence of $\|\widetilde{\mathbf{x}}_t(\tau)\|_{r}$) as well as its weak dependence property by means of a chaotic expansion. Since $\widetilde{\mathbf{y}}_t(\tau) = \widetilde{\mathbf{V}}_t(\tau)\bm{\alpha}(\tau) + \sum_{j=1}^{\infty}\widetilde{\mathbf{V}}_t(\tau)\bm{\Psi}_j(\tau)\widetilde{\mathbf{y}}_{t-j}(\tau)$, by substitute $\widetilde{\mathbf{y}}_{t-j}(\tau)$ recursively, we have
		\begin{equation*}
			\widetilde{\mathbf{y}}_t(\tau) = \widetilde{\mathbf{V}}_t(\tau)\left\{\bm{\alpha}(\tau) + \sum_{k=1}^{\infty}\sum_{j_1,..,j_k=1}^{\infty}\mathbf{\Psi}_{j_1}(\tau)\widetilde{\mathbf{V}}_{t-j_1}(\tau)\cdots\mathbf{\Psi}_{j_k}(\tau)\widetilde{\mathbf{V}}_{t-j_1-\cdots-j_k}(\tau)\bm{\alpha}(\tau)\right\}.
		\end{equation*}
		To prove the boundedness of $\left\|\widetilde{\mathbf{y}}_t(\tau)\right\|_{r/2}$, since $\{\widetilde{\mathbf{V}}_t(\tau)\}$ are independent random variables, it suffices to show that
		\begin{equation*}
			\sum_{k=1}^{\infty}\sum_{j_1,..,j_k=1}^{\infty}\left\|\bm{\Psi}_{j_1}(\tau)\widetilde{\mathbf{V}}_{t-j_1}(\tau)\cdots\bm{\Psi}_{j_k}(\tau)\widetilde{\mathbf{V}}_{t-j_1-\cdots-j_k}(\tau)\bm{\alpha}(\tau)\right\|_{r/2} < \infty.
		\end{equation*}
		By using $\sup_{\tau \in [0,1]}|\bm{\alpha}(\tau)| < \infty$ and $\|\widetilde{\mathbf{V}}_t(\tau)\|_{r/2}\sum_{j=1}^{\infty}|\bm{\Psi}_j(\tau)|<1$, we have
		\begin{eqnarray*}
			&&\sum_{k=1}^{\infty}\sum_{j_1,..,j_k=1}^{\infty}\left\|\bm{\Psi}_{j_1}(\tau)\widetilde{\mathbf{V}}_{t-j_1}(\tau)\cdots\bm{\Psi}_{j_k}(\tau)\widetilde{\mathbf{V}}_{t-j_1-\cdots-j_k}(\tau)\bm{\alpha}(\tau)\right\|_{r/2} \\
			&\leq& \sup_{\tau \in [0,1]}|\bm{\alpha}(\tau)| \sum_{k=1}^{\infty}\sum_{j_1,..,j_k=1}^{\infty}|\bm{\Psi}_{j_1}(\tau)|\cdots |\bm{\Psi}_{j_k}(\tau)| \|\widetilde{\mathbf{V}}_t(\tau)\|_{r/2}^k\\
			&\leq& \sup_{\tau \in [0,1]}|\bm{\alpha}(\tau)| \sum_{k=1}^\infty\left(\|\widetilde{\mathbf{V}}_t(\tau)\|_{r/2}\sum_{j=1}^{\infty}|\bm{\Psi}_j(\tau)|\right)^k < \infty.
		\end{eqnarray*}
		Hence, we have $\|\widetilde{\bm{x}}_t(\tau)\|_r < \infty$. Next, we show that $\delta_{r}^{\widetilde{\mathbf{x}}(\tau)}(k) = O(\rho^k)$ for some $0<\rho <1$. Write
		\begin{equation*}
			\widetilde{\mathbf{y}}_t(\tau) = \mathrm{diag}\left(\bm{\alpha}(\tau) + \sum_{j=1}^{\infty}\bm{\Psi}_j(\tau)\widetilde{\mathbf{y}}_{t-j}(\tau)\right)(\widetilde{\bm{\eta}}_t(\tau) \odot \widetilde{\bm{\eta}}_t(\tau) ).
		\end{equation*}
		By using the same arguments as in the proof of Proposition 2.1, we have $\delta_{r}^{\widetilde{\mathbf{y}}(\tau)}(k) = O(\rho^k)$ since $\|\widetilde{\bm{\eta}}_t(\tau)\|_{r}^2\sum_{j=1}^{\infty}|\bm{\Psi}_j(\tau)|<1$ and $|\bm{\Psi}_j(\tau)|=O(\rho^j)$. Since $|a-b|\leq |a^2-b^2|^{1/2}$ for $a\geq 0, b\geq 0$ and $\mathbf{H}(\cdot)$ is a positive diagonal matrix, for $t\geq 1$, we have 
		\begin{eqnarray*}
			\|\widetilde{\bm{x}}_t(\tau) - \widetilde{\bm{x}}_t^*(\tau)\|_r &\leq& \sum_{j=1}^{t}|\bm{\Psi}_j(\tau)|^{1/2}\cdot\|\widetilde{\mathbf{y}}_{t-j}(\tau)-\widetilde{\mathbf{y}}_{t-j}^*(\tau)\|_{r/2}^{1/2} \cdot\|\widetilde{\bm{\eta}}_t(\tau)\|_r\\
			&=& \sum_{j=1}^{t} O(\rho^{j/2})O(\rho^{(t-j)/2}) = O(\rho^{\prime t})
		\end{eqnarray*}
		for some $0<\rho^{\prime}<1$.
		
		The proof is now completed.
		
	\end{proof}
	
	%
	
	\newpage
	
	\section*{Appendix B}\label{App.B}
	
	\renewcommand{\theequation}{B.\arabic{equation}}
	\renewcommand{\thesubsection}{B.\arabic{subsection}}
	\renewcommand{\thefigure}{B.\arabic{figure}}
	\renewcommand{\thetable}{B.\arabic{table}}
	\renewcommand{\thelemma}{B.\arabic{lemma}}
	
	\setcounter{equation}{0}
	\setcounter{lemma}{0}
	\setcounter{section}{0}
	\setcounter{table}{0}
	\setcounter{figure}{0}

	\subsection{Preliminary Lemmas}\label{AppB.1}
	
	First, we define a few notations for better presentation. First, let $\bm{\eta} =(\bm{\eta}_1^\top, \bm{\eta}_2^\top)^\top$, where $\bm{\eta}_1$ and $\bm{\eta}_2$ are the same generic vectors as in (2.4). Let $\widehat{K}(\cdot)$ be a a kernel function being Lipschitz continuous and bounded on $[-1,1]$. 
	
	For $\tau \in[0,1]$ and $\bm{\eta} \in \mathbf{E}_T(r) = \bm{\Theta}_r\times(h\cdot\bm{\Theta}^{(1)})$, define
	\begin{equation}\label{EqA.3}
		G_\tau(\bm{\eta}):=\frac{1}{Th}\sum_{t=1}^{T}\widehat{K}\left(\frac{\tau_t-\tau}{h}\right)\left[g(\mathbf{y}_t,\bm{\eta}_1+\bm{\eta}_2\cdot(\tau_t-\tau)/h)-E(g(\mathbf{y}_t,\bm{\eta}_1+\bm{\eta}_2\cdot(\tau_t-\tau)/h))\right],
	\end{equation}
	where $g(\cdot)\in\mathcal{H}(C,\bm{\chi},M)$ and $\mathbf{y}_t=(\mathbf{x}_t,\mathbf{z}_{t-1})$. Let $G_\tau^c(\bm{\eta})$, $\widetilde{G}_\tau(\bm{\eta})$ denote the same quantity but with $\mathbf{y}_t$ replaced by $\mathbf{y}_t^c=(\mathbf{x}_t,\mathbf{z}_{t-1}^c)$ or $\widetilde{\mathbf{y}}_t(\tau_t)=(\widetilde{\mathbf{x}}_t(\tau_t),\widetilde{\mathbf{z}}_{t-1}(\tau_t))$.
	
	In addition, let 
	\begin{eqnarray}
		\widetilde{B}_\tau(\bm{\eta})&:=&\frac{1}{Th}\sum_{t=1}^{T}\widehat{K}\left(\frac{\tau_t-\tau}{h}\right)g(\widetilde{\mathbf{y}}_t(\tau_t),\bm{\eta}_1+\bm{\eta}_2\cdot(\tau_t-\tau)/h),\nonumber \\
		\mathbf{m}_t^{(2)}(u,\tau)&:=&\widehat{K}\left(\frac{\tau-u}{h}\right)g(\widetilde{\mathbf{y}}_t(u),\bm{\theta}(\tau)-v\mathbf{d}(u, \tau))\cdot \mathbf{d}(u, \tau),
	\end{eqnarray}
	where $\mathbf{d}(u, \tau):=\bm{\theta}(u)-\bm{\theta}(\tau)-(u-\tau)\bm{\theta}^{(1)}(\tau)$ and some $v\in[0,1]$.
	
	\medskip
	
	\begin{lemma}\label{LemmaA.1}
		Suppose Assumptions 1 and 3 hold. Then, $ E\left(\mathcal{l}(\widetilde{\mathbf{x}}_1(\tau),\widetilde{\mathbf{z}}_{0}(\tau);\bm{\vartheta})\right)$ is uniquely maximized at $\bm{\theta}(\tau)$.
	\end{lemma}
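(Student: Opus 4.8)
The plan is to reduce the unconditional comparison to a pointwise (conditional) one and then invoke a standard matrix inequality. Write $\mathbf{M}(\mathbf{z};\bm\vartheta)=\mathbf{H}(\mathbf{z};\bm\vartheta)\mathbf{H}(\mathbf{z};\bm\vartheta)^\top$ and abbreviate $\bm\mu_\vartheta=\bm\mu(\widetilde{\mathbf{z}}_0(\tau);\bm\vartheta)$, $\mathbf{M}_\vartheta=\mathbf{M}(\widetilde{\mathbf{z}}_0(\tau);\bm\vartheta)$, with $\bm\mu_0,\mathbf{M}_0$ the same objects at the true value $\bm\theta(\tau)$ and $\mathbf{H}_0=\mathbf{H}(\widetilde{\mathbf{z}}_0(\tau);\bm\theta(\tau))$. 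Since $\widetilde{\mathbf{z}}_0(\tau)$ is $\mathcal{F}_0$-measurable while $\widetilde{\mathbf{x}}_1(\tau)=\bm\mu_0+\mathbf{H}_0\bm\varepsilon_1$ with $\bm\varepsilon_1$ independent of $\mathcal{F}_0$, $E(\bm\varepsilon_1)=\mathbf{0}$ and $E(\bm\varepsilon_1\bm\varepsilon_1^\top)=\mathbf{I}_m$ (Assumption~1.1), I would first condition on $\mathcal{F}_0$. Writing $\widetilde{\mathbf{x}}_1(\tau)-\bm\mu_\vartheta=(\bm\mu_0-\bm\mu_\vartheta)+\mathbf{H}_0\bm\varepsilon_1$, the cross term vanishes and $E(\bm\varepsilon_1\bm\varepsilon_1^\top)=\mathbf{I}_m$ turns the noise part into a trace, giving
\[
E\left[\mathcal{l}(\widetilde{\mathbf{x}}_1(\tau),\widetilde{\mathbf{z}}_0(\tau);\bm\vartheta)\mid\mathcal{F}_0\right]
=-\tfrac12\mathrm{tr}(\mathbf{M}_\vartheta^{-1}\mathbf{M}_0)
-\tfrac12(\bm\mu_0-\bm\mu_\vartheta)^\top\mathbf{M}_\vartheta^{-1}(\bm\mu_0-\bm\mu_\vartheta)
-\tfrac12\log\det\mathbf{M}_\vartheta .
\]

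Next I would form the excess of the conditional expectation at $\bm\theta(\tau)$ over that at a generic $\bm\vartheta$. Using $\mathrm{tr}(\mathbf{M}_0^{-1}\mathbf{M}_0)=m$ and $\log\det\mathbf{M}_\vartheta-\log\det\mathbf{M}_0=-\log\det\mathbf{A}$ with $\mathbf{A}:=\mathbf{M}_\vartheta^{-1}\mathbf{M}_0$, this reads
\[
E\!\left[\mathcal{l}(\cdot;\bm\theta(\tau))-\mathcal{l}(\cdot;\bm\vartheta)\mid\mathcal{F}_0\right]
=\tfrac12\big(\mathrm{tr}(\mathbf{A})-\log\det\mathbf{A}-m\big)
+\tfrac12(\bm\mu_0-\bm\mu_\vartheta)^\top\mathbf{M}_\vartheta^{-1}(\bm\mu_0-\bm\mu_\vartheta).
\]
The second term is nonnegative because $\mathbf{M}_\vartheta^{-1}$ is positive definite (Assumption~3.1 gives $\lambda_{\min}(\mathbf{M}_\vartheta)\ge\underline c>0$). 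For the first term, $\mathbf{A}$ is similar to the symmetric positive definite matrix $\mathbf{M}_\vartheta^{-1/2}\mathbf{M}_0\mathbf{M}_\vartheta^{-1/2}$, so its eigenvalues $\lambda_i$ are real and strictly positive; hence $\mathrm{tr}(\mathbf{A})-\log\det\mathbf{A}-m=\sum_i(\lambda_i-\log\lambda_i-1)\ge0$ by the scalar inequality $x-\log x-1\ge0$, with equality if and only if every $\lambda_i=1$, i.e. $\mathbf{A}=\mathbf{I}_m$, equivalently $\mathbf{M}_\vartheta=\mathbf{M}_0$. Taking unconditional expectations (finite because $\mathbf{M}_\vartheta^{-1}$ is bounded by Assumption~3.1 and $\widetilde{\mathbf{x}}_t(\tau)$ has $r\ge2$ moments by Proposition~2.1, so $E|\mathbf{M}_0|$ and $E|\log\det\mathbf{M}_\vartheta|$ are controlled) shows $E(\mathcal{l}(\cdot;\bm\theta(\tau)))\ge E(\mathcal{l}(\cdot;\bm\vartheta))$, so $\bm\theta(\tau)$ is a maximizer.

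For uniqueness I would trace back the equality case: the unconditional excess vanishes only if the $\mathcal{F}_0$-conditional excess is zero a.s., which by the two nonnegative pieces forces simultaneously $\mathbf{M}(\widetilde{\mathbf{z}}_0(\tau);\bm\vartheta)=\mathbf{M}(\widetilde{\mathbf{z}}_0(\tau);\bm\theta(\tau))$ and $\bm\mu(\widetilde{\mathbf{z}}_0(\tau);\bm\vartheta)=\bm\mu(\widetilde{\mathbf{z}}_0(\tau);\bm\theta(\tau))$ a.s. Equality of $\mathbf{M}=\mathbf{H}\mathbf{H}^\top$ together with equality of $\bm\mu$ then triggers the identification condition Assumption~3.2, yielding $\bm\vartheta=\bm\theta(\tau)$. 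The main obstacle I anticipate is twofold: establishing the matrix inequality cleanly (justifying that $\mathbf{A}$ has positive real eigenvalues despite being non-symmetric, and pinning down equality as $\mathbf{M}_\vartheta=\mathbf{M}_0$) and making the integrability arguments rigorous so that all expectations are well defined; a minor point to handle carefully is that the likelihood only sees $\mathbf{M}=\mathbf{H}\mathbf{H}^\top$, so the identification must be applied at the level of $\mathbf{M}$, which is exactly the content supplied by Assumption~3.2.
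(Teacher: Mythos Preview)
Your proposal is correct and follows essentially the same route as the paper's proof: condition on the past to reduce the Gaussian quasi-likelihood comparison to the trace--log-determinant inequality $\sum_i(\lambda_i-\log\lambda_i-1)\ge 0$ plus a nonnegative quadratic form, and then invoke Assumption~3.2 for uniqueness. The only cosmetic difference is that the paper works directly with the symmetric matrix $\mathbf{M}_\vartheta^{-1/2}\mathbf{M}_0\mathbf{M}_\vartheta^{-1/2}$ while you use $\mathbf{A}=\mathbf{M}_\vartheta^{-1}\mathbf{M}_0$ and pass to its symmetric similar matrix; the integrability and identification caveats you flag are handled (or glossed over) identically in the paper.
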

	
	\begin{lemma}\label{LemmaA.2}
		Suppose Assumptions 3--4 hold. Then, $\mathcal{l},\gradient\mathcal{l}, \gradient^2\mathcal{l}\in\mathcal{H}(3,\bm{\chi},M)$ for some $M > 0 $ and $\bm{\chi}=\{\chi_{j}\}_{j=1,2,\ldots}$ with $\chi_j =O(j^{-(2+s)})$ and $s > 0$. In addition, if $\mathbf{H}(\mathbf{z};\bm{\vartheta}) \equiv \mathbf{H}(\mathbf{0};\bm{\vartheta})$, $\mathcal{l},\gradient\mathcal{l}, \gradient^2\mathcal{l}\in \mathcal{H}(2,\bm{\chi},M)$.
	\end{lemma}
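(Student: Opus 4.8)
The plan is to verify directly the three defining inequalities of the class $\mathcal{H}(C,\bm{\chi},M)$ for each of the three functions, working from the closed-form expressions for $\gradient_{\bm{\vartheta}}\mathcal{l}$ and $\gradient_{\bm{\vartheta}}^2\mathcal{l}$ recorded in \eqref{EqA.1}--\eqref{EqA.2}. Every term appearing there is built out of the primitive objects $\bm{\mu}$, $\mathbf{H}$, $\mathbf{M}=\mathbf{H}\mathbf{H}^\top$, $\mathbf{M}^{-1}$ and their first two $\bm{\vartheta}$-derivatives, multiplied against the residual $\mathbf{x}-\bm{\mu}$; so the whole proof reduces to, first, establishing growth and Lipschitz estimates for these primitives and, second, propagating them through sums and products while tracking the polynomial degree in $|\mathbf{z}|_{\bm{\chi}}$.

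First I would assemble the building blocks. Assumption \ref{Ass3}.1 gives $|\mathbf{M}^{-1}(\mathbf{z};\bm{\vartheta})|\le \underline{c}^{-1}$ uniformly, and, via a singular-value decomposition of $\mathbf{H}$ together with the identity $\mathbf{M}^{-1}\mathbf{H}\mathbf{H}^\top\mathbf{M}^{-1}=\mathbf{M}^{-1}$, the sharper bounds $|\mathbf{M}^{-1}\mathbf{H}|=|\mathbf{H}^\top\mathbf{M}^{-1}|\le \underline{c}^{-1/2}$; it also bounds $\log\det\mathbf{M}$ from below, while its upper bound follows from the linear growth of $\mathbf{H}$. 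From Assumptions \ref{Ass1}.2, \ref{Ass2}.1 and \ref{Ass4}, each of $\bm{\mu}$, $\mathbf{H}$ and their $\bm{\vartheta}$-derivatives up to order two is (a) bounded at $\mathbf{z}=\mathbf{0}$ uniformly over the compact $\bm{\Theta}_r$, (b) of linear growth $\lesssim 1+|\mathbf{z}|_{\bm{\chi}}$, (c) Lipschitz in $\mathbf{z}$ with a summable coefficient sequence, and (d) Lipschitz in $\bm{\vartheta}$ with coefficient $\lesssim|\mathbf{z}|_{\bm{\chi}}$, where throughout one takes a single dominating sequence $\chi_j=O(j^{-(2+s)})$. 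Differentiating through $\gradient_{\bm{\vartheta}}\mathbf{M}^{-1}=-\mathbf{M}^{-1}(\gradient_{\bm{\vartheta}}\mathbf{M})\mathbf{M}^{-1}$ (and once more for the second derivative) then transfers all of (a)--(d) to $\mathbf{M}$, $\mathbf{M}^{-1}$ and their derivatives.

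With these in hand, the first condition of the class is immediate: at $\mathbf{z}=\mathbf{0}$ every factor in $\mathcal{l},\gradient_{\bm{\vartheta}}\mathcal{l},\gradient_{\bm{\vartheta}}^2\mathcal{l}$ collapses to a quantity bounded over $\bm{\Theta}_r$ by continuity and compactness. For the second and third conditions I would write each $\bm{\vartheta}$- or $\mathbf{z}$-difference as a telescoping sum over the constituent factors, replacing one factor at a time by its difference and bounding the remaining factors by their growth estimates; the mean value theorem supplies the $|\bm{\vartheta}-\bm{\vartheta}'|$ and the Lipschitz estimates supply the $|\mathbf{z}-\mathbf{z}'|_{\bm{\chi}}$. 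The decisive point, and the main obstacle, is the degree bookkeeping: a naive count inflates the power of $|\mathbf{z}|_{\bm{\chi}}$ because $\mathbf{H}$ itself grows linearly. This is controlled by systematically pairing every factor $\mathbf{M}^{-1}$ with an adjacent \emph{undifferentiated} $\mathbf{H}$ or $\mathbf{H}^\top$, which by the building-block bound is a bounded factor; after this regrouping only the two residual factors $\mathbf{x}-\bm{\mu}$ and the \emph{differentiated} copies of $\bm{\mu}$ and $\mathbf{H}$ contribute to the polynomial degree, which is what yields the exponent $C$ in the second and third conditions.

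Finally, the special case $\mathbf{H}(\mathbf{z};\bm{\vartheta})\equiv\mathbf{H}(\mathbf{0};\bm{\vartheta})$ is handled by the same scheme: now $\mathbf{M}$, $\mathbf{M}^{-1}$ and all their $\bm{\vartheta}$-derivatives are constant in $\mathbf{z}$, hence bounded and $\mathbf{z}$-Lipschitz with zero coefficient, so they contribute no power of $|\mathbf{z}|_{\bm{\chi}}$ at all. Removing this one source of growth everywhere lowers the effective degree by one and delivers membership in $\mathcal{H}(2,\bm{\chi},M)$. I expect the only genuinely delicate part to be the combinatorics of the pairing argument for $\gradient_{\bm{\vartheta}}^2\mathcal{l}$, where the second derivative of $\mathbf{M}^{-1}$ produces many cross terms that must each be regrouped so that no unpaired $\mathbf{M}^{-1}$ is left multiplying two linear-growth factors.
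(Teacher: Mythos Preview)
Your proposal is correct and follows essentially the same route as the paper. The paper carries out the computation explicitly only for $\mathcal{l}$---splitting the difference into the quadratic-form part and the $\log\det$ part, and using the bound $|\mathbf{H}^{-1}|\le\underline{c}^{-1/2}$ (equivalent to your $|\mathbf{M}^{-1}\mathbf{H}|\le\underline{c}^{-1/2}$) as the key device to keep the polynomial degree at~$3$---and then simply asserts that $\gradient\mathcal{l}$ and $\gradient^2\mathcal{l}$ follow by the same argument; your plan is the same strategy laid out more systematically, with the ``pairing'' observation making explicit why the undifferentiated $\mathbf{H}$ factors inside $\partial_{\bm{\vartheta}}\mathbf{M}$ and $\partial_{\bm{\vartheta}}^2\mathbf{M}$ never raise the degree once absorbed into an adjacent $\mathbf{M}^{-1}$.
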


	\begin{lemma} \label{LemmaA.3}
		Suppose Assumptions 1--2 hold with $r\geq C$. Then 
		\begin{enumerate}
			\item [1.] $\sup_{\tau \in [0,1]}\left\|\sup_{\bm{\eta}\neq\bm{\eta}^\prime}\frac{|\widetilde{G}_\tau(\bm{\eta})-\widetilde{G}_\tau(\bm{\eta}^\prime)|}{|\bm{\eta}-\bm{\eta}^\prime|}\right\|_1 \leq M$ and $\left\|\sup_{\tau\neq\tau^\prime,\bm{\eta}\neq\bm{\eta}^\prime}\frac{|\widetilde{G}_\tau(\bm{\eta})-\widetilde{G}_{\tau^\prime}(\bm{\eta}^\prime)|}{|\tau-\tau^\prime|+|\bm{\eta}-\bm{\eta}^\prime|}\right\|_1 \leq Mh^{-2}$;
			
			\item [2.] $\sup_{\tau \in [0,1],\bm{\eta} \in \mathbf{E}_T(r)}|E(\widetilde{B}_\tau(\bm{\eta}))-\int_{-\tau/h}^{(1-\tau)/h}\widehat{K}(u)E(g(\widetilde{\mathbf{y}}_0(\tau),\bm{\eta}_1+\bm{\eta}_2u))\mathrm{d}u |=O((Th)^{-1}+h)$;
			
			\item [3.] $\left\|\sup_{\tau\neq\tau^\prime}\frac{|\bm{\Pi}(\tau)-\bm{\Pi}(\tau^\prime)|}{|\tau-\tau^\prime|}\right\|_1 \leq M h^{-2}$ with $\bm{\Pi}(\tau) :=(Th)^{-1}\sum_{t=1}^{T}\left[\mathbf{m}_t^{(2)}(\tau,\tau_t) -E(\mathbf{m}_t^{(2)}(\tau,\tau_t))\right]$.
		\end{enumerate}
		In addition, suppose $\chi_j = O(j^{-(2+s)})$ for some $s>0$, then
		\begin{enumerate}
			\item [4.] $ \|\sup_{\tau \in [0,1], \bm{\eta} \in \mathbf{E}_T(r)}|\widetilde{G}_\tau(\bm{\eta})-G_\tau^c(\bm{\eta})|  \|_1 = O((Th)^{-1})$.
		\end{enumerate}
	\end{lemma}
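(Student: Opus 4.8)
\emph{Strategy.} All four statements are consequences of the defining Lipschitz inequalities of the class $\mathcal{H}(C,\bm{\chi},M)$ combined with the uniform moment and approximation bounds of Propositions 2.1--2.2, and I would treat them in the stated order. Throughout, write $\bm{\vartheta}_t(\bm{\eta}) = \bm{\eta}_1 + \bm{\eta}_2(\tau_t-\tau)/h$ and note that on the kernel support $|(\tau_t-\tau)/h|\le 1$, while $\bm{\eta}_2\in h\cdot\bm{\Theta}^{(1)}$ forces $|\bm{\eta}_2|=O(h)$ and hence $|\bm{\vartheta}_t(\bm{\eta})-\bm{\vartheta}_t(\bm{\eta}')|\le 2|\bm{\eta}-\bm{\eta}'|$. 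Since $r\ge C$ and $|\bm{\chi}|_1<\infty$, Proposition 2.1 together with Minkowski's inequality yields the uniform moment bound $\sup_{\tau}\| 1+|\widetilde{\mathbf{z}}_{t-1}(\tau)|_{\bm{\chi}}^{C}\|_1<\infty$, which I would use repeatedly to pass from the pointwise $\mathcal{H}$-bounds to $\|\cdot\|_1$-bounds; an identical bound holds for $\mathbf{z}_{t-1}^c$ because $\|\mathbf{x}_t\|_r=O(1)$ by Proposition 2.2(2).

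\emph{Parts 1 and 3.} For the first bound in Part 1 I would difference only the parameter slot: the $\mathcal{H}$-Lipschitz property in $\bm{\vartheta}$ gives $|g(\widetilde{\mathbf{y}}_t(\tau_t),\bm{\vartheta}_t(\bm{\eta}))-g(\widetilde{\mathbf{y}}_t(\tau_t),\bm{\vartheta}_t(\bm{\eta}'))|\le 2M|\bm{\eta}-\bm{\eta}'|(1+|\widetilde{\mathbf{z}}_{t-1}(\tau_t)|_{\bm{\chi}}^{C})$, the same bound controls the centering term, and $(Th)^{-1}\sum_t|\widehat{K}((\tau_t-\tau)/h)|=O(1)$ combined with the uniform moment bound delivers the constant $M$ after taking $\|\cdot\|_1$. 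For the joint $(\tau,\bm{\eta})$-bound I would split the $\tau$-increment into a kernel part and a function part: the kernel part uses Lipschitz continuity of $\widehat{K}$, whose Lipschitz constant in $\tau$ is $O(h^{-1})$, and bounding each of the $T$ summands crudely by this constant (discarding localization) produces the factor $(Th)^{-1}\cdot T\cdot h^{-1}=h^{-2}$; the function part, varying through $\widetilde{\mathbf{y}}_t(\tau)$ and through $\bm{\vartheta}_t(\bm{\eta})$, is of smaller order by Proposition 2.2(1) and the $\mathcal{H}$-bounds. Part 3 is structurally identical: $\mathbf{m}_t^{(2)}$ carries the same kernel factor, and the Taylor remainder $\mathbf{d}(\tau,\tau_t)=O(|\tau-\tau_t|^2)$ is itself Lipschitz in $\tau$, so the same crude kernel count yields the (loose but sufficient) $h^{-2}$ bound.

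\emph{Part 2.} Here I would first use stationarity of $\{\widetilde{\mathbf{x}}_s(\tau_t)\}_s$ to replace $E\,g(\widetilde{\mathbf{y}}_t(\tau_t),\cdot)$ by $E\,g(\widetilde{\mathbf{y}}_0(\tau_t),\cdot)$, then move the reference point from $\tau_t$ to $\tau$ at cost $O(h)$ per term via the $\mathcal{H}$-Lipschitz-in-$\mathbf{z}$ property and $\|\widetilde{\mathbf{x}}_0(\tau_t)-\widetilde{\mathbf{x}}_0(\tau)\|_r=O(|\tau_t-\tau|)=O(h)$ from Proposition 2.2(1). What remains is a Riemann sum $(Th)^{-1}\sum_t\widehat{K}(u_t)\phi(u_t)$ with nodes $u_t=(\tau_t-\tau)/h$ of spacing $(Th)^{-1}$ and a Lipschitz integrand $\widehat{K}\phi$ supported on $[-1,1]$, whose quadrature error is $O((Th)^{-1})$; the two errors combine to $O((Th)^{-1}+h)$.

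\emph{Part 4 (the crux).} The genuine difficulty is the fourth estimate, comparing the stationary-approximation functional $\widetilde{G}_\tau$ with the feasible, truncated functional $G_\tau^c$. I would bound the per-term difference by the $\mathcal{H}$-Lipschitz-in-$\mathbf{z}$ inequality, so that $\sup_{\bm{\vartheta}}|g(\widetilde{\mathbf{y}}_t(\tau_t),\bm{\vartheta})-g(\mathbf{y}_t^c,\bm{\vartheta})|\le R_t$ with $R_t=M(1+|\widetilde{\mathbf{z}}_{t-1}(\tau_t)|_{\bm{\chi}}^{C-1}+|\mathbf{z}_{t-1}^c|_{\bm{\chi}}^{C-1})\,|\widetilde{\mathbf{z}}_{t-1}(\tau_t)-\mathbf{z}_{t-1}^c|_{\bm{\chi}}$. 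The main work is to bound $\||\widetilde{\mathbf{z}}_{t-1}(\tau_t)-\mathbf{z}_{t-1}^c|_{\bm{\chi}}\|_r$ by a column-by-column decomposition into three pieces: for $1\le j\le t-1$ a ``drift'' term $\widetilde{\mathbf{x}}_{t-j}(\tau_t)-\widetilde{\mathbf{x}}_{t-j}(\tau_{t-j})$ of size $O(j/T)$ by Proposition 2.2(1) and an ``approximation'' term $\widetilde{\mathbf{x}}_{t-j}(\tau_{t-j})-\mathbf{x}_{t-j}$ of size $O(T^{-1})$ by Proposition 2.2(2); and for $j\ge t$ the ``truncation'' term $\widetilde{\mathbf{x}}_{t-j}(\tau_t)$, which is only $O(1)$ in $\|\cdot\|_r$. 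Weighting by $\chi_j=O(j^{-(2+s)})$ and using $\sum_j\chi_j\,j<\infty$ and the tail bound $\sum_{j\ge t}\chi_j=O(t^{-(1+s)})$ gives $\||\widetilde{\mathbf{z}}_{t-1}(\tau_t)-\mathbf{z}_{t-1}^c|_{\bm{\chi}}\|_r=O(T^{-1}+t^{-(1+s)})$, and a H\"older step (valid precisely because $r\ge C$, so $(C-1)\tfrac{r}{r-1}\le r$) gives $\|R_t\|_1=O(T^{-1}+t^{-(1+s)})$, a summable sequence. Finally, dominating the kernel by $\|\widehat{K}\|_\infty I(|\tau_t-\tau|\le h)$ and crudely replacing the $\tau$-window sum by the full sum over $t$, I obtain $\|\sup_{\tau,\bm{\eta}}|\widetilde{G}_\tau(\bm{\eta})-G_\tau^c(\bm{\eta})|\|_1\le (Th)^{-1}\sum_{t=1}^T O(\|R_t\|_1)=O((Th)^{-1})$. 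The obstacle lies exactly here: one must control the supremum over both $\tau$ and $\bm{\eta}$ while still attaining the sharp rate, and this hinges on the summability $\sum_j\chi_j\,j<\infty$ (the role of the extra hypothesis $\chi_j=O(j^{-(2+s)})$), which is what turns the naive $O(Th)$ window count into a harmless constant and keeps the truncation tail summable in $t$.
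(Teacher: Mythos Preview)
Your proposal is correct and follows essentially the same approach as the paper: the $\mathcal{H}$-Lipschitz bounds plus the uniform moment and approximation estimates of Propositions 2.1--2.2, with the same crude (non-localized) kernel count producing the $h^{-2}$ factors in Parts~1 and~3, the stationarity-plus-Riemann-sum argument in Part~2, and the column-by-column drift/approximation/truncation decomposition together with $\sum_j j\chi_j<\infty$ in Part~4. Two cosmetic remarks: in Part~1, $\widetilde{\mathbf{y}}_t(\tau_t)$ does not actually depend on the free variable $\tau$, so the ``varying through $\widetilde{\mathbf{y}}_t(\tau)$'' contribution you mention is vacuously zero (only the kernel and the parameter slot $\bm{\eta}_1+\bm{\eta}_2(\tau_t-\tau)/h$ move with $\tau$); and in Part~4 the paper routes the comparison through the untruncated $\mathbf{y}_t$ via two triangle-inequality steps, whereas you compare $\widetilde{\mathbf{y}}_t(\tau_t)$ and $\mathbf{y}_t^c$ directly---these are equivalent and yield the same $O(T^{-1}+\sum_{j\ge t}\chi_j)$ per-term bound.
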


	\begin{lemma} \label{LemmaA.4}
		Let $g(\cdot)\in\mathcal{H}(C,\bm{\chi},M)$, where $\chi_j = O(j^{-(a+s)})$ for some $s>0$ and $a \geq 1$. Suppose Assumptions 1--2 hold with $q= r/C\geq 1$, and
		
		\begin{eqnarray*}
			\sup_{\tau \in [0,1]} [\alpha_j(\bm{\theta}(\tau)) +\beta_j(\bm{\theta}(\tau)) ]= O(j^{-(a+1+s)}).
		\end{eqnarray*}
		Then we obtain
		
		\begin{enumerate}
			\item $\sup_{\tau \in [0,1]}\delta_{q}^{\sup_{\bm{\vartheta}}|g(\widetilde{\mathbf{y}}_t(\tau),\bm{\vartheta})|}(j) = O(j^{-(a+s)})$;
			
			\item $\sup_{\tau \in [0,1]}\sup_{u,\bm{\eta}}\delta_{q}^{m(\tau,\bm{\eta},u)}(j) = O(j^{-(a+s)})$ and $\sup_{\tau \in [0,1]}\delta_{q}^{\sup_{u,\bm{\eta}}|m(\tau,\bm{\eta},u)|}(j) = O(j^{-(a+s)})$, where $m_t(\tau,\bm{\eta},u):=\widehat{K}\left(\frac{\tau-u}{h}\right)g(\widetilde{\mathbf{y}}_t(\tau),\bm{\eta}_1+\bm{\eta}_2\cdot(\tau-u)/h)$;
			
			\item $\sup_{\tau,u \in [0,1]}\delta_{q}^{\mathbf{m}_t^{(2)}(u,\tau)}(j) = O(h^2j^{-(a+s)})$ and $\sup_{u \in [0,1]}\delta_{q}^{\sup_{\tau}|\mathbf{m}_t^{(2)}(u, \tau)|}(j) = O(h^2j^{-(a+s)})$.
		\end{enumerate}
	\end{lemma}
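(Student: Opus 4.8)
The plan is to reduce each of the three dependence measures for the composite functional to a $\bm{\chi}$-weighted sum of the dependence measures $\delta_r^{\widetilde{\mathbf{x}}(\tau)}(\cdot)$ already controlled in Proposition 2.1, and then to bound the resulting convolution uniformly in $\tau$. Throughout, write $\widetilde{\mathbf{y}}_t^{(t-j,*)}(\tau)$ for the coupled version of $\widetilde{\mathbf{y}}_t(\tau)=(\widetilde{\mathbf{x}}_t(\tau),\widetilde{\mathbf{x}}_{t-1}(\tau),\ldots)$ in which $\bm{\varepsilon}_{t-j}$ is replaced by an independent copy $\bm{\varepsilon}_{t-j}^*$, exactly as in the projection-operator computation of Appendix A.1.

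First I would prove part 1. Using $|\sup_{\bm{\vartheta}}f-\sup_{\bm{\vartheta}}g|\le\sup_{\bm{\vartheta}}|f-g|$ and the third defining inequality of the class $\mathcal{H}(C,\bm{\chi},M)$,
\begin{eqnarray*}
\delta_{q}^{\sup_{\bm{\vartheta}}|g(\widetilde{\mathbf{y}}_t(\tau),\bm{\vartheta})|}(j) &\le& \left\| \sup_{\bm{\vartheta}}\left|g(\widetilde{\mathbf{y}}_t(\tau),\bm{\vartheta})-g(\widetilde{\mathbf{y}}_t^{(t-j,*)}(\tau),\bm{\vartheta})\right| \right\|_q \\
&\le& M\left\| |\widetilde{\mathbf{y}}_t(\tau)-\widetilde{\mathbf{y}}_t^{(t-j,*)}(\tau)|_{\bm{\chi}}\left(1+|\widetilde{\mathbf{y}}_t(\tau)|_{\bm{\chi}}^{C-1}+|\widetilde{\mathbf{y}}_t^{(t-j,*)}(\tau)|_{\bm{\chi}}^{C-1}\right)\right\|_q.
\end{eqnarray*}
Since $q=r/C$ and $\sup_{\tau}\|\widetilde{\mathbf{x}}_t(\tau)\|_r<\infty$ by the moment bound in Proposition 2.1, Hölder's inequality (splitting the exponents as $C/r=1/r+(C-1)/r$) bounds the right-hand side by $M\cdot O(1)\cdot\big\||\widetilde{\mathbf{y}}_t(\tau)-\widetilde{\mathbf{y}}_t^{(t-j,*)}(\tau)|_{\bm{\chi}}\big\|_r$, with the $O(1)$ uniform in $\tau$. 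As the $i$-th column of $\widetilde{\mathbf{y}}_t(\tau)$ is $\widetilde{\mathbf{x}}_{t-i+1}(\tau)$ and the coupling alters only columns with $t-i+1\ge t-j$, the Minkowski inequality yields
\begin{eqnarray*}
\big\||\widetilde{\mathbf{y}}_t(\tau)-\widetilde{\mathbf{y}}_t^{(t-j,*)}(\tau)|_{\bm{\chi}}\big\|_r \le \sum_{i=1}^{j+1}\chi_i\,\delta_r^{\widetilde{\mathbf{x}}(\tau)}(j-i+1).
\end{eqnarray*}

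It remains to bound this convolution uniformly in $\tau$, and this is the main obstacle. Here I would use $\chi_i=O(i^{-(a+s)})$, the dependence bound of Proposition 2.1, the hypothesis $\sup_{\tau}[\alpha_j(\bm{\theta}(\tau))+\beta_j(\bm{\theta}(\tau))]=O(j^{-(a+1+s)})$, and the fact that $\sup_{\tau}\rho(\tau)<1$ (by continuity of $\bm{\theta}$ on $[0,1]$ and $\bm{\theta}(\tau)\in\bm{\Theta}_r$), which makes the infimum bound of Proposition 2.1 hold uniformly in $\tau$. Splitting the sum at $i\le j/2$ and $i>j/2$ shows the first range contributes $O(j^{-(a+s)})$ because $\chi_i$ is harmless there and $\sum_l\delta_r^{\widetilde{\mathbf{x}}(\tau)}(l)$ is summable (its exponent exceeds one), while the second range has a small shifted dependence measure against $\sum_i\chi_i<\infty$. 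The delicate point is that the crude infimum bound carries a polylogarithmic factor at the balanced choice $p\sim k/\log k$, so to recover the clean rate $O(j^{-(a+s)})$ one must exploit the extra unit of decay ($a+1+s$ versus the target $a+s$): the renewal inequality satisfied by $\delta_r^{\widetilde{\mathbf{x}}(\tau)}$, which is a ``one big jump'' phenomenon for polynomially decaying (subexponential) convolutions, sharpens the estimate to $\sup_\tau\delta_r^{\widetilde{\mathbf{x}}(\tau)}(k)=O(k^{-(a+1+s)})$, strictly faster than the target, thereby rendering the tail of the $\bm{\chi}$-convolution negligible. Establishing this uniform-in-$\tau$ sharpening is precisely where care is needed.

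Parts 2 and 3 then follow with little extra work. For part 2, the map $\bm{\varepsilon}\mapsto m_t(\tau,\bm{\eta},u)$ enters only through $\widetilde{\mathbf{y}}_t(\tau)$, since $\widehat{K}((\tau-u)/h)$ and the evaluation point $\bm{\eta}_1+\bm{\eta}_2(\tau-u)/h$ are deterministic; hence for fixed $(u,\bm{\eta})$ the part 1 bound applies verbatim after absorbing $\|\widehat{K}\|_\infty$ into the constant, and the two displayed claims follow by placing $\sup_{u,\bm{\eta}}$ outside, respectively inside, the coupling and using $\sup_{\bm{\vartheta}}|g|$ as the envelope. For part 3, the only new ingredient is the deterministic prefactor $\widehat{K}((\tau-u)/h)\mathbf{d}(u,\tau)$ with $\mathbf{d}(u,\tau)=\bm{\theta}(u)-\bm{\theta}(\tau)-(u-\tau)\bm{\theta}^{(1)}(\tau)$; since each coordinate of $\bm{\theta}$ lies in $C^3[0,1]$, a second-order Taylor expansion gives $\mathbf{d}(u,\tau)=O((u-\tau)^2)$, and the kernel support forces $|u-\tau|\le h$, so $\widehat{K}((\tau-u)/h)\mathbf{d}(u,\tau)=O(h^2)$ uniformly. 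Factoring this out and applying the part 1 estimate to $g(\widetilde{\mathbf{y}}_t(u),\cdot)$ delivers the stated $O(h^2 j^{-(a+s)})$, again uniformly in $\tau$ and $u$.
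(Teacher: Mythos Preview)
Your proof is correct and follows essentially the same route as the paper: apply the $\mathcal{H}$-class Lipschitz bound (this is exactly Lemma~\ref{LemmaB.1}.1, which packages the H\"older step you spell out), reduce to the convolution $\sum_{i}\chi_i\,\delta_r^{\widetilde{\mathbf{x}}(\tau)}(j-i)$, and split at $j/2$; parts~2 and~3 are handled in both arguments by pulling out the deterministic kernel factor and, for part~3, the $O(h^2)$ size of $\mathbf{d}(u,\tau)$ on the kernel support.

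The one place you diverge is the polylogarithmic loss in Proposition~2.1.2. You propose sharpening to $\sup_\tau\delta_r^{\widetilde{\mathbf{x}}(\tau)}(k)=O(k^{-(a+1+s)})$ via a renewal/``one big jump'' argument. That sharpening is correct (the dependence recursion is subcritical with polynomially decaying coefficients), but it is more than the paper does or needs. The paper simply invokes Proposition~2.1.2 with the hypothesis $\alpha_j+\beta_j=O(j^{-(a+1+s)})$ and records $\delta_r^{\widetilde{\mathbf{x}}(\tau)}(j)=O(j^{-(a+s)})$ ``for some $s>0$'', i.e.\ it absorbs any polylog loss by slightly reducing~$s$; since every downstream use of this lemma only requires the rate for \emph{some} $s>0$, this is sufficient. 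Your renewal refinement buys the literal exponent $a+s$ from the hypothesis, whereas the paper's approach is shorter but tacitly trades a bit of~$s$.
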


	\begin{lemma} \label{LemmaA.5}
		Under the conditions of Lemma \ref{LemmaA.4} with $q=r/C > 1$, then 
		
		\begin{enumerate}
			\item $\|\widetilde{G}_\tau(\bm{\eta})\|_q = O\left((Th)^{-(q'-1)/q'}\right)$ with $q' = \min(2,q)$,
			
			\item $\sup_{\bm{\eta} \in \mathbf{E}_T(r)}|\widetilde{G}_\tau(\bm{\eta})| =o_P(1)$;
		\end{enumerate}
		Suppose further $q = r/C >2$ and $a \geq 3/2$. Then 
		\begin{enumerate}
			\item[3.] $\sup_{\tau\in[0,1]} \sup_{\bm{\eta} \in \mathbf{E}_T(r)} |\widetilde{G}_\tau(\bm{\eta})| = O_P( (\log T)^{1/2}(Th)^{-1/2}h^{-1/2})$.
		\end{enumerate}
	\end{lemma}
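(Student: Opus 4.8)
The plan is to treat $\widetilde{G}_\tau(\bm{\eta})$ as a centered, kernel-weighted sum of the weakly dependent array $\{g(\widetilde{\mathbf{y}}_t(\tau_t),\cdot)\}$ and to exploit the uniform dependence-measure bounds of Lemma \ref{LemmaA.4}. For part (1) I would fix $\tau$ and $\bm{\eta}$ and decompose each summand through the projection operators $\mathcal{P}_{t-l}$ of Appendix \ref{App.A1}, writing $g(\widetilde{\mathbf{y}}_t(\tau_t),\cdot)-E(\cdot)=\sum_{l\ge 0}\mathcal{P}_{t-l}g(\widetilde{\mathbf{y}}_t(\tau_t),\cdot)$. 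For each fixed $l$ the array $\{\widehat{K}((\tau_t-\tau)/h)\,\mathcal{P}_{t-l}g(\widetilde{\mathbf{y}}_t(\tau_t),\cdot)\}_t$ is a martingale-difference sequence with $\|\mathcal{P}_{t-l}g\|_q\le \delta_q(l)=O(l^{-(a+s)})$ by Lemma \ref{LemmaA.4}.1, while the kernel forces only $O(Th)$ nonzero terms, with $\sum_t|\widehat{K}((\tau_t-\tau)/h)|^{q'}=O(Th)$. Applying Burkholder's inequality when $q\ge 2$ (so $q'=2$) and the martingale $\mathbb{L}^q$ inequality $\|\sum_t d_t\|_q^q\le M\sum_t\|d_t\|_q^q$ when $1<q\le 2$ (so $q'=q$), each inner sum is $O((Th)^{1/q'}\delta_q(l))$; summing over $l$ with $\sum_l\delta_q(l)<\infty$ and dividing by $Th$ gives $\|\widetilde{G}_\tau(\bm{\eta})\|_q=O((Th)^{-(q'-1)/q'})$.

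Part (2) then follows from a covering argument. Since $\mathbf{E}_T(r)=\bm{\Theta}_r\times(h\cdot\bm{\Theta}^{(1)})$ is bounded, I would take a finite $\epsilon$-net $\{\bm{\eta}^{(i)}\}_{i\le N(\epsilon)}$ and use $\sup_{\bm{\eta}}|\widetilde{G}_\tau(\bm{\eta})|\le \max_i|\widetilde{G}_\tau(\bm{\eta}^{(i)})|+\epsilon\,L_T$, where $L_T:=\sup_{\bm{\eta}\ne\bm{\eta}'}|\widetilde{G}_\tau(\bm{\eta})-\widetilde{G}_\tau(\bm{\eta}')|/|\bm{\eta}-\bm{\eta}'|=O_P(1)$ by Lemma \ref{LemmaA.3}.1. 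The maximum over the fixed net tends to zero in probability by part (1), so $\sup_{\bm{\eta}}|\widetilde{G}_\tau(\bm{\eta})|\le o_P(1)+O_P(1)\,\epsilon$; letting $\epsilon\downarrow 0$ yields the $o_P(1)$ conclusion.

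For part (3) I would discretize jointly in $(\tau,\bm{\eta})\in[0,1]\times\mathbf{E}_T(r)$ on a grid of mesh a small power of $T$. The joint modulus of continuity is $O_P(h^{-2})$ by the second bound in Lemma \ref{LemmaA.3}.1, so a polynomially small mesh makes the within-cell fluctuation negligible against the target rate while keeping the grid size $N$ polynomial in $T$, hence $\log N=O(\log T)$. The maximum over the grid I would control with a maximal inequality for weakly dependent sums, the required dependence measures of $m_t(\tau,\bm{\eta},u)$ being furnished by Lemma \ref{LemmaA.4}.2: truncating $g$ at a level that is a power of $T$, the bounded part gives a Freedman/Bernstein-type exponential tail with variance proxy $O((Th)^{-1})$, which after the union bound over the grid produces the $(\log T)^{1/2}(Th)^{-1/2}$ scale, and the discarded tail is controlled by the $q$-th moment through Markov's inequality.

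The main obstacle is the heavy-tailed contribution. Because the kernel normalization carries a factor $h^{-1}$ and only $q$-th moments with $q>2$ are available, the polynomial (Nagaev/tail) term would blow up against the grid size at the sharp rate $(\log T/(Th))^{1/2}$; it is precisely the extra factor $h^{-1/2}$ in the stated rate that multiplies the threshold by $h^{-1/2}$, hence the tail contribution by $h^{q/2}$, rendering it of order $N\,T^{1-q/2}h(\log T)^{-q/2}\to 0$. Carrying out the balance so that the discretization error, the Gaussian part, and the truncated tail all vanish simultaneously—which is exactly what the hypotheses $q=r/C>2$ and $a\ge 3/2$ secure—is the delicate bookkeeping at the heart of the argument.
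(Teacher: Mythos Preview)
Your treatment of parts (1) and (2) is essentially the paper's: projection decomposition plus Burkholder (with the $q\le 2$ variant), then a finite $\epsilon$-net combined with the $\bm{\eta}$-Lipschitz bound of Lemma \ref{LemmaA.3}.1. Nothing to add there.

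Part (3) has a genuine gap in the bookkeeping you flag as ``delicate''. Your plan is to control the maximum over the grid by a crude union bound, and you claim the heavy-tail (Nagaev) contribution is of order $N\,T^{1-q/2}h(\log T)^{-q/2}\to 0$. But the modulus of continuity in Lemma \ref{LemmaA.3}.1 is $O_P(h^{-2})$, so to make the within-cell error $o_P(\beta_T)$ with $\beta_T=(\log T)^{1/2}(Th)^{-1/2}h^{-1/2}$ you need mesh $\ll h^2\beta_T\asymp h(\log T)^{1/2}T^{-1/2}$. With $h=T^{-\kappa}$ this forces mesh $\ll T^{-(1/2+\kappa)}$, hence the grid in the $(2d+1)$-dimensional space $[0,1]\times\mathbf{E}_T(r)$ has cardinality $N\gtrsim T^{(1/2+\kappa)(2d+1)}$. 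The product $N\,T^{1-q/2}$ therefore vanishes only if $q>2+(1+2\kappa)(2d+1)$, which is far stronger than the stated hypothesis $q=r/C>2$ and depends on the parameter dimension $d$. The extra $h^{q/2}$ you extract from the slack in the rate does not rescue this, since $h$ is only a fixed power of $T$.

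The paper avoids this by not using a plain union bound on the polynomial tail. It invokes Theorem 6.2 of \cite{zhang2017gaussian}, a Nagaev-type maximal inequality for high-dimensional weakly dependent sums, applied to the family $\{m_t(\tau,\bm{\eta},\tau_t)\}_{(\tau,\bm{\eta})\in\text{grid}}$ with the uniform functional dependence measures supplied by Lemma \ref{LemmaA.4}.2. That inequality yields
\[
\Pr\!\Big(\sup_{\text{grid}}|\widetilde{G}_\tau(\bm{\eta})|>\beta_T M/2\Big)\;\le\;\frac{M\,T\,l^{q/2}}{(\beta_T Th)^q}+M\exp\!\Big(-\frac{M(\beta_T Th)^2}{T}\Big),
\]
where $l=\log(\#\text{grid})=O(\log T)$, not $\#\text{grid}$ itself. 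With $\beta_T Th=T^{1/2}(\log T)^{1/2}$ both terms are $O(T^{-(q-2)/2})+O(T^{-M})$, so $q>2$ suffices regardless of $d$. If you want to keep a hands-on truncation argument, you must either reproduce this logarithmic dependence (e.g., via a chaining/Gaussian-comparison step) or handle the heavy tail through a \emph{single} event such as $\{\max_t\sup_{\bm{\vartheta}}|g(\widetilde{\mathbf{y}}_t(\tau_t),\bm{\vartheta})|>L\}$ that does not require a union bound over the grid; your sketch does neither.
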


	\begin{lemma} \label{LemmaA.7}
		Suppose Assumptions 1--5 hold with $r > 6$, and
		
		\begin{eqnarray*}
			\sup_{\tau \in [0,1]} [\alpha_j(\bm{\theta}(\tau)) +\beta_j(\bm{\theta}(\tau)) ]= O(j^{-(a+1+s)})
		\end{eqnarray*}
		for $a\geq3/2$ and some $s > 0$. Then
		\begin{eqnarray*}
			\sup_{\tau\in[0,1]}&&\Big|\gradient \widetilde{\mathcal{L}}_\tau(\bm{\theta}(\tau),h\bm{\theta}^{(1)}(\tau)) -E[\gradient \widetilde{\mathcal{L}}_\tau(\bm{\theta}(\tau),h\bm{\theta}^{(1)}(\tau))]\\
			&&-\frac{1}{Th}\sum_{t=1}^{T}\widehat{\bm{K}}((\tau_t-\tau)/h)\otimes \gradient_{\bm{\vartheta}}\mathcal{l}(\widetilde{\bm{x}}_t(\tau_t),\widetilde{\bm{z}}_{t-1}(\tau_t);\bm{\theta}(\tau_t))\Big|=O_P(h^2\beta_T),
		\end{eqnarray*}
		where 
		\begin{eqnarray*}
			\beta_T &=&(\log T)^{1/2}(Th)^{-1/2}h^{-1/2},\\
			\widehat{\bm{K}}((\tau_t-\tau)/h)&=&K((\tau_t-\tau)/h) [1,(\tau_t-\tau)/h ]^\top,\\
			\widetilde{\mathcal{L}}_\tau(\bm{\theta}(\tau),h\bm{\theta}^{(1)}(\tau))&:=&T^{-1}\sum_{t=1}^{T}\mathcal{l}(\widetilde{\bm{x}}_t(\tau_t),\widetilde{\bm{z}}_{t-1}(\tau_t);\bm{\theta}(\tau)+\bm{\theta}^{(1)}(\tau)(\tau_t-\tau))K_h(\tau_t-\tau).
		\end{eqnarray*}
	\end{lemma}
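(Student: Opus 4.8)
The plan is to first differentiate $\widetilde{\mathcal{L}}_\tau$ in closed form and then isolate a second-order remainder carrying an explicit $h^2$. By the chain rule, evaluating $\gradient_{\bm{\eta}}$ at $(\bm{\theta}(\tau),h\bm{\theta}^{(1)}(\tau))$ turns the argument $\bm{\eta}_1+\bm{\eta}_2(\tau_t-\tau)/h$ into $\bm{\theta}(\tau)+\bm{\theta}^{(1)}(\tau)(\tau_t-\tau)$ and produces the stacking weight $\widehat{\bm{K}}((\tau_t-\tau)/h)=K((\tau_t-\tau)/h)[1,(\tau_t-\tau)/h]^\top$, so that $\gradient \widetilde{\mathcal{L}}_\tau(\bm{\theta}(\tau),h\bm{\theta}^{(1)}(\tau))=\frac{1}{Th}\sum_t \widehat{\bm{K}}((\tau_t-\tau)/h)\otimes \gradient_{\bm{\vartheta}}\mathcal{l}(\widetilde{\bm{x}}_t(\tau_t),\widetilde{\bm{z}}_{t-1}(\tau_t);\bm{\theta}(\tau)+\bm{\theta}^{(1)}(\tau)(\tau_t-\tau))$. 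This is exactly the identity used at the start of the proof of Theorem \ref{Thm3.1}.

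Next I subtract the target leading term $\frac{1}{Th}\sum_t \widehat{\bm{K}}((\tau_t-\tau)/h)\otimes \gradient_{\bm{\vartheta}}\mathcal{l}(\widetilde{\bm{x}}_t(\tau_t),\widetilde{\bm{z}}_{t-1}(\tau_t);\bm{\theta}(\tau_t))$ and Taylor expand in the parameter. Since each coordinate of $\bm{\theta}(\cdot)$ lies in $C^3[0,1]$, I write $\bm{\theta}(\tau)+\bm{\theta}^{(1)}(\tau)(\tau_t-\tau)=\bm{\theta}(\tau_t)-\mathbf{d}(\tau_t,\tau)$ with $\mathbf{d}(u,\tau)=\bm{\theta}(u)-\bm{\theta}(\tau)-(u-\tau)\bm{\theta}^{(1)}(\tau)=\tfrac12\bm{\theta}^{(2)}(\tau)(u-\tau)^2+O(|u-\tau|^3)$, and the mean value theorem turns the difference into (minus) a $\mathbf{m}_t^{(2)}$-type sum with $g=\gradient_{\bm{\vartheta}}^2\mathcal{l}$ evaluated at an intermediate parameter within $O(h^2)$ of $\bm{\theta}(\tau_t)$. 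On the kernel support $|\tau_t-\tau|\le h$ one has $|\mathbf{d}(\tau_t,\tau)|=O(h^2)$, which is the source of the $h^2$ prefactor. Because $E[\gradient_{\bm{\vartheta}}\mathcal{l}(\widetilde{\bm{x}}_t(\tau_t),\widetilde{\bm{z}}_{t-1}(\tau_t);\bm{\theta}(\tau_t))]=0$ by Lemma \ref{LemmaA.1}, the centering by $E[\gradient \widetilde{\mathcal{L}}_\tau]$ just recenters this remainder, so the quantity to control is, up to sign, precisely the centered process $\bm{\Pi}(\tau)$ of Lemma \ref{LemmaA.3}.3 (with the two-component kernel $\widehat{\bm{K}}$ in place of the scalar $\widehat{K}$, handled componentwise since both $K(x)$ and $xK(x)$ are Lipschitz bounded on $[-1,1]$); passing to the fixed-$\tau$ stationary evaluation $\widetilde{\bm{x}}_t(\tau)$ that defines $\bm{\Pi}$ is justified by the H\"older-in-$\tau$ estimate of Proposition \ref{proposition2.2}.1.

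It then remains to show $\sup_{\tau}|\bm{\Pi}(\tau)|=O_P(h^2\beta_T)$. Here $g=\gradient_{\bm{\vartheta}}^2\mathcal{l}\in\mathcal{H}(3,\bm{\chi},M)$ by Lemma \ref{LemmaA.2}, so with $r>6$ we have $q=r/3>2$ and $a\ge 3/2$, the exact regime of Lemma \ref{LemmaA.5}.3. The strategy is to replay the martingale/Doob--Burkholder maximal argument behind Lemma \ref{LemmaA.5}.3 for $\bm{\Pi}(\tau)$: Lemma \ref{LemmaA.4}.3 shows that the physical dependence measure of the summands $\mathbf{m}_t^{(2)}$ decays like $O(h^2 j^{-(a+s)})$, i.e.\ it is the dependence measure of an ordinary $\mathcal{H}$-class summand multiplied by $h^2$, so the same bound that yields the rate $\beta_T$ for $\widetilde{G}_\tau$ now yields $O_P(h^2\beta_T)$ pointwise. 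To upgrade this to a supremum over the continuum, I would discretize $[0,1]$ into a grid of width $\delta_T$, bound the maximum over the grid by a union bound against the pointwise tail estimate, and interpolate between grid points using the Lipschitz modulus $\|\sup_{\tau\neq\tau'}|\bm{\Pi}(\tau)-\bm{\Pi}(\tau')|/|\tau-\tau'|\|_1\le Mh^{-2}$ from Lemma \ref{LemmaA.3}.3.

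The main obstacle I anticipate is this uniform-in-$\tau$ step rather than the pointwise bound. One must choose $\delta_T$ polynomially small in $T$ so as to defeat the $h^{-2}$ blow-up of the continuity constant in Lemma \ref{LemmaA.3}.3, while keeping the grid cardinality subexponential so that the union bound contributes only the $(\log T)^{1/2}$ factor in $\beta_T$, all under the comparatively weak moment budget $q=r/3>2$. Balancing these three quantities---grid resolution, continuity blow-up, and the maximal-inequality tail---so that the net rate is exactly $O_P(h^2\beta_T)=O_P\big(h^2(\log T)^{1/2}(Th)^{-1/2}h^{-1/2}\big)$ is the delicate bookkeeping at the heart of the argument, and it is precisely the role for which Lemmas \ref{LemmaA.4} and \ref{LemmaA.5} were prepared.
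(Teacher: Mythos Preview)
Your outline is essentially the paper's own argument: write $\gradient\widetilde{\mathcal L}_\tau(\bm\eta(\tau))$ via the chain rule, subtract the score at $\bm\theta(\tau_t)$, use the mean--value theorem to pull out the second--order remainder carrying $|\mathbf d(\tau_t,\tau)|=O(h^2)$, observe that the centering is free because $E[\gradient_{\bm\vartheta}\mathcal l(\widetilde{\bm x}_t(\tau_t),\widetilde{\bm z}_{t-1}(\tau_t);\bm\theta(\tau_t))]=0$, and then run the discretize--plus--Lipschitz scheme of Lemma~\ref{LemmaA.5}.3 with the $h^2$--scaled dependence measures from Lemma~\ref{LemmaA.4}.3.

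Two small remarks. First, the detour ``pass from $\widetilde{\bm x}_t(\tau_t)$ to $\widetilde{\bm x}_t(\tau)$ via Proposition~\ref{proposition2.2}.1 so as to land exactly on $\bm\Pi(\tau)$'' is unnecessary and, as written, not quite at the right rate: the crude $L^1$ bound on that replacement is $O(h^3)$, which need not be $o(h^2\beta_T)$. The paper avoids this by \emph{not} making the substitution: it defines the remainder $\mathbf M_t^{(2)}(\tau,\tau_t)$ with the process kept at $\tau_t$, and verifies directly that the analogues of Lemmas~\ref{LemmaA.3}.3 and~\ref{LemmaA.4}.3 (Lipschitz modulus $O(h^{-2})$ in $\tau$; dependence measure $O(h^2 j^{-(a+s)})$) still hold for this object. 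Second, the maximal inequality actually invoked over the polynomial grid is not Doob--Burkholder but the Nagaev--type tail bound of \cite{zhang2017gaussian} (Theorem~6.2), which supplies a sum of a polynomial tail $T l^{q/2}/(\beta_T h^2 Th)^q$ and an exponential tail $\exp(-c(\beta_T h^2 Th)^2/T)$; this is what lets you take $\kappa_T$ as large as $T^5$ while paying only the $(\log T)^{1/2}$ factor in $\beta_T$ with merely $q=r/3>2$. If by ``replay Lemma~\ref{LemmaA.5}.3'' you meant exactly that tool, you are on target.
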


	\begin{lemma} \label{LemmaA.8}
		Under the conditions of Theorem 2.2,
		\begin{eqnarray*}
			(1). && \sup_{\tau\in[h,1-h]}\left|-\bm{\Sigma}(\tau)(\widehat{\bm{\theta}} (\tau)-\bm{\theta}(\tau))-\gradient_{\bm{\vartheta}}\mathcal{L}_\tau(\bm{\theta}(\tau),h\bm{\theta}^{(1)}(\tau)) \right| = O_P(\gamma_T),\\
			(2). && \sup_{\tau\in[h,1-h]} \left|\gradient_{\bm{\vartheta}}\mathcal{L}_\tau(\bm{\theta}(\tau),h\bm{\theta}^{(1)}(\tau)) + \frac{1}{2}h^2\widetilde{c}_2\bm{\Sigma}(\tau)\bm{\theta}^{(2)}(\tau)\right.\\ &&\left.-\frac{1}{T}\sum_{t=1}^{T}\gradient_{\bm{\vartheta}}\mathcal{l}(\widetilde{\mathbf{x}}_t(\tau_t),\widetilde{\mathbf{z}}_{t-1}(\tau_t);\bm{\theta}(\tau_t))K_h(\tau_t-\tau) \right| = O_P(\beta_Th^2+h^3+(Th)^{-1}),
		\end{eqnarray*}
		where $\beta_T = (Th)^{-1/2}h^{-1/2}(\log T)^{1/2}$ and $\gamma_T = (\beta_T + h)((Th)^{-1/2}\log T+h^2)$.
	\end{lemma}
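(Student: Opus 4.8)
The plan is to derive this uniform weak Bahadur representation by combining the first-order optimality condition of the local linear QMLE with the stochastic-equicontinuity machinery already assembled in Lemmas A.3--A.7. I would treat the two parts separately: part (1) expands the score in the estimation error by inverting the Hessian, whereas part (2) linearizes the score evaluated at the truth and isolates its deterministic bias, after first replacing the observed truncated process by its stationary approximation.

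For part (1), since $\widehat{\bm{\eta}}(\tau)=(\widehat{\bm{\theta}}(\tau)^\top,\widehat{\bm{\theta}}^\star(\tau)^\top)^\top$ maximizes $\mathcal{L}_\tau$, the first-order condition $\gradient_{\bm{\eta}}\mathcal{L}_\tau(\widehat{\bm{\eta}}(\tau))=\mathbf{0}$ and a Taylor expansion about $\bm{\eta}(\tau)=(\bm{\theta}(\tau)^\top,h\bm{\theta}^{(1)}(\tau)^\top)^\top$ give $\gradient_{\bm{\eta}}\mathcal{L}_\tau(\bm{\eta}(\tau))=-\gradient_{\bm{\eta}}^2\mathcal{L}_\tau(\overline{\bm{\eta}})(\widehat{\bm{\eta}}(\tau)-\bm{\eta}(\tau))$ for an intermediate $\overline{\bm{\eta}}$; reading off the $\bm{\eta}_1$-block isolates $\gradient_{\bm{\vartheta}}\mathcal{L}_\tau(\bm{\eta}(\tau))$. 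I would then show, uniformly over $\tau\in[h,1-h]$, that $\gradient_{\bm{\eta}}^2\mathcal{L}_\tau(\overline{\bm{\eta}})$ lies within $O_P(\beta_T+h)$ of the block-diagonal population limit whose top-left block is $\bm{\Sigma}(\tau)$ and whose off-diagonal blocks vanish by kernel symmetry ($\widetilde{c}_1=0$), using the uniform law of large numbers of Lemma A.5 and the Lipschitz/moment bounds of Lemma A.3 at the intermediate point $\overline{\bm{\eta}}$, together with the uniform consistency $\sup_{\tau}|\widehat{\bm{\eta}}(\tau)-\bm{\eta}(\tau)|=O_P((Th)^{-1/2}\log T+h^2)$. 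Multiplying the Hessian error $O_P(\beta_T+h)$ by the estimation error $O_P((Th)^{-1/2}\log T+h^2)$ yields exactly $\gamma_T$.

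For part (2), I would first pass from $\gradient_{\bm{\vartheta}}\mathcal{L}_\tau(\bm{\eta}(\tau))$ to its stationary analogue $\gradient\widetilde{\mathcal{L}}_\tau(\bm{\theta}(\tau),h\bm{\theta}^{(1)}(\tau))$; replacing $(\mathbf{x}_t,\mathbf{z}_{t-1}^c)$ by $(\widetilde{\mathbf{x}}_t(\tau_t),\widetilde{\mathbf{z}}_{t-1}(\tau_t))$ costs $O_P((Th)^{-1})$ uniformly by Lemma A.3.4. Lemma A.7 then gives $\gradient\widetilde{\mathcal{L}}_\tau-E[\gradient\widetilde{\mathcal{L}}_\tau]=\frac{1}{Th}\sum_{t=1}^{T}\widehat{\bm{K}}((\tau_t-\tau)/h)\otimes\gradient_{\bm{\vartheta}}\mathcal{l}(\widetilde{\mathbf{x}}_t(\tau_t),\widetilde{\mathbf{z}}_{t-1}(\tau_t);\bm{\theta}(\tau_t))$ with uniform error $O_P(h^2\beta_T)$, and keeping the $K$-weighted coordinate of $\widehat{\bm{K}}$ reproduces the leading linear term of the statement. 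It then remains to evaluate $E[\gradient\widetilde{\mathcal{L}}_\tau]$: expanding $\bm{\theta}(\tau_t)=\bm{\theta}(\tau)+\bm{\theta}^{(1)}(\tau)(\tau_t-\tau)+\tfrac12\bm{\theta}^{(2)}(\tau)(\tau_t-\tau)^2+O((\tau_t-\tau)^3)$, the local-linear argument absorbs the first two terms, a second-order expansion of $\gradient_{\bm{\vartheta}}\mathcal{l}$ in $\bm{\vartheta}$ combined with $E[\gradient_{\bm{\vartheta}}^2\mathcal{l}(\widetilde{\mathbf{x}}_1(\tau),\widetilde{\mathbf{z}}_0(\tau);\bm{\theta}(\tau))]=\bm{\Sigma}(\tau)$ and $\int u^2K(u)\,\mathrm{d}u=\widetilde{c}_2$ produces the bias $-\tfrac12h^2\widetilde{c}_2\bm{\Sigma}(\tau)\bm{\theta}^{(2)}(\tau)$, the cubic term contributes $O(h^3)$, and the Riemann-sum error of the kernel integral over $[h,1-h]$ contributes $O((Th)^{-1})$. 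Summing $O_P(h^2\beta_T)+O(h^3)+O_P((Th)^{-1})$ gives the claimed rate.

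The main obstacle is making each of these approximations hold \emph{uniformly} in $\tau$ rather than pointwise: the pointwise versions already appear in the proof of Theorem 2.1, but the uniform statements rest on the maximal-inequality and partial-summation arguments of Lemmas A.4--A.7, which control $\sup_\tau$ of kernel-weighted sums of weakly dependent $\mathcal{H}(C,\bm{\chi},M)$-class summands through the functional dependence measure. Within the present proof the delicate step is the bookkeeping of rates, namely checking that the product of the Hessian-approximation rate and the estimator rate is genuinely $\gamma_T$, and that the moment condition $r>6$ and the bandwidth conditions $Th^7\log T\to0$ and $Th^2/(\log T)^4\to\infty$ of Theorem 2.2 are precisely what make $\gamma_T$, $\beta_Th^2$, $h^3$, and $(Th)^{-1}$ all negligible relative to $(Th\log T)^{-1/2}$, as needed downstream; one must also confirm that the off-diagonal Hessian blocks and the odd kernel moments vanish, so that only $\bm{\Sigma}(\tau)$ and $\widetilde{c}_2$ survive.
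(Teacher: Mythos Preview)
Your proposal is correct and follows essentially the same approach as the paper: a Taylor expansion of the score about $\bm{\eta}(\tau)$, uniform control of the Hessian via Lemmas~A.3 and~A.5 to within $O_P(\beta_T+h)$ of the block-diagonal limit $\mathrm{diag}(1,\widetilde{c}_2)\otimes\bm{\Sigma}(\tau)$, combined with the uniform rate $\sup_\tau|\widehat{\bm{\eta}}(\tau)-\bm{\eta}(\tau)|=O_P((Th)^{-1/2}\log T+h^2)$ for part~(1); and for part~(2) the passage $\mathcal{L}_\tau\to\widetilde{\mathcal{L}}_\tau$ via Lemma~A.3.4, the stochastic linearization of Lemma~A.7, and the Riemann-sum/Taylor evaluation of $E[\gradient\widetilde{\mathcal{L}}_\tau]$ giving the $\tfrac12 h^2\widetilde{c}_2\bm{\Sigma}(\tau)\bm{\theta}^{(2)}(\tau)$ bias. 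The only point the paper makes slightly more explicit is that the uniform rate for $\widehat{\bm{\eta}}(\tau)-\bm{\eta}(\tau)$ is itself derived inside this proof (by first bounding $\sup_\tau|\gradient\mathcal{L}_\tau(\bm{\eta}(\tau))|$ via Lemmas~A.3.4, A.7 and~B.3, then inverting), rather than quoted from elsewhere; you should flag that this rate is established en route, not assumed.
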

	
	\subsection{Secondary Lemmas}\label{AppB.2}
	
	Before proceeding further, we introduce some extra notations. Assume that there exists some measurable function $\widetilde{\mathbf{H}}(\cdot,\cdot)$ such that for $\forall\tau \in [0,1]$, $\widetilde{\mathbf{h}}_t(\tau) = \widetilde{\mathbf{H}}(\tau,\mathcal{F}_t) \in \mathbb{R}^{d}$ is well defined, where $\mathcal{F}_t=\sigma(\bm{\varepsilon}_t,\bm{\varepsilon}_{t-1},\ldots)$. Let 
	\begin{eqnarray*}
		\widetilde{\mathbf{D}}_{\widetilde{\mathbf{h}}}(\tau) := (Th)^{-1}\sum_{t=1}^{T}\widetilde{\mathbf{h}}_t(\tau_t)\widehat{K}((\tau_t-\tau)/h) \quad \text{and}\quad \bm{\Sigma}_{\widetilde{\mathbf{h}}}(\tau) = \sum_{j=-\infty}^{\infty}E[\widetilde{\mathbf{h}}_0(\tau)\widetilde{\mathbf{h}}_j^\top(\tau) ].
	\end{eqnarray*}
	Assume that $\bm{\Sigma}_{\widetilde{\mathbf{h}}}(\tau)$ is Lipschitz continuous and its smallest eigenvalue is bounded away from $0$ uniformly over $\tau\in[0,1]$. In what follows, we let $\widetilde{h}_{0,i}(\tau)$ stand for the $i^{th}$ component of $\widetilde{\mathbf{h}}_t(\tau) $.
	
	\begin{lemma}\label{LemmaB.1}
		Let $q > 0$. Let $g \in \mathcal{H}(C,\bm{\chi},M)$. Let $\mathbf{y}=(\mathbf{y}_0,\mathbf{y}_1,\mathbf{y}_2,\ldots) $ and $\mathbf{y}^\prime=(\mathbf{y}_0^\prime,\mathbf{y}_1^\prime,\mathbf{y}_2^\prime,\ldots) $ be two sequences of random variables. Assume that $\max_{j\ge 0}\|\mathbf{y}_j\|_{qC} \leq M $ and $\max_{j\ge 0}\|\mathbf{y}_j^\prime\|_{qC} \leq M$. Then, we have
		\begin{enumerate}
			\item $\left\|\sup_{\bm{\vartheta}\in\bm{\Theta}_r} |g(\mathbf{y},\bm{\vartheta}) - g(\mathbf{y}^\prime,\bm{\vartheta})| \right\|_q\leq M \sum_{j=0}^{\infty}\chi_j\|\mathbf{y}_j-\mathbf{y}_j^\prime\|_{qC}$;
			
			\item $\left\|\sup_{\bm{\vartheta}\neq\bm{\vartheta}^\prime} \frac{|g(\mathbf{y},\bm{\vartheta}) - g(\mathbf{y},\bm{\vartheta}^\prime)|}{|\bm{\vartheta}-\bm{\vartheta}^\prime|} \right\|_q\leq M$;
			
			\item $\left\|\sup_{\bm{\vartheta}\in\bm{\Theta}_r} |g(\mathbf{y},\bm{\vartheta})| \right\|_q\leq M$.
		\end{enumerate}
	\end{lemma}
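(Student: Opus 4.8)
The plan is to establish all three estimates by a common two-step strategy: first peel off the supremum over $\bm\vartheta$ using the three defining inequalities of the class $\mathcal{H}(C,\bm\chi,M)$, whose right-hand sides never depend on $\bm\vartheta$, and only afterwards take the $\|\cdot\|_q$ norm. At that second stage the sole inputs are the uniform moment bound $\max_j\|\mathbf{y}_j\|_{qC}\le M$ and the summability $|\bm\chi|_1=\sum_j\chi_j<\infty$. Writing $|\mathbf{z}|_{\bm\chi}=\sum_j\chi_j|\mathbf{z}_j|$, the recurring fact I would isolate once and reuse throughout is that, by Minkowski's inequality,
\begin{equation*}
\big\||\mathbf{y}|_{\bm\chi}\big\|_{qC}\le\sum_j\chi_j\|\mathbf{y}_j\|_{qC}\le|\bm\chi|_1\max_j\|\mathbf{y}_j\|_{qC}\le M,
\end{equation*}
and likewise for $\mathbf{y}'$; this is precisely what keeps every growth factor under control.

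For Part~3 I would begin from $|g(\mathbf{y},\bm\vartheta)|\le|g(\mathbf{0},\bm\vartheta)|+|g(\mathbf{y},\bm\vartheta)-g(\mathbf{0},\bm\vartheta)|$ and apply the first defining property to the first term and the third (with $\mathbf{z}=\mathbf{y}$, $\mathbf{z}'=\mathbf{0}$) to the second, obtaining, uniformly in $\bm\vartheta$, a bound of the form $M(1+|\mathbf{y}|_{\bm\chi}^C)$ (absorbing the lower-order $|\mathbf{y}|_{\bm\chi}$ term into the generic constant via $C\ge1$). Taking $\|\cdot\|_q$ and using $\||\mathbf{y}|_{\bm\chi}^C\|_q=\||\mathbf{y}|_{\bm\chi}\|_{qC}^C\le M^C$ closes it. Part~2 runs identically: the second defining property bounds the difference quotient by $M(1+|\mathbf{y}|_{\bm\chi}^C)$ for every $\bm\vartheta\ne\bm\vartheta'$, so the supremum inherits the same bound and the $\|\cdot\|_q$ norm is controlled exactly as in Part~3.

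Part~1 is the substantive step. The third defining property gives, uniformly in $\bm\vartheta$,
\begin{equation*}
\sup_{\bm\vartheta}|g(\mathbf{y},\bm\vartheta)-g(\mathbf{y}',\bm\vartheta)|\le M\,|\mathbf{y}-\mathbf{y}'|_{\bm\chi}\big(1+|\mathbf{y}|_{\bm\chi}^{C-1}+|\mathbf{y}'|_{\bm\chi}^{C-1}\big),
\end{equation*}
a product of two random factors. The crucial move is to split this product by Hölder's inequality with the exponents $p_1=qC$ and $p_2=qC/(C-1)$, chosen so that $1/p_1+1/p_2=1/q$ and, more importantly, so that \emph{both} factors end up measured in an $\|\cdot\|_{qC}$-type norm. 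Then $\||\mathbf{y}-\mathbf{y}'|_{\bm\chi}\|_{p_1}\le\sum_j\chi_j\|\mathbf{y}_j-\mathbf{y}_j'\|_{qC}$ by Minkowski, while $\|1+|\mathbf{y}|_{\bm\chi}^{C-1}+|\mathbf{y}'|_{\bm\chi}^{C-1}\|_{p_2}\le1+\||\mathbf{y}|_{\bm\chi}\|_{qC}^{C-1}+\||\mathbf{y}'|_{\bm\chi}\|_{qC}^{C-1}$ is bounded by a constant via the recurring fact (here I use $p_2(C-1)=qC$). Multiplying the two estimates yields exactly $M\sum_j\chi_j\|\mathbf{y}_j-\mathbf{y}_j'\|_{qC}$. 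The degenerate case $C=1$ I would treat separately and more cheaply, taking $p_1=q$, $p_2=\infty$, since the growth factor is then the constant $3$. The main obstacle, such as it is, is pinning down this exponent pair; everything else is bookkeeping under the convention that $M$ is a generic constant, and the argument tacitly uses $qC\ge1$ so that the norms satisfy the triangle inequality — a condition met in every application of the lemma, where $q=r/C$ with $r$ at least $2$.
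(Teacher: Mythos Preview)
Your proposal is correct. The paper does not actually prove this lemma: it simply records that the result is Lemma~D.4 of Karmakar, Richter and Wu (2021) and moves on, so there is no in-paper argument to compare against. Your self-contained route---strip the supremum using the three defining inequalities of $\mathcal{H}(C,\bm\chi,M)$, then control the resulting random bounds in $L^q$ via Minkowski on $|\mathbf{y}|_{\bm\chi}$ and the generalized H\"older split with exponents $qC$ and $qC/(C-1)$ for Part~1---is the natural argument and almost certainly what the cited reference does. One small remark: when $p_2=qC/(C-1)<1$ the triangle inequality for $\|\cdot\|_{p_2}$ fails as stated, but since you only need that a sum of three terms with bounded $L^{p_2}$ quasi-norms is itself bounded, the quasi-norm inequality $\|a+b\|_p^p\le\|a\|_p^p+\|b\|_p^p$ suffices and only affects the generic constant.
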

	
	Lemma \ref{LemmaB.1} is Lemma D.4 of \cite{karmakar2021simultaneous}.
	
	\medskip
	
	\begin{lemma}\label{LemmaB.2}
		Assume that for  $ i =1,\ldots,d$
		
		\begin{enumerate}
			\item $\sup_{\tau\in[0,1]}\|\widetilde{h}_{0,i}(\tau)\|_{q}<\infty$ with some $2\leq q\leq 4$, 
			
			\item $\sup_{\tau\neq\tau^\prime}\|\widetilde{h}_{0,i}(\tau)-\widetilde{h}_{0,i}(\tau^\prime)\|_{2}/|\tau-\tau^\prime|<\infty$,
			
			\item $\sup_{\tau\in[0,1]}\delta_q^{\widetilde{h}_{0,i}(\tau)}(j)=O(j^{-(2+s)})$ for some $s \geq 0$.
		\end{enumerate}
		In addition, assume that $h(\log T)^{3/2}\to 0$ and $\frac{(\log T)^4}{T^{(sq+2)/(2sq+3q-2)}h}\to 0$. Then
		
		\begin{eqnarray*}
			\lim_{T\to\infty}\Pr\left(\sqrt{\frac{Th}{\widetilde{v}_0}}\sup_{\tau\in[h,1-h]}\left|\bm{\Sigma}_{\widetilde{\mathbf{h}}}^{-1/2}(\tau)\widetilde{\mathbf{D}}_{\widetilde{\mathbf{h}}}(\tau) \right| - B(m^*)\leq\frac{u}{\sqrt{2\log(m^*)}}  \right) = \exp(-2\exp(-u)),
		\end{eqnarray*}
		where 
		
		\begin{eqnarray*}
			B(m^*) &=& \sqrt{2 \log(m^*)} + \frac{\log (C_K) + (k/2-1/2)\log(\log(m^*))-\log(2)}{\sqrt{2 \log(m^*)}},\\
			C_K &=& \frac{\{\int_{-1}^{1}|K^{(1)}(u)|^2\mathrm{d}u/\widetilde{v}_0\pi\}^{1/2}}{\Gamma(k/2)},\quad m^* = 1/h,
		\end{eqnarray*}
		and $\Gamma(\cdot)$ is the Gamma function.
	\end{lemma}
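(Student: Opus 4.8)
The plan is to prove the statement in two conceptually separate stages: a Gaussian approximation that replaces the weakly dependent, locally stationary kernel average $\widetilde{\mathbf{D}}_{\widetilde{\mathbf{h}}}(\tau)$ by a smooth functional of i.i.d.\ Gaussian vectors, followed by classical extreme-value theory for the maximum of the resulting vector-valued Gaussian ($\chi_k$-type) process. Write $\mathbf{W}_T(\tau):=\sqrt{Th/\widetilde{v}_0}\,\bm{\Sigma}_{\widetilde{\mathbf{h}}}^{-1/2}(\tau)\widetilde{\mathbf{D}}_{\widetilde{\mathbf{h}}}(\tau)$. A direct second-moment computation using $\sum_t K^2((\tau_t-\tau)/h)\approx Th\widetilde{v}_0$ shows that $\mathbf{W}_T(\tau)$ has asymptotically identity pointwise covariance, so $|\mathbf{W}_T(\tau)|$ is marginally $\chi_k$-distributed. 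Because the limit is read off at the $\sqrt{2\log(1/h)}$ scale, the governing principle throughout is that every approximation error, after the $\sqrt{Th}$-normalization, must be shown to be $o_P((\log(1/h))^{-1/2})$; the two bandwidth restrictions in the hypothesis are exactly what secures this for the two error sources below.

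First I would carry out the Gaussian approximation. Localizing the index, on the kernel support $|\tau_t-\tau|\le h$ the $L_2$-Lipschitz bound (condition~2) allows $\widetilde{\mathbf{h}}_t(\tau_t)$ to be replaced by $\widetilde{\mathbf{h}}_t(\tau)$: the centered increments $\widetilde{\mathbf{h}}_t(\tau_t)-\widetilde{\mathbf{h}}_t(\tau)$ have $L_2$-norm $O(h)$ and are weakly dependent, so their kernel-weighted contribution to $\mathbf{W}_T(\tau)$ is $O_P(h)$ uniformly in $\tau$, which $h(\log T)^{3/2}\to0$ renders negligible. This reduces matters to the strictly stationary process $\{\widetilde{\mathbf{h}}_t(\tau)\}_t$ with long-run variance $\bm{\Sigma}_{\widetilde{\mathbf{h}}}(\tau)$. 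For it, the polynomial dependence decay $\delta_q^{\widetilde{h}_{0,i}(\tau)}(j)=O(j^{-(2+s)})$ (condition~3) together with the $q\in[2,4]$ moment bound make available a strong invariance principle --- the same coupling construction used in the proof of Corollary~2.2 --- producing i.i.d.\ $k$-dimensional standard normal vectors $\mathbf{v}_1,\ldots,\mathbf{v}_T$ such that, after cancellation of $\bm{\Sigma}_{\widetilde{\mathbf{h}}}^{\pm1/2}(\tau)$, $\mathbf{W}_T(\tau)$ is uniformly approximated by $G_h(\tau):=\widetilde{v}_0^{-1/2}(Th)^{-1/2}\sum_{t=1}^{T}\mathbf{v}_t\,K((\tau_t-\tau)/h)$ with error of order $(\log T)^2(hT^{\nu})^{-1/2}$, where $\nu=(sq+2)/(2sq+3q-2)$. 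The hypothesis $(\log T)^4/(T^{\nu}h)\to0$ is precisely calibrated to make this error negligible on the $\sqrt{2\log(1/h)}$ scale.

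It then remains to extract the extreme-value limit for the smoothed Gaussian process $G_h$. Since $\mathrm{Cov}(G_h(\tau),G_h(\tau'))\to r\big((\tau-\tau')/h\big)\mathbf{I}_k$ with $r(\delta)=\widetilde{v}_0^{-1}\int K(u)K(u-\delta)\,\mathrm{d}u=1-\tfrac12\lambda_2\delta^2+o(\delta^2)$ and $\lambda_2=\int_{-1}^{1}|K^{(1)}(u)|^2\,\mathrm{d}u/\widetilde{v}_0$, the process $|G_h(\cdot)|$ is, after rescaling time by $1/h$, a stationary $\chi_k$-process over an interval of length $m^*=1/h$ whose local fluctuations are governed by the second spectral moment $\lambda_2$. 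The Lipschitz continuity of $K$ (Assumption~5) lets me discretize $[h,1-h]$ on a grid fine enough that the increments of $|G_h|$ between neighbouring nodes are controlled and the inter-node oscillation is asymptotically negligible, reducing $\sup_{\tau}|G_h(\tau)|$ to a maximum over an approximately independent set. The classical Pickands--Piterbarg computation for the maximum of a stationary $\chi_k$-process then delivers the double-exponential law, with the centering/scaling sequence $B(m^*)$ and the constant $C_K=\{\int_{-1}^{1}|K^{(1)}|^2/(\widetilde{v}_0\pi)\}^{1/2}/\Gamma(k/2)$ fixed by the $\chi_k$ level-exceedance intensity (the $\Gamma(k/2)$ and $(k/2-1/2)\log\log m^*$ corrections arising from the $\chi_k$ tail, and the factor $2$ in $\exp(-2\exp(-u))$, equivalently the $-\log2$ shift inside $B(m^*)$, from the exceedance rate).

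The main obstacle is the Gaussian approximation step: constructing a coupling that is simultaneously uniform in $\tau$, valid under only $q\le4$ moments and merely polynomial dependence decay, and fast enough --- rate $(hT^{\nu})^{-1/2}$ up to logarithmic factors --- to stay negligible against $\sqrt{2\log(1/h)}$. The interplay of the moment order $q$, the decay exponent $s$, and the bandwidth through $\nu$ is the delicate point and dictates the precise form of the second bandwidth condition; by contrast, once the strong approximation is in hand the extreme-value computation, though technically involved (grid construction, inter-node oscillation bounds, and the Rice-formula asymptotics for $\chi_k$-processes), proceeds along well-established lines.
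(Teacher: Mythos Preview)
Your proposal sketches a genuine proof, but the paper does not actually prove this lemma: immediately after stating it, the paper writes ``Lemma~B.2 is Lemma~B.3 of \cite{karmakar2021simultaneous}'' and moves on. So there is no proof in the paper to compare against --- the result is imported wholesale from an external reference.

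That said, your two-stage outline (strong invariance principle to replace the locally stationary kernel sum by a smoothed i.i.d.\ Gaussian process, followed by Pickands--Piterbarg extreme-value asymptotics for the resulting stationary $\chi_k$-process) is precisely the architecture by which such results are established, and is in fact how the cited lemma in Karmakar et al.\ is proved. Your identification of the roles of the two bandwidth conditions is correct: $h(\log T)^{3/2}\to0$ controls the $\tau_t\mapsto\tau$ localization error, and $(\log T)^4/(T^{\nu}h)\to0$ with $\nu=(sq+2)/(2sq+3q-2)$ is exactly the rate dictated by the Wu--Zhou strong approximation (the same coupling quoted here as Lemma~B.3). The one place your sketch is a bit glib is the uniformity in $\tau$ of the Gaussian coupling: the invariance principle is for partial sums of $\widetilde{\mathbf{h}}_t(\tau_t)$, and passing to the kernel average uniformly over $\tau$ requires an Abel/summation-by-parts argument together with a maximal bound on the coupling error --- this is routine but should be made explicit. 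Otherwise the plan is sound and matches the content of the cited source.
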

	
	Lemma \ref{LemmaB.2} is Lemma B.3 of \cite{karmakar2021simultaneous}.
	
	\medskip
	
	\begin{lemma}\label{LemmaB.3}
		Assume that for $ i =1,\ldots,d$
		
		\begin{enumerate}
			\item $\sup_{\tau\in[0,1]}\|\widetilde{h}_{0,i}(\tau)\|_{q}<\infty$ with some $2\leq q\leq 4$, 
			
			\item $\sup_{\tau\neq\tau^\prime}\|\widetilde{h}_{0,i}(\tau)-\widetilde{h}_{0,i}(\tau^\prime)\|_{2}/|\tau-\tau^\prime|<\infty$,
			
			\item $\sup_{\tau\in[0,1]}\delta_q^{\widetilde{h}_{0,i}(\tau)}(j)=O(j^{-(2+s)})$ for some $s \geq 0$.
		\end{enumerate}
		Let $\mathbf{S}_{\widetilde{\mathbf{h}}}(t) = \sum_{s=1}^{t}\widetilde{\mathbf{h}}_s(\tau_s)$. Then on a richer probability space, there exists i.i.d. k-dimensional standard normal variables $\mathbf{v}_1,\mathbf{v}_2,\ldots$ and a process $\mathbf{S}_{\widetilde{\mathbf{h}}}^{0}(t) = \sum_{s=1}^{t}\bm{\Sigma}_{\widetilde{\mathbf{h}}}^{1/2}(\tau_s)\mathbf{v}_s$ such that 
		\begin{eqnarray*}
			(\mathbf{S}_{\widetilde{\mathbf{h}}}(t))_{t=1}^T =_D (\mathbf{S}_{\widetilde{\mathbf{h}}}^{0}(t))_{t=1}^T\quad \text{and}\quad \max_{t\ge 1} |\mathbf{S}_{\widetilde{\mathbf{h}}}(t) - \mathbf{S}_{\widetilde{\mathbf{h}}}^{0}(t)| = O_P(\pi_T),
		\end{eqnarray*}
		where $\pi_T = T^{\frac{q(s+3)-4}{2q(2s+3)-4}}(\log T)^{\frac{2(s+1)(q+1)}{q(2s+3)-2}}.$
	\end{lemma}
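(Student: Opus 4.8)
The plan is to prove this Gaussian approximation by combining an $m$-dependence truncation with a big-block/small-block decomposition and then invoking a multivariate strong approximation theorem for independent summands, in the spirit of the construction of Berkes, Liu and Wu (2014) and its locally stationary adaptation in \cite{karmakar2021simultaneous}, but carried out here for $k$-dimensional, non-identically-distributed increments. The three hypotheses play exactly the roles one expects: condition (3) controls the truncation error, condition (1) supplies the $q$-th moments feeding the coupling inequality, and condition (2) lets us freeze the slowly varying covariance $\bm{\Sigma}_{\widetilde{\mathbf{h}}}(\tau)$ across a block.

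First I would replace $\widetilde{\mathbf{h}}_t(\tau_t)$ by its $m$-dependent projection $\widetilde{\mathbf{h}}_t^{(m)} := E[\widetilde{\mathbf{h}}_t(\tau_t)\mid \bm{\varepsilon}_t,\ldots,\bm{\varepsilon}_{t-m}]$. By the projection-operator calculation of Appendix A together with condition (3), the accumulated truncation error obeys a maximal bound of order $\sqrt{T}\sum_{j>m}\delta_q^{\widetilde{h}}(j) = O(\sqrt{T}\,m^{-(1+s)})$ up to logarithmic factors. I would then partition $\{1,\ldots,T\}$ into alternating large blocks of length $L$ and small gaps of length at least $m$; because $\{\widetilde{\mathbf{h}}_t^{(m)}\}$ is $m$-dependent, the large-block sums become genuinely independent once the gap sums are discarded, and the discarded gap sums contribute an error controlled by a maximal inequality for the (independent) partial gap sums combined with the $L^q$ bound $\|\text{gap sum}\|_q = O(m^{1/2})$.

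Within each large block I would freeze $\tau$ at the block's left endpoint; condition (2) then bounds the resulting covariance mismatch by $O(L/T)$ per block, which accumulates to a negligible contribution relative to the target rate, so each independent block sum has covariance approximately $L\,\bm{\Sigma}_{\widetilde{\mathbf{h}}}(\tau)$ evaluated at that endpoint. Next I would apply a multidimensional Koml\'os--Major--Tusn\'ady / Einmahl-type strong approximation to the sequence of independent but non-identically distributed block sums, constructing on a richer probability space Gaussian vectors matching the block covariances; identifying $\sum_{s}\bm{\Sigma}_{\widetilde{\mathbf{h}}}^{1/2}(\tau_s)\mathbf{v}_s$ with these block Gaussians and using $\|\text{block sum}\|_q = O(L^{1/2})$ over $n = T/L$ blocks yields a coupling error of order $n^{1/q}L^{1/2}$ up to polylog factors. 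Passing to a distributional copy of $(\mathbf{S}_{\widetilde{\mathbf{h}}}(t))_{t=1}^T$ on the enlarged space then delivers the claimed $=_D$ statement.

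The main obstacle is the final bookkeeping: I must balance the three competing errors --- the truncation error $\sqrt{T}\,m^{-(1+s)}$, the gap error, and the block-coupling error $n^{1/q}L^{1/2}$ with $n=T/L$ --- by optimizing $L$ and $m$ against the moment order $q$ and the decay exponent $2+s$, and then track the exact powers of $\log T$ generated by the multivariate coupling inequality and the maximal inequalities. Carrying this optimization through is precisely what produces the sharp exponents $\frac{q(s+3)-4}{2q(2s+3)-4}$ and $\frac{2(s+1)(q+1)}{q(2s+3)-2}$ in $\pi_T$; the multivariate (rather than scalar) coupling, which forces the use of a vector-valued strong approximation theorem and careful handling of the matrix square roots $\bm{\Sigma}_{\widetilde{\mathbf{h}}}^{1/2}(\tau_s)$, is the part that departs most from the univariate template.
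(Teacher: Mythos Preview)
The paper does not prove this lemma at all: it simply records that the statement is taken directly from Theorem~1 and Corollary~2 of \cite{wu2011gaussian}. So the ``paper's own proof'' is a one-line citation, and there is nothing to compare at the level of argument.

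Your proposal is, in outline, a faithful reconstruction of the method behind that cited result (and its refinements in Berkes--Liu--Wu and \cite{karmakar2021simultaneous}): $m$-dependence truncation, big-block/small-block with independent block sums, multivariate KMT/Einmahl coupling on the block sums, and optimization of $m$ and $L$ against the moment index $q$ and the dependence-decay exponent $2+s$. That is the correct architecture, and the roles you assign to conditions (1)--(3) are the right ones. Two cautions if you actually carry it out. First, the specific exponents in $\pi_T$ come from a fairly delicate three-way balance that in the cited paper is done via a dyadic (multi-scale) blocking rather than a single choice of $L$; a single big-block/small-block pass typically loses a small power of $T$, so to match the stated $\pi_T$ exactly you would need to iterate the block construction. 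Second, the $=_D$ clause should be read as ``there is a distributional copy of $(\mathbf{S}_{\widetilde{\mathbf{h}}}(t))_t$ on the enlarged space that is $O_P(\pi_T)$-close to $\mathbf{S}_{\widetilde{\mathbf{h}}}^0$'', not that the original partial sums and the Gaussian process are equal in distribution; your last paragraph handles this correctly, but be careful not to overclaim when you assemble the final statement.
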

	
	Lemma \ref{LemmaB.3} is from Theorem 1 and Corollary 2 of \cite{wu2011gaussian}.

	\subsection{Proofs of Preliminary Lemmas}\label{AppB.3}
	
	\begin{proof}[Proof of Lemma \ref{LemmaA.1}]
		\item
		Let
		\begin{eqnarray*}
			\mathbf{M}_t(\bm{\vartheta}, \bm{\theta}(\tau))&:=& (\mathbf{H}(\widetilde{\mathbf{z}}_{t-1}(\tau);\bm{\vartheta})\mathbf{H}(\widetilde{\mathbf{z}}_{t-1}(\tau);\bm{\vartheta})^\top)^{-1/2}\mathbf{H}(\widetilde{\mathbf{z}}_{t-1}(\tau);\bm{\theta}(\tau))\mathbf{H}(\widetilde{\mathbf{z}}_{t-1}(\tau);\bm{\theta}(\tau))^\top\\
			&& \times(\mathbf{H}(\widetilde{\mathbf{z}}_{t-1}(\tau);\bm{\vartheta})\mathbf{H}(\widetilde{\mathbf{z}}_{t-1}(\tau);\bm{\vartheta})^\top)^{-1/2}.
		\end{eqnarray*}
		By Assumption 1.1 and the construction of $\widetilde{\mathbf{x}}_t(\tau)$, we write
		\begin{eqnarray*}
			&&	E\left(\mathcal{l}(\widetilde{\mathbf{x}}_t(\tau),\widetilde{\mathbf{z}}_{t-1}(\tau);\bm{\vartheta})\right) \\
			&=&-\frac{1}{2}E\log\det\left\{\mathbf{H}(\widetilde{\mathbf{z}}_{t-1}(\tau);\bm{\vartheta})\mathbf{H}(\widetilde{\mathbf{z}}_{t-1}(\tau);\bm{\vartheta})^\top\right\}\\
			&&-\frac{1}{2}E\mathrm{tr}\left\{(\mathbf{H}(\widetilde{\mathbf{z}}_{t-1}(\tau);\bm{\vartheta})\mathbf{H}(\widetilde{\mathbf{z}}_{t-1}(\tau);\bm{\vartheta})^\top)^{-1}[\widetilde{\mathbf{x}}_t(\tau)-\bm{\mu}(\widetilde{\mathbf{z}}_{t-1}(\tau);\bm{\vartheta})][\widetilde{\mathbf{x}}_t(\tau)-\bm{\mu}(\widetilde{\mathbf{z}}_{t-1}(\tau);\bm{\vartheta})]^\top\right\}\\
			&=&-\frac{1}{2}E\log\det\left\{\mathbf{H}(\widetilde{\mathbf{z}}_{t-1}(\tau);\bm{\vartheta})\mathbf{H}(\widetilde{\mathbf{z}}_{t-1}(\tau);\bm{\vartheta})^\top\right\}-\frac{1}{2}E\mathrm{tr}\left\{\mathbf{M}_t(\bm{\vartheta}, \bm{\theta}(\tau))\right\}\\
			&& - \frac{1}{2}E\left([\bm{\mu}(\widetilde{\mathbf{z}}_{t-1}(\tau);\bm{\theta}(\tau))-\bm{\mu}(\widetilde{\mathbf{z}}_{t-1}(\tau);\bm{\vartheta})]^\top(\mathbf{H}(\widetilde{\mathbf{z}}_{t-1}(\tau);\bm{\vartheta})\mathbf{H}(\widetilde{\mathbf{z}}_{t-1}(\tau);\bm{\vartheta})^\top)^{-1}\right.\\
			&&\left.\times[\bm{\mu}(\widetilde{\mathbf{z}}_{t-1}(\tau);\bm{\theta}(\tau))-\bm{\mu}(\widetilde{\mathbf{z}}_{t-1}(\tau);\bm{\vartheta})]\right)\\
			&=&-\frac{1}{2}\left[-E\log\det\left(\mathbf{M}_t(\bm{\vartheta}, \bm{\theta}(\tau))\right) + E\mathrm{tr}\left\{\mathbf{M}_t(\bm{\vartheta}, \bm{\theta}(\tau))\right\}\right] \\
			&&-  \frac{1}{2}E\log\det\left(\widetilde{\mathbf{H}}_t(\tau,\bm{\theta}(\tau))\widetilde{\mathbf{H}}_t(\tau,\bm{\theta}(\tau))^\top\right)\\
			&& - \frac{1}{2}E\left([\bm{\mu}(\widetilde{\mathbf{z}}_{t-1}(\tau);\bm{\theta}(\tau))-\bm{\mu}(\widetilde{\mathbf{z}}_{t-1}(\tau);\bm{\vartheta})]^\top(\mathbf{H}(\widetilde{\mathbf{z}}_{t-1}(\tau);\bm{\vartheta})\mathbf{H}(\widetilde{\mathbf{z}}_{t-1}(\tau);\bm{\vartheta})^\top)^{-1}\right.\\
			&&\left.\times[\bm{\mu}(\widetilde{\mathbf{z}}_{t-1}(\tau);\bm{\theta}(\tau))-\bm{\mu}(\widetilde{\mathbf{z}}_{t-1}(\tau);\bm{\vartheta})]\right).
		\end{eqnarray*}
		
		For any positive definite matrix $\mathbf{M}$ with eigenvalues $\lambda_1,\ldots,\lambda_m > 0$, we have
		\begin{eqnarray*}
			f(\mathbf{M}) := -\log\det\left( \mathbf{M}  \right) + \mathrm{tr}\left\{\mathbf{M}\right\} = \sum_{i=1}^{m}(\lambda_i-\log \lambda_i) \geq m,
		\end{eqnarray*}
		where the equality holds if $\lambda_1=\cdots=\lambda_m=1$ in which case $\mathbf{M}=\mathbf{I}_m$. Thus, $f(\mathbf{M})$ is uniquely minimized at $\mathbf{M} = \mathbf{I}_m$, which implies that $E[f(\mathbf{M}_t(\bm{\vartheta}, \bm{\theta}(\tau)))]$ is uniquely minimized at $\bm{\vartheta} = \bm{\theta}(\tau)$ by Assumption 3.2. In addition, since $\mathbf{H}(\widetilde{\mathbf{z}}_{t-1}(\tau);\bm{\vartheta})\mathbf{H}(\widetilde{\mathbf{z}}_{t-1}(\tau);\bm{\vartheta})^\top$ is a positive definite matrix, then
		\begin{eqnarray*}
			&&E\left([\bm{\mu}(\widetilde{\mathbf{z}}_{t-1}(\tau);\bm{\theta}(\tau))-\bm{\mu}(\widetilde{\mathbf{z}}_{t-1}(\tau);\bm{\vartheta})]^\top(\mathbf{H}(\widetilde{\mathbf{z}}_{t-1}(\tau);\bm{\vartheta})\mathbf{H}(\widetilde{\mathbf{z}}_{t-1}(\tau);\bm{\vartheta})^\top)^{-1}\right.\\
			&&\left.\times[\bm{\mu}(\widetilde{\mathbf{z}}_{t-1}(\tau);\bm{\theta}(\tau))-\bm{\mu}(\widetilde{\mathbf{z}}_{t-1}(\tau);\bm{\vartheta})]\right)\geq 0
		\end{eqnarray*}
		is uniquely minimized at $\bm{\vartheta} = \bm{\theta}(\tau)$ by Assumption 3.2. Hence, $E\left(\mathcal{l}(\widetilde{\mathbf{x}}_t(\tau),\widetilde{\mathbf{z}}_{t-1}(\tau);\bm{\vartheta})\right)$ is uniquely maximized at $\bm{\theta}(\tau)$.
	\end{proof}
	
	\medskip
	
	\begin{proof}[Proof of Lemma \ref{LemmaA.2}]
		\item
		We first consider $\mathcal{l}(\cdot)$. Write
		\begin{eqnarray*}
			&&\mathcal{l}(\mathbf{x},\mathbf{z};\bm{\vartheta})-\mathcal{l}(\mathbf{x}^\prime,\mathbf{z}^\prime;\bm{\vartheta})\\
			& =&-\frac{1}{2}\left[ (\mathbf{x}- \bm{\mu}(\mathbf{z};\bm{\vartheta}))^\top\mathbf{M}^{-1}(\mathbf{z};\bm{\vartheta})(\mathbf{x} - \bm{\mu}(\mathbf{z};\bm{\vartheta}))-(\mathbf{x}^\prime- \bm{\mu}(\mathbf{z}^\prime;\bm{\vartheta}))^\top\mathbf{M}^{-1}(\mathbf{z}^\prime;\bm{\vartheta})(\mathbf{x}^\prime - \bm{\mu}(\mathbf{z}^\prime;\bm{\vartheta}))\right]\\
			&&-\frac{1}{2}\left[\log\det\left(\mathbf{M}(\mathbf{z};\bm{\vartheta})\right)-\log\det\left(\mathbf{M}(\mathbf{z}^\prime;\bm{\vartheta})\right)\right]\\
			&:=& -\frac{1}{2}(I_{1}+I_{2}),
		\end{eqnarray*}
		where the definitions of $I_1$ and $I_2$ should be obvious, and $\mathbf{M}(\mathbf{z};\bm{\vartheta})=\mathbf{H}(\mathbf{z};\bm{\vartheta})\mathbf{H}(\mathbf{z};\bm{\vartheta})^\top$.
		
		\medskip
		
		For $I_2$,  we have
		\begin{eqnarray*}
			|\mathbf{M}(\mathbf{z};\bm{\vartheta})-\mathbf{M}(\mathbf{z}^\prime;\bm{\vartheta})|&\leq&|\mathbf{H}(\mathbf{z};\bm{\vartheta})-\mathbf{H}(\mathbf{z}^\prime;\bm{\vartheta})|\left(|\mathbf{H}(\mathbf{z};\bm{\vartheta})|+|\mathbf{H}(\mathbf{z}^\prime;\bm{\vartheta})|\right)\\
			&\leq&M |\mathbf{z}-\mathbf{z}^\prime|_{\bm{\chi}}(2+|\mathbf{z}|_{\bm{\chi}}+|\mathbf{z}^\prime|_{\bm{\chi}}),
		\end{eqnarray*}
		where the second inequality follows from the facts that 
		\begin{eqnarray*}
			&&|\mathbf{H}(\mathbf{z};\bm{\vartheta})-\mathbf{H}(\mathbf{z}^\prime;\bm{\vartheta})|=O( |\mathbf{z}-\mathbf{z}^\prime|_{\bm{\chi}}),\\
			&&|\mathbf{H}(\mathbf{z};\bm{\vartheta})|\leq |\mathbf{H}(\mathbf{z};\bm{\vartheta})-\mathbf{H}(\mathbf{0};\bm{\vartheta})|+|\mathbf{H}(\mathbf{0};\bm{\vartheta})|=O(1+|\mathbf{z}|_{\bm{\chi}}) 
		\end{eqnarray*}
		by using Assumption 1.2 twice.
		
		By Assumption 3, it is easy to know that $\det(\mathbf{M}(\mathbf{z} ;\bm{\vartheta}))\geq\underline{H}>0$, which in connection with the fact $\log(\cdot)$ is Lipschitz continuous on $[\underline{H},\infty)$ yields that
		\begin{eqnarray*}
			I_2\leq M |\det\left(\mathbf{M}(\mathbf{z};\bm{\vartheta})\right)-\det\left(\mathbf{M}(\mathbf{z}^\prime;\bm{\vartheta})\right)|.
		\end{eqnarray*}
		In addition, for an invertible matrix $\mathbf{A}$, $\det(\mathbf{A}+\mathbf{B}) = \det(\mathbf{A})+\mathrm{tr}(\mathbf{A}^{-1,\top}\mathbf{B})+o(|\mathbf{B}|)$, and for a positive definite matrix $\mathbf{A}$ and symmetric matrix $\mathbf{B}$, $|\mathrm{tr}(\mathbf{A}^{-1,\top}\mathbf{B})|\leq |\mathbf{B}|\mathrm{tr}(\mathbf{A}^{-1})$. Hence, we have
		\begin{eqnarray*}
			I_2 &\leq& M |\det\left(\mathbf{M}(\mathbf{z};\bm{\vartheta})\right)-\det\left(\mathbf{M}(\mathbf{z}^\prime;\bm{\vartheta})\right)|\\
			&\leq &M \mathrm{tr}(\mathbf{M}^{-1}(\mathbf{z}^\prime;\bm{\vartheta}))\cdot |\mathbf{M}(\mathbf{z};\bm{\vartheta})-\mathbf{M}(\mathbf{z}^\prime;\bm{\vartheta})|\\
			&=&O(|\mathbf{z}-\mathbf{z}^\prime|_{\bm{\chi}}(1+|\mathbf{z}|_{\bm{\chi}}+|\mathbf{z}^\prime|_{\bm{\chi}})).
		\end{eqnarray*}
		Note that if $\mathbf{H}(\mathbf{z};\bm{\vartheta}) \equiv \mathbf{H}(\mathbf{0};\bm{\vartheta})$, $I_2 = 0$.
		
		\medskip
		
		For $I_1$, since $|\mathbf{H}^{-1}(\mathbf{z};\bm{\vartheta})|$ is bounded by Assumption 3, we can obtain that
		\begin{eqnarray*}
			I_1&\leq& |\mathbf{H}^{-1}(\mathbf{z};\bm{\vartheta})(\mathbf{x}-\bm{\mu}(\mathbf{z};\bm{\vartheta}))-\mathbf{H}^{-1}(\mathbf{z}^\prime;\bm{\vartheta})(\mathbf{x}^\prime-\bm{\mu}(\mathbf{z}^\prime;\bm{\vartheta}))|\\ 
			&&\cdot (|\mathbf{H}^{-1}(\mathbf{z};\bm{\vartheta})(\mathbf{x}-\bm{\mu}(\mathbf{z};\bm{\vartheta}))| + |\mathbf{H}^{-1}(\mathbf{z}^\prime;\bm{\vartheta})(\mathbf{x}^\prime-\bm{\mu}(\mathbf{z}^\prime;\bm{\vartheta}))|)\\
			&=&O(|\mathbf{y}-\mathbf{y}^\prime|_{\bm{\chi}}\cdot(1+|\mathbf{y}|_{\bm{\chi}}^2+|\mathbf{y}^\prime|_{\bm{\chi}}^2 )),
		\end{eqnarray*}
		where $\mathbf{y} = (\mathbf{x}, \mathbf{z})$. Similarly, if $\mathbf{H}(\mathbf{z};\bm{\vartheta}) \equiv \mathbf{H}(\mathbf{0};\bm{\vartheta})$, $I_1 =O(|\mathbf{y}-\mathbf{y}^\prime|_{\bm{\chi}}\cdot (1+|\mathbf{y}|_{\bm{\chi}}+|\mathbf{y}^\prime|_{\bm{\chi}}))$.
		
		For $\mathcal{l}(\mathbf{x},\mathbf{z};\bm{\vartheta})-\mathcal{l}(\mathbf{x},\mathbf{z};\bm{\vartheta}^\prime)$, write
		\begin{eqnarray*}
			&&\mathcal{l}(\mathbf{x},\mathbf{z};\bm{\vartheta})-\mathcal{l}(\mathbf{x},\mathbf{z};\bm{\vartheta}^\prime)\\
			& =&-\frac{1}{2}\left[ (\mathbf{x}- \bm{\mu}(\mathbf{z};\bm{\vartheta}))^\top\mathbf{M}^{-1}(\mathbf{z};\bm{\vartheta})(\mathbf{x} - \bm{\mu}(\mathbf{z};\bm{\vartheta}))-(\mathbf{x}- \bm{\mu}(\mathbf{z};\bm{\vartheta}^\prime))^\top\mathbf{M}^{-1}(\mathbf{z};\bm{\vartheta}^\prime)(\mathbf{x}- \bm{\mu}(\mathbf{z};\bm{\vartheta}^\prime))\right]\\
			&&-\frac{1}{2}\left[\log\det\left(\mathbf{M}(\mathbf{z};\bm{\vartheta})\right)-\log\det\left(\mathbf{M}(\mathbf{z};\bm{\vartheta}^\prime)\right)\right]\\
			&:=& -\frac{1}{2}(I_{3}+I_{4}).
		\end{eqnarray*}
		
		Similar to the development for $I_1$ and $I_2$, we can obtain that
		\begin{eqnarray*}
			I_3=O(|\bm{\vartheta}-\bm{\vartheta}^\prime|(1+|\mathbf{y}|_{\bm{\chi}}^3) \quad \text{and}\quad
			I_4=O(|\bm{\vartheta}-\bm{\vartheta}^\prime|(1+|\mathbf{z}|_{\bm{\chi}}^2)),
		\end{eqnarray*}
		where we again let $\mathbf{y} = (\mathbf{x}, \mathbf{z})$. Also if $\mathbf{H}(\mathbf{z};\bm{\vartheta}) \equiv \mathbf{H}(\mathbf{0};\bm{\vartheta})$,
		\begin{eqnarray*}
			I_3 = O(|\bm{\vartheta}-\bm{\vartheta}^\prime|(1+|\mathbf{y}|_{\bm{\chi}}^2)\quad \text{and}\quad I_4 = O(|\bm{\vartheta}-\bm{\vartheta}^\prime|).
		\end{eqnarray*}

		Combing the above analysis, we have shown $\mathcal{l} \in \mathcal{H}(3,\bm{\chi},M)$. In addition, if $\mathbf{H}(\mathbf{z};\bm{\vartheta}) \equiv \mathbf{H}(\mathbf{0};\bm{\vartheta})$, $\mathcal{l}\in \mathcal{H}(2,\bm{\chi},M)$.
		
		\medskip
		
		Similar to the development for $\mathcal{l}$, we can show $\gradient \mathcal{l}$, $\gradient ^2 \mathcal{l} \in \mathcal{H}(3,\bm{\chi},M)$ and $\gradient\mathcal{l}, \gradient^2\mathcal{l}\in \mathcal{H}(2,\bm{\chi},M)$ if $\mathbf{H}(\mathbf{z};\bm{\vartheta}) \equiv \mathbf{H}(\mathbf{0};\bm{\vartheta})$.

		The proof is now complete. 
	\end{proof}
	
	\medskip
	
	\begin{proof}[Proof of Lemma \ref{LemmaA.3}]
		\item
		
		\noindent (1). By Proposition 2.1.1, we have $\sup_{\tau\in[0,1]} \|\widetilde{\mathbf{x}}_t(\tau) \|_C < \infty$. Since $g \in \mathcal{H}(C,\bm{\chi},M)$, we have
		\begin{eqnarray*}
			\sup_{\bm{\eta}\neq\bm{\eta}^\prime}\frac{|\widetilde{G}_\tau(\bm{\eta})-\widetilde{G}_\tau(\bm{\eta}^\prime)|}{|\bm{\eta}-\bm{\eta}^\prime|}\leq M (Th)^{-1}\sum_{t=1}^{T}\widehat{K}\left(\frac{\tau_t-\tau}{h}\right)\left[2 + |\widetilde{\mathbf{y}}_t(\tau_t)|_{\bm{\chi}}^C+ \||\widetilde{\mathbf{y}}_t(\tau_t)|_{\bm{\chi}}^C \|_1\right].
		\end{eqnarray*}
		Using $(Th)^{-1}\sum_{t=1}^{T}\widehat{K}\left(\frac{\tau_t-\tau}{h}\right)<\infty$, we have
		\begin{eqnarray*}
			\left\|\sup_{\bm{\eta}\neq\bm{\eta}^\prime}\frac{|\widetilde{G}_\tau(\bm{\eta})-\widetilde{G}_\tau(\bm{\eta}^\prime)|}{|\bm{\eta}-\bm{\eta}^\prime|}\right\|_1\leq M \max_t\||\widetilde{\mathbf{y}}_t(\tau_t)|_{\bm{\chi}}^C \|_1<\infty.
		\end{eqnarray*}
		
		In addition, by using the Lipschitz property of $\widehat{K}(\cdot)$, we have
		\begin{eqnarray*}
			&&|\widetilde{G}_\tau(\bm{\eta})-\widetilde{G}_{\tau^\prime}(\bm{\eta}^\prime)|\\
			&\leq&(Th)^{-1}\sum_{t=1}^{T}\left|\widehat{K}\left(\frac{\tau_t-\tau}{h}\right)-\widehat{K}\left(\frac{\tau_t-\tau^\prime}{h}\right)\right|\cdot\sup_{\bm{\vartheta}}(|g(\widetilde{\mathbf{y}}_t(\tau_t),\bm{\vartheta})|+\|g(\widetilde{\mathbf{y}}_t(\tau_t),\bm{\vartheta})\|_1)\\
			&&+(Th)^{-1}\sum_{t=1}^{T}\widehat{K}\left(\frac{\tau_t-\tau^\prime}{h}\right)\cdot|g(\widetilde{\mathbf{y}}_t(\tau_t),\bm{\eta}_1+\bm{\eta}_2(\tau_t-\tau)/h)-g(\widetilde{\mathbf{y}}_t(\tau_t),\bm{\eta}_1^\prime+\bm{\eta}_2^\prime(\tau_t-\tau^\prime)/h)|\\
			&\leq&M\left(h^{-2}|\tau-\tau^\prime|+h^{-1}|\bm{\eta}-\bm{\eta}^\prime             |+h^{-2}|\bm{\eta}_2|\cdot|\tau-\tau^\prime| \right)\cdot\frac{1}{T}\sum_{t=1}^T(2 + |\widetilde{\mathbf{y}}_t(\tau_t)|_{\bm{\chi}}^C+ \||\widetilde{\mathbf{y}}_t(\tau_t)|_{\bm{\chi}}^C \|_1).
		\end{eqnarray*}
		Combing the above analyses, the first result follows.
		
		\medskip
		
		\noindent (2). By Lemma \ref{LemmaB.1}.1 and Proposition 2.2, for $|\tau_t-\tau|\leq h$, we have
		\begin{eqnarray*}
			&&\|g(\widetilde{\mathbf{y}}_t(\tau_t),\bm{\eta}_1+\bm{\eta}_2\cdot(\tau_t-\tau)/h) - g(\widetilde{\mathbf{y}}_t(\tau),\bm{\eta}_1+\bm{\eta}_2\cdot(\tau_t-\tau)/h)\|_1\\
			&\leq& M \sum_{j=0}^{\infty}\chi_j\|\widetilde{\mathbf{x}}_{t-j}(\tau_t)-\widetilde{\mathbf{x}}_{t-j}(\tau)\|_{C} = O(h).
		\end{eqnarray*}
		Hence, we have
		\begin{eqnarray*}
			&&\left\|\widetilde{B}_\tau(\bm{\eta}) - \frac{1}{Th}\sum_{t=1}^{T}\widehat{K}\left(\frac{\tau_t-\tau}{h} \right)g(\widetilde{\mathbf{y}}_t(\tau),\bm{\eta}_1+\bm{\eta}_2\cdot(\tau_t-\tau)/h) \right\|_1\\
			&\leq& M \frac{1}{Th}\sum_{t=1}^{T}\widehat{K}\left(\frac{\tau_t-\tau}{h} \right) \sum_{j=0}^{\infty}\chi_j \|\widetilde{\mathbf{x}}_{t-j}(\tau_t)-\widetilde{\mathbf{x}}_{t-j}(\tau)\|_{C} = O(h)
		\end{eqnarray*}
		and
		\begin{eqnarray*}
			&&\frac{1}{Th}\sum_{t=1}^{T}\widehat{K}\left(\frac{\tau_t-\tau}{h} \right)E\left[g(\widetilde{\mathbf{y}}_t(\tau),\bm{\eta}_1+\bm{\eta}_2\cdot(\tau_t-\tau)/h)\right] \\
			&=& \int_{-\tau/h}^{(1-\tau)/h}\widehat{K}(u)E(g(\widetilde{\mathbf{y}}_0(\tau),\bm{\eta}_1+\bm{\eta}_2u))\mathrm{d}u + O((Th)^{-1})
		\end{eqnarray*}
		by the definition of Riemann integral and the stationarity of $\widetilde{\mathbf{y}}_t(\tau)$.
		
		\medskip
		
		\noindent (3). Write
		\begin{eqnarray*}
			&&|\mathbf{m}_t^{(2)}(u, \tau) - \mathbf{m}_t^{(2)}(u, \tau^\prime)|\\
			&\leq& |\widehat{K}\left(\frac{\tau-u}{h}\right)-\widehat{K}\left(\frac{\tau^\prime-u}{h}\right)|\cdot|g(\widetilde{\mathbf{y}}_t(u),\bm{\theta}(\tau)-v\mathbf{d}(u,\tau))|\cdot |\mathbf{d}(u,\tau)|\\
			&& + |\widehat{K}\left(\frac{\tau^\prime-u}{h}\right)|\cdot |g(\widetilde{\mathbf{y}}_t(u),\bm{\theta}(\tau)-v\mathbf{d}(u,\tau))-g(\widetilde{\mathbf{y}}_t(u),\bm{\theta}(\tau^\prime)-v\mathbf{d}(u,\tau^\prime))|\cdot |\mathbf{d}(u,\tau)|\\
			&& + |\widehat{K}\left(\frac{\tau^\prime-u}{h}\right)|\cdot |g(\widetilde{\mathbf{y}}_t(u),\bm{\theta}(\tau^\prime)-v\mathbf{d}(u,\tau^\prime))|\cdot |\mathbf{d}(u,\tau)-\mathbf{d}(u,\tau^\prime)|\\
			&:=&I_1+I_2+I_3.
		\end{eqnarray*}
		
		By the Lipschitz continuity of $\widehat{K}(\cdot)$ and $\|\sup_{\bm{\vartheta}}|g(\widetilde{\mathbf{y}}_t(u),\bm{\vartheta})|\|_1=O(1)$ (by Lemma \ref{LemmaB.1}.3), we have 
		\begin{eqnarray*}
			E(I_1) = O(h^{-1}|\tau-\tau^\prime|).
		\end{eqnarray*}
		Similarly, by Lemma \ref{LemmaB.1}.2 and $|\mathbf{d}(u,\tau)| = O(1)$, we have 
		\begin{eqnarray*}
			E(I_2) = O(|\tau-\tau^\prime|).
		\end{eqnarray*}
		By the Lipschitz continuity of $\mathbf{d}(u, \cdot)$, we have $E(I_3) = O(|\tau-\tau^\prime|)$. Hence,
		\begin{eqnarray*}
			\left|\frac{1}{Th}\sum_{t=1}^{T}[\mathbf{m}_t^{(2)}(\tau,\tau_t) - \mathbf{m}_t^{(2)}(\tau^\prime,\tau_t)]\right|\leq \frac{1}{Th}\sum_{t=1}^{T} |\mathbf{m}_t^{(2)}(\tau,\tau_t) - \mathbf{m}_t^{(2)}(\tau^\prime,\tau_t)| =O(h^{-2} |\tau-\tau^\prime|).
		\end{eqnarray*}
		The proof is now complete.
		
		\medskip
		
		\noindent (4). By Propositions 2.1.1 and 2.2.2, we have $\sup_{\tau\in[0,1]}\left\|\widetilde{\mathbf{x}}_t(\tau)\right\|_C < \infty$ and $\max_t\left\|\mathbf{x}_t-\widetilde{\mathbf{x}}_t(\tau_t)\right\|_C = O(T^{-1})$. Hence, we have $\max_t\left\|\mathbf{x}_t\right\|_C \leq M$. 
		
		By Lemma \ref{LemmaB.1} and the definitions of $\mathbf{y}_t$ and $\mathbf{y}_t^c$, we have
		\begin{eqnarray*}
			\|\sup_{\bm{\vartheta} \in \bm{\Theta}_r}|g(\mathbf{y}_t,\bm{\vartheta})-g(\mathbf{y}_t^c,\bm{\vartheta})| \|_1\leq M \sum_{j=t}^{\infty}\chi_j\|\mathbf{x}_{t-j}\|_C=O\left(\sum_{j=t}^{\infty}\chi_j\right).
		\end{eqnarray*}
		
		In addition, by Proposition 2.2.1 and $\chi_j=O(j^{-(2+s)})$ for some $s > 0$, we have
		\begin{eqnarray*}
			&&\|\sup_{\bm{\vartheta} \in \bm{\Theta}_r}|g(\mathbf{y}_t,\bm{\vartheta})-g(\widetilde{\mathbf{y}}_t(\tau_t),\bm{\vartheta})| \|_1\leq M \sum_{j=0}^{\infty}\chi_j\|\mathbf{x}_{t-j}-\widetilde{\mathbf{x}}_{t-j}(\tau_t)\|_C\\
			&\leq&M \sum_{j=0}^{\infty}\chi_j\|\mathbf{x}_{t-j}-\widetilde{\mathbf{x}}_{t-j}(\tau_{t-j})\|_C+M \sum_{j=0}^{\infty}\chi_j\|\widetilde{\mathbf{x}}_{t-j}(\tau_t)-\widetilde{\mathbf{x}}_{t-j}(\tau_{t-j})\|_C\\
			&=&O(\sum_{j=0}^{\infty}\chi_j/T)+O(\sum_{j=0}^{\infty}j\chi_j/T)=O(T^{-1}).
		\end{eqnarray*}
		Hence, we have
		\begin{eqnarray*}
			&&\|\sup_{\tau \in [0,1]}\sup_{\bm{\eta} \in \mathbf{E}_T(r)}|\widetilde{G}_\tau(\bm{\eta})-G_\tau^c(\bm{\eta})|\|_1\\
			&\leq & M (Th)^{-1} \sum_{t=1}^{T}\sup_{\bm{\vartheta} \in \bm{\Theta}_r}\|g(\widetilde{\mathbf{y}}_t,\bm{\vartheta})-g(\mathbf{y}_t^c,\bm{\vartheta})\|_1\\
			&\leq&M (Th)^{-1} \sum_{t=1}^{T}\sum_{j=t}^{\infty}\chi_j \leq M (Th)^{-1} \sum_{j=1}^{\infty}j\chi_j = O((Th)^{-1}).
		\end{eqnarray*}
		The proof of the fourth result is now complete.
	\end{proof}

	\begin{proof}[Proof of Lemma \ref{LemmaA.4}]
		\item
		\noindent (1). Let $\widetilde{\mathbf{y}}_t^*(\tau)$ be a coupled version of $\widetilde{\mathbf{y}}_t(\tau)$ with $\bm{\varepsilon}_0$ replaced by $\bm{\varepsilon}_0^*$. By Lemma \ref{LemmaB.1}, we have
		\begin{eqnarray*}
			\delta_{q}^{\sup_{\bm{\vartheta}}|g(\widetilde{\mathbf{y}}_t(\tau),\bm{\vartheta})|}(t) &=& \|\sup_{\bm{\vartheta}}|g(\widetilde{\mathbf{y}}_t(\tau),\bm{\vartheta})|-\sup_{\bm{\vartheta}}|g(\widetilde{\mathbf{y}}_t^*(\tau),\bm{\vartheta})| \|_q\\
			&\leq &\|\sup_{\bm{\vartheta}}|g(\widetilde{\mathbf{y}}_t(\tau),\bm{\vartheta})-g(\widetilde{\mathbf{y}}_t^*(\tau),\bm{\vartheta})| \|_q\\
			&\leq&M \sum_{j=0}^{\infty}\chi_j \|\widetilde{\mathbf{x}}_{t-j}(\tau)-\widetilde{\mathbf{x}}_{t-j}^*(\tau)\|_{qC}\\
			&=&M \sum_{j=0}^{t}\chi_j\delta_{r}^{\widetilde{\mathbf{x}}(\tau)}(t-j).
		\end{eqnarray*}
		By Proposition 2.1.2 and the conditions on $  \alpha_j(\bm{\theta}(\tau))  $ and $  \beta_j(\bm{\theta}(\tau)) $ in the body of this lemma, we have $\delta_{r}^{\widetilde{\mathbf{x}}(\tau)}(j)=O(j^{-(a+s)})$ for some $s>0$. Hence, we have
		\begin{eqnarray*}
			\sum_{j=0}^{t}\chi_j\delta_{r}^{\widetilde{\mathbf{x}}(\tau)}(t-j)&\leq & \sum_{j\geq t/2}\chi_j\delta_{r}^{\widetilde{\mathbf{x}}(\tau)}(t-j)+\sum_{0\leq j\leq t/2}\chi_j\delta_{r}^{\widetilde{\mathbf{x}}(\tau)}(t-j)\\
			&\leq& (t/2)^{-(a+s)}  \sum_{j\geq t/2}\delta_{r}^{\widetilde{\mathbf{x}}(\tau)}(t-j) + (t/2)^{-(a+s)}  \sum_{0\leq j\leq t/2}\chi_j\\
			&=& O(t^{-(a+s)}).
		\end{eqnarray*}
		The proof of the first result of this lemma is now complete.
		
		\noindent (2)--(3). Since
		\begin{eqnarray*}
			|\sup_{u,\bm{\eta}}|m(\tau,\bm{\eta},u)|-\sup_{u,\bm{\eta}}|m^*(\tau,\bm{\eta},u)||&\leq&\sup_{u,\bm{\eta}}|m(\tau,\bm{\eta},u)-m^*(\tau,\bm{\eta},u)| 
			\\
			&\leq&M\sup_{\bm{\vartheta}}|g(\widetilde{\mathbf{y}}_t(\tau),\bm{\vartheta})-g(\widetilde{\mathbf{y}}_t^*(\tau),\bm{\vartheta})|,
		\end{eqnarray*}
		the second result follows directly from the first result.
		
		Since $\mathbf{d}(u, \tau) = O(h^2)$ when $|\tau-u|\leq h$, for each element of $\mathbf{m}_t^{(2)}(u, \tau)$, we have
		\begin{eqnarray*}
			|\sup_{\tau}|m_{t,i}^{(2)}(u,\tau)|-\sup_{\tau}|m_{t,i}^{(2)}(\tau,u)^*|| &\leq&\sup_{\tau}|m_{t,i}^{(2)}(\tau,u) -m_{t,i}^{(2)}(\tau,u)^*|\\
			&\leq&Mh^2\sup_{\bm{\vartheta}}|g(\widetilde{\mathbf{y}}_t(\tau),\bm{\vartheta})-g(\widetilde{\mathbf{y}}_t^*(\tau),\bm{\vartheta})|,
		\end{eqnarray*}
		where $ m_{t,i}^{(2)}(\tau,u)$ is yielded by the coupled version.
		
		The proof is now complete.
	\end{proof}

	\begin{proof}[Proof of Lemma \ref{LemmaA.5}]
		\item
		\noindent (1). Note that
		\begin{eqnarray*}
			&&g(\widetilde{\mathbf{y}}_t(\tau_t),\bm{\eta}_1+\bm{\eta}_2\cdot(\tau_t-\tau)/h)-E(g(\widetilde{\mathbf{y}}_t(\tau_t),\bm{\eta}_1+\bm{\eta}_2\cdot(\tau_t-\tau)/h))\\
			&=&\sum_{l=0}^{\infty}\mathcal{P}_{t-l}g(\widetilde{\mathbf{y}}_t(\tau_t),\bm{\eta}_1+\bm{\eta}_2\cdot(\tau_t-\tau)/h),
		\end{eqnarray*}
		in which $\{\mathcal{P}_{t-l}(g(\widetilde{\mathbf{y}}_t(\tau_t),\bm{\eta}_1+\bm{\eta}_2\cdot(\tau_t-\tau)/h))\}_{t=1}^T$ is a sequence of martingale differences.
		
		If $1<q\leq2$, by the Burkholder's inequality, $|\sum_{i=1}^{d}a_i|^r\leq \sum_{i=1}^{d}|a_i|^r$ for $r \in (0,1]$ and Lemma \ref{LemmaA.4}.1, we have
		\begin{eqnarray*}
			\|\widetilde{G}_\tau(\bm{\eta})\|_q  &\leq&\sum_{l=0}^{\infty}\left\|\sum_{t=1}^{T}\frac{1}{Th}\widehat{K}\left(\frac{\tau_t-\tau}{h}\right)\mathcal{P}_{t-l}g(\widetilde{\mathbf{y}}_t(\tau_t),\bm{\eta}_1+\bm{\eta}_2\cdot(\tau_t-\tau)/h)\right\|_q \\
			&\leq&O(1)\sum_{l=0}^{\infty} \left\{E\left[\sum_{t=1}^{T}\left(\frac{1}{Th}\widehat{K}\left(\frac{\tau_t-\tau}{h}\right)\mathcal{P}_{t-l}g(\widetilde{\mathbf{y}}_t(\tau_t),\bm{\eta}_1+\bm{\eta}_2\cdot(\tau_t-\tau)/h)\right)^2\right]^{q/2} \right\}^{1/q} \\
			&\leq&O(1)\sum_{l=0}^{\infty}\left\{E\left[\sum_{t=1}^{T}\left(\frac{1}{Th}\widehat{K}\left(\frac{\tau_t-\tau}{h}\right)\mathcal{P}_{t-l}g(\widetilde{\mathbf{y}}_t(\tau_t),\bm{\eta}_1+\bm{\eta}_2\cdot(\tau_t-\tau)/h)\right)^q\right] \right\}^{1/q} \\
			&\leq&O(1)(Th)^{-(q-1)/q}\sum_{l=0}^{\infty} \sup_{\tau \in [0,1]}\delta_{q}^{\sup_{\bm{\vartheta}}|g(\widetilde{\mathbf{y}}_t(\tau_t),\bm{\vartheta})|}(l) \left(\frac{1}{Th}\sum_{t=1}^{T} \widehat{K}\left(\frac{\tau_t-\tau}{h}\right)^q\right)^{1/q}\\
			&=&  O((Th)^{-(q-1)/q}).
		\end{eqnarray*}
		
		Similarly, for $q\geq 2$, by the Burkholder's inequality and the Minkowski inequality, we have
		\begin{eqnarray*}
			\|\widetilde{G}_\tau(\bm{\eta})\|_q &=&\left\|\frac{1}{Th}\sum_{t=1}^{T}\widehat{K}\left(\frac{\tau_t-\tau}{h}\right)\sum_{l=0}^{\infty}\mathcal{P}_{t-l}g(\widetilde{\mathbf{y}}_t(\tau_t),\bm{\eta}_1+\bm{\eta}_2\cdot(\tau_t-\tau)/h)\right\|_q \\
			&\leq&\sum_{l=0}^{\infty}\left\|\sum_{t=1}^{T}\frac{1}{Th}\widehat{K}\left(\frac{\tau_t-\tau}{h}\right)\mathcal{P}_{t-l}g(\widetilde{\mathbf{y}}_t(\tau_t),\bm{\eta}_1+\bm{\eta}_2\cdot(\tau_t-\tau)/h)\right\|_q \\
			&\leq&O(1)\sum_{l=0}^{\infty} \left\{E\left[\sum_{t=1}^{T}\left(\frac{1}{Th}\widehat{K}\left(\frac{\tau_t-\tau}{h}\right)\mathcal{P}_{t-l}g(\widetilde{\mathbf{y}}_t(\tau_t),\bm{\eta}_1+\bm{\eta}_2\cdot(\tau_t-\tau)/h)\right)^2\right]^{q/2} \right\}^{1/q} \\
			&\leq&O(1)\sum_{l=0}^{\infty}\left\{\sum_{t=1}^{T}\left[E\left(\frac{1}{Th}\widehat{K}\left(\frac{\tau_t-\tau}{h}\right)\mathcal{P}_{t-l}g(\widetilde{\mathbf{y}}_t(\tau_t),\bm{\eta}_1+\bm{\eta}_2\cdot(\tau_t-\tau)/h)\right)^q\right]^{2/q} \right\}^{1/2} \\
			&=&O(1)(Th)^{-1/2} \sum_{l=0}^{\infty} \sup_{\tau \in [0,1]}\delta_{q}^{\sup_{\bm{\vartheta}}|g(\widetilde{\mathbf{y}}_t(\tau),\bm{\vartheta})|}(l) = O((Th)^{-1/2}).
		\end{eqnarray*}
		The proof of the first result is now complete.
		
		\medskip
		
		\noindent (2). For any fixed $v > 0$, let $\kappa > 0$ and $\mathbf{E}_T^{\kappa}(r)$ be a discretization of $\mathbf{E}_T(r)$ such that for each $\bm{\eta} \in \mathbf{E}_T(r)$ one can find $\bm{\eta}^\prime \in \mathbf{E}_T^{\kappa}(r)$ satisfying $|\bm{\eta}-\bm{\eta}^\prime|\leq \kappa$. Let $\#\mathbf{E}_T^{\kappa}(r)$ denote the numbers of sets in $\mathbf{E}_T^{\kappa}(r)$.  Write
		\begin{eqnarray*}
			\Pr(\sup_{\bm{\eta}\in\mathbf{E}_T(r)}|\widetilde{G}_\tau(\bm{\eta})|>v)&\leq & \#\mathbf{E}_T^{\kappa}(r) \sup_{\bm{\eta}\in\mathbf{E}_T(r)} \Pr(|\widetilde{G}_\tau(\bm{\eta})|>v/2)\\
			&& + \Pr(\sup_{|\bm{\eta}-\bm{\eta}^\prime|\leq\kappa}|\widetilde{G}_\tau(\bm{\eta})-\widetilde{G}_\tau(\bm{\eta}^\prime)|>v/2).
		\end{eqnarray*}
		
		By the Markov inequality, we have
		\begin{eqnarray*}
			\Pr(|\widetilde{G}_\tau(\bm{\eta})|>v/2) \leq \frac{\|\widetilde{G}\tau(\bm{\eta})\|_q^q}{(v/2)^q}.
		\end{eqnarray*}
		
		Note that $\{\mathcal{P}_{t-j}g(\widetilde{\mathbf{y}}_t(\tau_t),\bm{\eta}_1+\bm{\eta}_2\cdot(\tau_t-\tau)/h)\}_t$ forms a sequence of martingale differences. By the Burkholder's inequality and Lemma \ref{LemmaA.4}.1, we have
		\begin{eqnarray*}
			\|\widetilde{G}_\tau(\bm{\eta})\|_q
			&\leq&(Th)^{-1}\sum_{j=0}^\infty \|\sum_{t=1}^{T}\widehat{K}\left(\frac{\tau_t-\tau}{h}\right)\mathcal{P}_{t-j}g(\widetilde{\mathbf{y}}_t(\tau_t),\bm{\eta}_1+\bm{\eta}_2\cdot(\tau_t-\tau)/h) \|_q\\
			&\leq&(q-1)^{-1}(Th)^{-1}\sum_{j=0}^\infty\left( \|\sum_{t=1}^{T}\widehat{K}\left(\frac{\tau_t-\tau}{h}\right)^2\mathcal{P}_{t-j}^2g(\widetilde{\mathbf{y}}_t(\tau_t),\bm{\eta}_1+\bm{\eta}_2\cdot(\tau_t-\tau)/h) \|_{q/2}^{q/2}\right)^{1/q}\\
			&\leq&M(Th)^{-(q-1)/q}\sum_{j=0}^\infty\sup_{\tau \in [0,1]}\delta_{q}^{\sup_{\bm{\vartheta}}|g(\widetilde{\mathbf{y}}_t(\tau),\bm{\vartheta})|}(j) = O((Th)^{-(q-1)/q}),
		\end{eqnarray*}
		which in connection with the fact $\#\mathbf{E}_T^{\kappa}(r) $ is independent of $T$ yields that
		\begin{eqnarray*}
			\#\mathbf{E}_T^{\kappa}(r) \sup_{\bm{\eta}\in\mathbf{E}_T(r)} \Pr(|\widetilde{G}_\tau(\bm{\eta})|>v/2) =o(1).
		\end{eqnarray*}
		
		In addition, by Lemma \ref{LemmaA.3}.1, we have
		\begin{eqnarray*}
			\Pr(\sup_{|\bm{\eta}-\bm{\eta}^\prime|\leq\kappa}|\widetilde{G}_\tau(\bm{\eta})-\widetilde{G}_\tau(\bm{\eta}^\prime)|>v/2)\leq M \kappa \to 0
		\end{eqnarray*}
		by choosing $\kappa$ small enough. Hence, $\Pr(\sup_{\bm{\eta}\in\mathbf{E}_T(r)}|\widetilde{G}_\tau(\bm{\eta})|>v)\to 0$ as $T \to \infty$.
		
		\medskip
		
		\noindent (3). Let $\beta_T := (\log T)^{1/2}(Th)^{-1/2}h^{-1/2}$ for short. Let further $\mathbf{E}_{T,\kappa}(r)$ be a discretization of $\mathbf{E}_{T}(r)$ such that for each $\bm{\eta} \in \mathbf{E}_T(r)$ one can find $\bm{\eta}^\prime \in \mathbf{E}_{T,\kappa}(r)$ satisfying $|\bm{\eta}-\bm{\eta}^\prime|\leq \kappa_T^{-1}$. Define $\mathcal{T}_{T,\kappa} = \{t/\kappa_T:t=1,2,...,\kappa_T\}$ as a discretization of $[0,1]$. For some constant $M > 0$, we have
		\begin{eqnarray*}
			&&\Pr(\sup_{\tau\in[0,1]}\sup_{\bm{\eta}\in\mathbf{E}_{T}(r)}|\widetilde{G}_\tau(\bm{\eta})|>M\beta_T)\\
			&\leq &  \Pr(\sup_{\tau\in\mathcal{T}_{T,\kappa}}\sup_{\bm{\eta}\in\mathbf{E}_{T,\kappa}(r)}|\widetilde{G}_\tau(\bm{\eta})|>\beta_TM/2)\\
			&& + \Pr(\sup_{|\tau-\tau^\prime|\leq\kappa_T^{-1},|\bm{\eta}-\bm{\eta}^\prime|\leq\kappa_T^{-1}}|\widetilde{G}_\tau(\bm{\eta})-\widetilde{G}_{\tau^\prime}(\bm{\eta}^\prime)|>\beta_TM/2).
		\end{eqnarray*}
		Let $m_t(\tau,\bm{\eta},u):=\widehat{K}\left(\frac{\tau-u}{h}\right)g(\widetilde{\mathbf{y}}_t(\tau),\bm{\eta}_1+\bm{\eta}_2\cdot(\tau-u)/h)$, we have $\sup_{\tau \in [0,1]}\sup_{u,\bm{\eta}}\delta_{q}^{m(\tau,\bm{\eta},u)}(j) = O(j^{-(a+s)})$ and $\sup_{\tau \in [0,1]}\delta_{q}^{\sup_{u,\bm{\eta}}|m(\tau,\bm{\eta},u)|}(j) = O(j^{-(a+s)})$ for some $a \geq3/2$ by Lemma \ref{LemmaA.4}.2. Let $\alpha = 1/2$, we have
		$$
		W_{q,\alpha} := \max_{k\geq 0}(k+1)^\alpha \sup_{\tau \in [0,1]}\sum_{j=k}^{\infty}\delta_{q}^{\sup_{u,\bm{\eta}}|m(\tau,\bm{\eta},u)|}(j) \leq M \max_k k^{-(a-3/2+s)} < \infty
		$$ 
		and
		$$
		W_{2,\alpha} := \max_{k\geq 0}(k+1)^\alpha \sup_{\tau \in [0,1]}\sup_{u,\bm{\eta}}\sum_{j=k}^{\infty}\delta_{2}^{m(\tau,\bm{\eta},u)}(j) \leq M \max_k k^{-(a-3/2+s)} < \infty.
		$$
		Note that $ l = \min\{1,\log(\#\mathbf{E}_{T,\kappa}(r) \times \mathcal{T}_{T,\kappa})\} \leq 3(2d+1)\log(T)$ and $M\beta_T Th = M T^{1/2} (\log T)^{1/2} \geq \sqrt{Tl} W_{2,\alpha}+T^{1/q}l^{3/2}W_{q,\alpha}\geq T^{1/2} (\log T)^{1/2}+T^{1/q}(\log T)^{3/2}$ for some $M$ large enough. By using Theorem 6.2 of \cite{zhang2017gaussian} (the proof therein also works for the uniform functional dependence measure) with $q > 2$ and $\alpha = 1/2$ to $\{m_t(\tau,\bm{\eta},\tau_t) \}_{\tau \in \mathcal{T}_{T,\kappa},\bm{\eta}\in \mathbf{E}_{T,\kappa}(r)}$, we have
		\begin{eqnarray*}
			&&\Pr(\sup_{\tau\in\mathcal{T}_{T,\kappa}}\sup_{\bm{\eta}\in\mathbf{E}_{T,\kappa}(r)}|\widetilde{G}_\tau(\bm{\eta})|>\beta_TM/2)\\
			&\leq&\frac{M Tl^{q/2}}{(\beta_T Th)^q} + M\exp\left(-\frac{M(\beta_T Th)^2}{T} \right)\\
			&\leq& M \left(T^{-(q-2)/2} + \exp(-\log T)\right) \to 0.
		\end{eqnarray*}
		In addition, by the Markov inequality and Lemma \ref{LemmaA.3}.1, we have
		\begin{eqnarray*}
			\Pr(\sup_{|\tau-\tau^\prime|\leq\kappa_T^{-1},|\bm{\eta}-\bm{\eta}^\prime|\leq\kappa_T^{-1}}|\widetilde{G}_\tau(\bm{\eta})-\widetilde{G}_{\tau^\prime}(\bm{\eta}^\prime)|>\beta_TM/2) = O(h^{-2}T^{-3}/\beta_T)\to 0.
		\end{eqnarray*}
		The proof is now complete.
	\end{proof}

	\begin{proof}[Proof of Lemma \ref{LemmaA.7}]
		\item
		
		For notational simplicity, we let $ \bm{\eta}(\tau) = [\bm{\theta}(\tau) ,h\bm{\theta}^{(1)}(\tau )]$ in what follows, and define
		\begin{eqnarray*}
			\bm{\Gamma}(\tau)&:=&\gradient \widetilde{\mathcal{L}}_\tau(\bm{\eta} (\tau))-E[\gradient \widetilde{\mathcal{L}}_\tau(\bm{\eta}(\tau))]\\
			&&-\frac{1}{Th}\sum_{t=1}^{T}\widehat{\bm{K}}((\tau_t-\tau)/h)\otimes \gradient_{\bm{\vartheta}}\mathcal{l}(\widetilde{\bm{x}}_t(\tau_t),\widetilde{\bm{z}}_{t-1}(\tau_t);\bm{\theta}(\tau_t)).
		\end{eqnarray*}
		Due to $E(\gradient_{\bm{\vartheta}}\mathcal{l}(\widetilde{\bm{x}}_t(\tau_t),\widetilde{\bm{z}}_{t-1}(\tau_t);\bm{\theta}(\tau_t)))=0$ by Lemma \ref{LemmaA.1}, we have
		\begin{eqnarray*}
			\bm{\Gamma}(\tau)
			&=&\frac{1}{Th}\sum_{t=1}^{T}\widehat{\bm{K}}((\tau_t-\tau)/h)\otimes [\gradient_{\bm{\vartheta}}\mathcal{l}(\widetilde{\bm{x}}_t(\tau_t),\widetilde{\bm{z}}_{t-1}(\tau_t);\bm{\theta}(\tau)+\bm{\theta}^{(1)}(\tau)(\tau_t-\tau))\\
			&&-\gradient_{\bm{\vartheta}}\mathcal{l}(\widetilde{\bm{x}}_t(\tau_t),\widetilde{\bm{z}}_{t-1}(\tau_t);\bm{\theta}(\tau_t))]\\
			&&-\frac{1}{Th}\sum_{t=1}^{T}\widehat{\bm{K}}((\tau_t-\tau)/h)\otimes E[\gradient_{\bm{\vartheta}}\mathcal{l}(\widetilde{\bm{x}}_t(\tau_t),\widetilde{\bm{z}}_{t-1}(\tau_t);\bm{\theta}(\tau)+\bm{\theta}^{(1)}(\tau)(\tau_t-\tau))\\
			&& -\gradient_{\bm{\vartheta}}\mathcal{l}(\widetilde{\bm{x}}_t(\tau_t),\widetilde{\bm{z}}_{t-1}(\tau_t);\bm{\theta}(\tau_t)].
		\end{eqnarray*}
		
		By the Mean Value Theorem, we have
		\begin{eqnarray*}
			\bm{\Gamma}(\tau) = (Th)^{-1}\sum_{t=1}^{T}[\mathbf{M}_t^{(2)}(\tau,\tau_t) - E(\mathbf{M}_t^{(2)}(\tau,\tau_t))],
		\end{eqnarray*}
		where
		\begin{eqnarray*}
			&&\mathbf{M}_t^{(2)}(\tau,u):=\widehat{\bm{K}}((\tau_t-\tau)/h)\otimes\gradient_{\bm{\vartheta}}^2\mathcal{l}(\widetilde{\bm{x}}_t(u),\widetilde{\bm{z}}_{t-1}(u);\bm{\theta}(\tau)-v\mathbf{r}(u))\mathbf{r}(u) \text{ for some $v\in[0,1]$}, \\
			&& \mathbf{r}(u) = \frac{1}{2}\bm{\theta}^{(2)}(\tau)(u-\tau)^2+\frac{1}{6}\bm{\theta}^{(3)}(\overline{\tau})(u-\tau)^3 \text{ with $\overline{\tau}$ between $u$ and $\tau$}.
		\end{eqnarray*}
		
		We then use a similar argument as in the proof of Lemma \ref{LemmaA.5} to prove
		\begin{eqnarray*}
			\Pr(\sup_{\tau\in[0,1]} |\bm{\Gamma}(\tau)| > M \beta_T h^2)\to 0.
		\end{eqnarray*}
		Define $\kappa_T = T^5$ and $\mathcal{T}_{T,\kappa} = \{t/\kappa_T:t=1,2,...,\kappa_T\}$ as a discretization of $[0,1]$. For some constant $M > 0$, we have
		\begin{eqnarray*}
			\Pr(\sup_{\tau\in[0,1]}|\bm{\Gamma}(\tau)|>M\beta_Th^2)
			&\leq&  \Pr(\sup_{\tau\in\mathcal{T}_{T,\kappa}}|\bm{\Gamma}(\tau)|>\beta_Th^2M/2) \\
			&&+ \Pr(\sup_{|\tau-\tau^\prime|\leq\kappa_T^{-1}}|\bm{\Gamma}(\tau)-\bm{\Gamma}(\tau^\prime)|>\beta_Th^2M/2).
		\end{eqnarray*}
		By Lemma \ref{LemmaA.3}.3 and the Markov inequality, we have
		\begin{eqnarray*}
			\Pr(\sup_{|\tau-\tau^\prime|\leq\kappa_T^{-1}}|\bm{\Gamma}(\tau)-\bm{\Gamma}(\tau^\prime)|>\beta_Th^2M/2) =  O\left( \frac{h^{-2}\kappa_T^{-1}}{\beta_Th^2M/2} \right)\to 0.
		\end{eqnarray*}
		
		By Lemma \ref{LemmaA.4}.3, we have 
		\begin{eqnarray*}
			\sup_{u,\tau \in [0,1]}\delta_{q}^{|\mathbf{M}^{(2)}(\tau,u)|}(j) = O(h^2j^{-(a+s)})\quad \text{and}\quad \sup_{\tau \in [0,1]}\delta_{q}^{\sup_{u}|\mathbf{M}^{(2)}(\tau,u)|}(j) = O(h^2j^{-(a+s)})
		\end{eqnarray*}
		for some $a \geq3/2$ . Let $\alpha = 1/2$, we have
		\begin{eqnarray*}
			\widetilde{W}_{q,\alpha} := \max_{k\geq 0}(k+1)^\alpha \sup_{\tau \in [0,1]}\sum_{j=k}^{\infty}\delta_{q}^{\sup_{u}|\mathbf{M}^{(2)}(\tau,u)|}(j) = O(h^2)
		\end{eqnarray*}
		and
		\begin{eqnarray*}
			\widetilde{W}_{2,\alpha} := \max_{k\geq 0}(k+1)^\alpha \sup_{\tau,u \in [0,1]}\sum_{j=k}^{\infty}\delta_{2}^{|\mathbf{M}^{(2)}(\tau,u)|}(j)=O(h^2).
		\end{eqnarray*}
		Using Theorem 6.2 of \cite{zhang2017gaussian} with $q > 2$, $\alpha = 1/2$ and $ l = \min\{1,\log(\# \mathcal{T}_{T,\kappa})\} \leq 5\log(T)$ to $\{\mathbf{M}_t^{(2)}(\tau,\tau_t) \}_{\tau \in \mathcal{T}_{T,\kappa}}$, we have
		\begin{eqnarray*}
			&&\Pr(\sup_{\tau\in\mathcal{T}_{T,\kappa}}|\bm{\Gamma}(\tau)|>h^2\beta_TM/2)\\
			&\leq&\frac{M Tl^{q/2}\widetilde{W}_{q,\alpha}^q}{(\beta_Th^2 Th)^q} + M\exp\left(-\frac{M(\beta_Th^2 Th)^2}{T\widetilde{W}_{2,\alpha}^2} \right)\\
			&\leq& M \left(T^{-(q-2)/2} + \exp(-\log T)\right) \to 0.
		\end{eqnarray*}
		The proof is now complete.
	\end{proof}

	\begin{proof}[Proof of Lemma \ref{LemmaA.8}]
		\item
		\noindent (1). Let $\widehat{\bm{\eta}}(\tau):= [\widehat{\bm{\theta}}(\tau)^\top, \widehat{\bm{\theta}}^\star (\tau)^\top]^\top$ and $\bm{\eta} (\tau):= [\bm{\theta}(\tau)^\top,h\bm{\theta}^{(1)}(\tau)^\top]^\top$. By Lemma \ref{LemmaA.5} and the proof of Theorem 2.1, we have  
		\begin{eqnarray*}
			\sup_{\tau \in [0,1]}|\widehat{\bm{\eta}}(\tau) - \bm{\eta}(\tau)| = o_P(1).
		\end{eqnarray*}
		By the Taylor expansion, we have
		\begin{eqnarray*}
			\widehat{\bm{\eta}}(\tau) - \bm{\eta}(\tau) = -(\widetilde{\bm{\Sigma}}(\tau)+\mathbf{R}_T(\tau))^{-1}\gradient  \mathcal{L}_\tau(\bm{\eta}(\tau)),
		\end{eqnarray*}
		where $\mathbf{R}_T(\tau) := \gradient^2 \mathcal{L}_\tau(\overline{\bm{\eta}})-\widetilde{\bm{\Sigma}}(\tau)$ and $\widetilde{\bm{\Sigma}}(\tau):=\left[\begin{matrix}
			1 & 0\\
			0 & \widetilde{c}_2
		\end{matrix} \right]\otimes\bm{\Sigma}(\tau)$
		with $\overline{\bm{\eta}}$ between $\widehat{\bm{\eta}}(\tau)$ and $\bm{\eta}(\tau)$. By Lemma \ref{LemmaA.3} and Lemma \ref{LemmaA.5}, we have
		\begin{eqnarray*}
			&&\sup_{\tau \in [0,1],\bm{\eta}\in\mathbf{E}_T(r)}|\gradient^2 \mathcal{L}_\tau(\bm{\eta}) - \widetilde{\bm{\Sigma}}(\tau,\bm{\eta})|\\
			&=& \sup_{\tau \in [0,1],\bm{\eta}\in\mathbf{E}_T(r)}|\gradient^2 \widetilde{\mathcal{L}}_\tau(\bm{\eta}) - \widetilde{\bm{\Sigma}}(\tau,\bm{\eta})| + O_P((Th)^{-1})\\
			&=& \sup_{\tau \in [0,1],\bm{\eta}\in\mathbf{E}_T(r)}|E[\gradient^2 \widetilde{\mathcal{L}}_\tau(\bm{\eta})] - \widetilde{\bm{\Sigma}}(\tau,\bm{\eta})| + O_P((Th)^{-1} + \beta_T)\\
			&=&O_P(\beta_T + (Th)^{-1})+O(h),
		\end{eqnarray*}
		where $\widetilde{\bm{\Sigma}}(\tau,\bm{\eta}):= \int_{-\tau/h}^{(1-\tau)/h} K(u)\left[\begin{matrix}
			1 & u\\
			u & u^2
		\end{matrix} \right]\otimes\bm{\Sigma}(\tau,\bm{\eta}_1+\bm{\eta}_2u)\mathrm{d}u$ and 
		$$
		\bm{\Sigma}(\tau,\bm{\eta}_1+\bm{\eta}_2u):=E\left(\gradient_{\bm{\vartheta}}^2\mathcal{l}(\widetilde{\mathbf{x}}_t(\tau),\widetilde{\mathbf{z}}_{t-1}(\tau);\bm{\eta}_1+\bm{\eta}_2u) \right).
		$$
		
		By Lemma \ref{LemmaA.4}.3 and the condition 
		\begin{eqnarray*}
			\sup_{\tau \in [0,1]} [\alpha_j(\bm{\theta}(\tau)) + \beta_j(\bm{\theta}(\tau)) ]= O(j^{-(3+s)})
		\end{eqnarray*}
		for some $s > 0$, we have $\sup_{\tau \in [0,1]}\delta_{q}^{\gradient_{\vartheta_j}\mathcal{l}}(j) = O(j^{-(2+s)})$ for some $s > 0$. By Lemma \ref{LemmaB.3}, we have
		$$
		\sup_{\tau\in[0,1]}\left|T^{-1}\sum_{t=1}^{T}K_h(\tau_t-\tau)\gradient_{\bm{\vartheta}}\mathcal{l}(\widetilde{\mathbf{x}}_t(\tau_t),\widetilde{\mathbf{z}}_{t-1}(\tau_t);\bm{\theta}(\tau_t)) \right| = O_P((Th)^{-1/2}\log T).
		$$
		Since $E\left(\gradient_{\bm{\vartheta}}\mathcal{l}(\widetilde{\mathbf{x}}_t(\tau),\widetilde{\mathbf{z}}_{t-1}(\tau);\bm{\theta}(\tau))\right) = 0$, we further obtain that
		\begin{eqnarray*}
			&&\sup_{\tau\in[0,1]}|\gradient \widetilde{\mathcal{L}}_\tau(\bm{\eta}(\tau))-E[\gradient \widetilde{\mathcal{L}}_\tau(\bm{\eta}(\tau))]|\\
			&\leq&\sup_{\tau\in[0,1]}|\gradient \widetilde{\mathcal{L}}_\tau( \bm{\eta}(\tau))-E[\gradient \widetilde{\mathcal{L}}_\tau(\bm{\eta}(\tau))]\\
			&&-\frac{1}{Th}\sum_{t=1}^{T}\widehat{\bm{K}}((\tau_t-\tau)/h)\otimes \gradient_{\bm{\vartheta}}\mathcal{l}(\widetilde{\bm{x}}_t(\tau_t),\widetilde{\bm{z}}_{t-1}(\tau_t);\bm{\theta}(\tau_t))|\\
			&&+\sup_{\tau\in[0,1]}\left|T^{-1}\sum_{t=1}^{T}K_h(\tau_t-\tau)\gradient_{\bm{\vartheta}}\mathcal{l}(\widetilde{\mathbf{x}}_t(\tau_t),\widetilde{\mathbf{z}}_{t-1}(\tau_t);\bm{\theta}(\tau_t)) \right|\\
			&=&O_P(h^2\beta_T + (Th)^{-1/2}\log T)
		\end{eqnarray*}
		using Lemma \ref{LemmaA.7}.
		
		Hence, by Lemma \ref{LemmaA.3}.4, we have
		\begin{eqnarray*}
			&&\sup_{\tau \in [0,1]}|\gradient  \mathcal{L}_\tau(\bm{\eta}(\tau))|\\
			&\leq& \sup_{\tau \in [0,1]}|\gradient \mathcal{L}_\tau(\bm{\eta}(\tau))-\gradient \widetilde{\mathcal{L}}_\tau(\bm{\eta}(\tau))|  + \sup_{\tau \in [0,1]}|\gradient  \widetilde{\mathcal{L}}_\tau(\bm{\eta}(\tau))-E(\gradient \widetilde{\mathcal{L}}_\tau(\bm{\eta}(\tau)))|\\
			&& + \sup_{\tau \in [0,1]}|E(\gradient  \widetilde{\mathcal{L}}_\tau(\bm{\eta}(\tau)))|\\
			&=& \sup_{\tau \in [0,1]}|E(\gradient \widetilde{\mathcal{L}}_\tau(\bm{\eta}(\tau)))| + O_P(h^2\beta_T + (Th)^{-1/2}\log T+(Th)^{-1}).
		\end{eqnarray*}
		Since
		\begin{eqnarray*}
			&&E\left[\gradient \widetilde{\mathcal{L}}_\tau(\bm{\eta} (\tau))-\frac{1}{Th}\sum_{t=1}^{T}\widehat{\bm{K}}((\tau_t-\tau)/h)\otimes \gradient_{\bm{\vartheta}}\mathcal{l}(\widetilde{\bm{x}}_t(\tau_t),\widetilde{\bm{z}}_{t-1}(\tau_t);\bm{\theta}(\tau_t))\right]\\
			&=&-\frac{1}{2}h^2\frac{1}{Th}\sum_{t=1}^{T}\widehat{\bm{K}}((\tau_t-\tau)/h)\otimes\left[E[\gradient_{\bm{\vartheta}}^2\mathcal{l}(\widetilde{\bm{x}}_t(\tau),\widetilde{\bm{z}}_{t-1}(\tau);\bm{\theta}(\tau))]\bm{\theta}^{(2)}(\tau)\left(\frac{\tau_t-\tau}{h}\right)^2\right]+O(h^3)\\
			&=&\frac{1}{2}h^2\int_{-\tau/h}^{(1-\tau)/h}K(u)[u^2,u^3]^\top\mathrm{d}u\otimes \left(-\bm{\Sigma}(\tau) \bm{\theta}^{(2)}(\tau)\right)+O((Th)^{-1}+h^3),
		\end{eqnarray*}
		we have
		$$
		\sup_{\tau \in [0,1]}|\gradient_{\bm{\eta}_j} \mathcal{L}_\tau(\bm{\eta}(\tau)) | = O_P(h^2\beta_T + (Th)^{-1/2}\log T+(Th)^{-1} + h^{1+j}).
		$$
		for $j =1,2$. Hence, we have $\sup_{\tau \in [0,1]}|\widehat{\bm{\eta}}_j(\tau) - \bm{\eta}_j(\tau)| = O_P(h^2\beta_T + (Th)^{-1/2}\log T+(Th)^{-1} + h^{1+j})$ and $\sup_{\tau \in [0,1]}|\mathbf{R}_T(\tau)|=O_P(\beta_T + h + (Th)^{-1})$, where $\widehat{\bm{\eta}}_j(\tau) $ and $\bm{\eta}_j(\tau)$  are corresponding to the $j^{th}$ part in their definitions given in the beginning of this proof.
		
		Write
		\begin{eqnarray*}
			&&\left|-\widetilde{\bm{\Sigma}}(\tau)(\widehat{\bm{\eta}} (\tau)-\bm{\eta} (\tau))-\gradient \mathcal{L}_\tau(\bm{\eta}(\tau)) \right|\\
			&\leq& | [\mathbf{I}_{2d} +\widetilde{\bm{\Sigma}}^{-1}(\tau)\mathbf{R}_T(\tau) ]^{-1} - \mathbf{I}_{2d}^{-1}|\cdot |\gradient  \mathcal{L}_\tau(\bm{\eta}(\tau))|\\
			&\leq& | [\mathbf{I}_{2d} +\widetilde{\bm{\Sigma}}^{-1}(\tau)\mathbf{R}_T(\tau) ]^{-1}|\cdot |\widetilde{\bm{\Sigma}}^{-1}(\tau)\mathbf{R}_T(\tau)|\cdot |\gradient  \mathcal{L}_\tau(\bm{\eta}(\tau))|\\
			&=&O_P(\gamma_T).
		\end{eqnarray*}
		The proof of the first result is now complete.
		
		\noindent (2). By Lemma \ref{LemmaA.3}  and Lemma \ref{LemmaA.7}, we have
		\begin{eqnarray*}	&&\sup_{\tau\in[h,1-h]}\Big|\gradient_{\bm{\vartheta}}\mathcal{L}_\tau(\bm{\theta}(\tau),h\bm{\theta}^{(1)}(\tau)) + \frac{1}{2}h^2\widetilde{c}_2\bm{\Sigma}(\tau)\bm{\theta}^{(2)}(\tau)\\
			&&\quad -\frac{1}{T}\sum_{t=1}^{T}\gradient_{\bm{\vartheta}}\mathcal{l}(\widetilde{\mathbf{x}}_t(\tau_t),\widetilde{\mathbf{z}}_{t-1}(\tau_t);\bm{\theta}(\tau_t))K_h(\tau_t-\tau) \Big| \\
			&\leq& \sup_{\tau \in [h,1-h]}|\gradient_{\bm{\eta}_1} \mathcal{L}_\tau(\bm{\eta}(\tau))-\gradient_{\bm{\eta}_1} \widetilde{\mathcal{L}}_\tau(\bm{\eta}(\tau))| \\
			&& + \sup_{\tau \in [h,1-h]}|\gradient_{\bm{\eta}_1} \widetilde{\mathcal{L}}_\tau(\bm{\eta}(\tau))-E(\gradient_{\bm{\eta}_1} \widetilde{\mathcal{L}}_\tau(\bm{\eta}(\tau)))-\frac{1}{T}\sum_{t=1}^{T}\gradient_{\bm{\vartheta}}\mathcal{l}(\widetilde{\mathbf{x}}_t(\tau_t),\widetilde{\mathbf{z}}_{t-1}(\tau_t);\bm{\theta}(\tau_t))K_h(\tau_t-\tau)|\\
			&& + \sup_{\tau \in [h,1-h]}|E(\gradient_{\bm{\eta}_1} \widetilde{\mathcal{L}}_\tau(\bm{\eta}(\tau)))+\frac{1}{2}h^2\widetilde{c}_2\bm{\Sigma}(\tau)\bm{\theta}^{(2)}(\tau)|\\
			&=& \sup_{\tau \in [h,1-h]}|E(\gradient_{\bm{\eta}_1} \widetilde{\mathcal{L}}_\tau(\bm{\eta}(\tau)))+\frac{1}{2}h^2\widetilde{c}_2\bm{\Sigma}(\tau)\bm{\theta}^{(2)}(\tau)| + O_P((Th)^{-1} + \beta_Th^2).
		\end{eqnarray*}
		In addition, by the proof of the first result of this lemma, we have
		$$
		\sup_{\tau \in [h,1-h]}|E(\gradient_{\bm{\eta}_1} \widetilde{\mathcal{L}}_\tau(\bm{\eta}(\tau)))+\frac{1}{2}h^2\widetilde{c}_2\bm{\Sigma}(\tau)\bm{\theta}^{(2)}(\tau)| = O(h^3+(Th)^{-1}).
		$$
		The proof is now complete.
	\end{proof}
	
	\subsection{Computation of the Local Linear ML Estimates}\label{AppB.4}
	In our numerical studies, we use the function $\textit{fminunc}$ in programming language MATLAB to minimize the negative of log-likelihood function. The initial guess is important when using optimization functions because these optimizers are trying to find a local minimum, i.e. the one closest to the initial guess that can be achieved using derivatives. In this section, we give a possible choice of initial estimates.
	
	We could estimate the coefficients of time-varying VARMA($p,q$) model
	$$
	\mathbf{x}_t = \sum_{j=1}^{p}\mathbf{A}_j(\tau_t)\mathbf{x}_{t-j} + \bm{\eta}_t + \sum_{j=1}^{q}\mathbf{B}_j(\tau_t)\bm{\eta}_{t-j} \quad \text{with}\quad \bm{\eta}_t = \bm{\omega}(\tau_t)\bm{\varepsilon}_t, 
	$$
	by kernel-weighted least squares method if the lagged $\bm{\eta}_t$ were given. To obtain a preliminary estimator, we first fit a long VAR model and then use estimated residuals in place of true residuals. Consider the VAR$(p_T)$ model
	$$
	\mathbf{x}_t = \sum_{j=1}^{p_T}\bm{\Gamma}_j(\tau_t)\mathbf{x}_{t-j} + \bm{\eta}_t, 
	$$
	where $p_T$ is set to be $2(Th)^{1/3}$ in our numerical studies. Then, we compute $\widehat{\bm{\eta}}_t = \mathbf{x}_t - \sum_{j=1}^{p_T}\widehat{\bm{\Gamma}}_j(\tau_t)\mathbf{x}_{t-j}$, where $\{\widehat{\bm{\Gamma}}_j(\tau)\}$ are the local linear least squares estimators. Given $\widehat{\bm{\eta}}_t$, we are able to estimate $\{\mathbf{A}_j(\tau)\}$, $\{\mathbf{B}_j(\tau)\}$ and $\bm{\Omega}(\tau)$ as well as their derivatives by local linear least squares method.
	
	In order to achieve identifications, certain restrictions should be imposed on the coefficients of the VARMA model. Suppose there exists a known matrix $\mathbf{R}$ and a vector $\bm{\gamma}(\tau)$ satisfying
	$$
	\mathrm{vec}(\mathbf{A}_1(\tau),...,\mathbf{A}_p(\tau),\mathbf{B}_1(\tau),...,\mathbf{B}_q(\tau)) = \mathbf{R}\bm{\gamma}(\tau),
	$$
	which follows that
	$$
	\mathbf{x}_t \simeq (\mathbf{z}_{t-1}^\top\otimes \mathbf{I}_m) \mathbf{R}[\bm{\gamma}(\tau)+\bm{\gamma}^{(1)}(\tau)(\tau_t-\tau)] + \bm{\eta}_{t},
	$$
	where $\mathbf{z}_{t} = [\mathbf{x}_t^\top,...,\mathbf{x}_{t-p+1}^\top,\widehat{\bm{\eta}}_t^\top,...,\widehat{\bm{\eta}}_{t-q}^\top]^\top$. Then the local linear estimator of $(\bm{\gamma}(\tau),\bm{\gamma}^{(1)}(\tau))$ is given by
	$$
	\left(\begin{matrix}
		\widehat{\bm{\gamma}}(\tau)\\
		h\widehat{\bm{\gamma}}^{(1)}(\tau)
	\end{matrix}\right) = \left(\sum_{t=1}^{T}\mathbf{R}^\top \mathbf{Z}_{t-1}^* \mathbf{Z}_{t-1}^{*\top}\mathbf{R}K_h(\tau_t-\tau) \right)^{-1}\sum_{t=1}^{T}\mathbf{R}^\top\mathbf{Z}_{t-1}^*\mathbf{x}_tK_h(\tau_t-\tau),
	$$
	where $\mathbf{Z}_{t}^* = \mathbf{z}_{t} \otimes \mathbf{I}_m \otimes [1, \frac{\tau_{t+1}-\tau}{h}]^\top$. Similarly, the local linear estimator of $(\mathrm{vech}(\bm{\Omega}(\tau)),\mathrm{vech}(\bm{\Omega}^{(1)}(\tau)))$ is given by
	$$
	\left(\begin{matrix}
		\mathrm{vech}(\widehat{\bm{\Omega}}(\tau))\\
		h\mathrm{vech}(\widehat{\bm{\Omega}}^{(1)}(\tau))
	\end{matrix}\right) = \left(\sum_{t=1}^{T} \mathbf{Z}_{t} \mathbf{Z}_{t}^{\top}K_h(\tau_t-\tau) \right)^{-1}\sum_{t=1}^{T}\mathbf{Z}_{t}\mathrm{vech}(\widehat{\bm{\eta}}_t\widehat{\bm{\eta}}_t^\top)K_h(\tau_t-\tau),
	$$
	where $\mathbf{Z}_{t} = [1, \frac{\tau_{t}-\tau}{h}]^\top \otimes \mathbf{I}_{m(m+1)/2}$.
	
	We next consider the preliminary Estimation of Multivariate GARCH Models. Define $\mathbf{y}_t = \mathbf{x}_t \odot \mathbf{x}_t$ and $\mathbf{v}_t = \mathbf{y}_t - \mathbf{h}_t$. We can rewrite model (1.4) as
	$$
	\mathbf{y}_t = \mathbf{c}_0(\tau_t) + \sum_{j=1}^{\max(p,q)}(\mathbf{C}_j(\tau_t) + \mathbf{D}_j(\tau_t))\mathbf{y}_{t-j} + \mathbf{v}_t + \sum_{j=1}^{q}(-\mathbf{D}_j(\tau_t))\mathbf{v}_{t-j}
	$$
	with $E(\mathbf{v}_t | \mathcal{F}_{t-1} ) = 0$. Similar to the VARMA model, we are able to estimate $\mathbf{c}_0(\tau)$, $\{\mathbf{C}_j(\tau)\}$ and  $\{\mathbf{D}_j(\tau)\}$ as well as their derivatives by local linear least squares method. Consider the VAR$(p_T)$ model
	$$
	\mathbf{y}_t = \sum_{j=1}^{p_T}\bm{\Phi}_j(\tau_t)\mathbf{y}_{t-j} + \mathbf{v}_t, 
	$$
	where $p_T$ is set to be $2(Th)^{1/3}$ in our numerical studies. Then, we compute $\widehat{\mathbf{v}}_t = \mathbf{y}_t - \sum_{j=1}^{p_T}\widehat{\bm{\Phi}}_j(\tau_t)\mathbf{p}_{t-j}$, $\widehat{\mathbf{h}}_t = \mathbf{y}_t - \widehat{\mathbf{v}}_t$ and $\widehat{\bm{\eta}}_t = \mathrm{diag}^{-1/2}(\widehat{\mathbf{h}}_t)\mathbf{x}_t$. Hence, the local linear estimator of $(\mathrm{vechl}(\bm{\Omega}(\tau)),\mathrm{vechl}(\bm{\Omega}^{(1)}(\tau)))$ is given by
	$$
	\left(\begin{matrix}
		\mathrm{vechl}(\widehat{\bm{\Omega}}(\tau))\\
		h\mathrm{vechl}(\widehat{\bm{\Omega}}^{(1)}(\tau))
	\end{matrix}\right) = \left(\sum_{t=1}^{T} \mathbf{Z}_{t} \mathbf{Z}_{t}^{\top}K_h(\tau_t-\tau) \right)^{-1}\sum_{t=1}^{T}\mathbf{Z}_{t}\mathrm{vechl}(\widehat{\bm{\eta}}_t\widehat{\bm{\eta}}_t^\top)K_h(\tau_t-\tau),
	$$
	where $\mathbf{Z}_{t} = [1, \frac{\tau_{t}-\tau}{h}]^\top \otimes \mathbf{I}_{m(m-1)/2}$ and $\mathrm{vechl}(\cdot)$ stacks the lower triangular part of a square matrix excluding the diagonal.
	
	}
\end{document}